\let\epsilon\varepsilon
\newcommand*{\thmdep}[2]{}
\newcommand*{\thmdepcref}[2]{\cref{#1}}
\newcommand*{\Th}{^{\textrm{th}}}
\newcommand*{\WLoG}{Without loss of generality}
\newcommand*{\wLoG}{without loss of generality}
\let\eps\epsilon
\newcommand*{\itild}{\tilde{\imath}}
\newcommand*{\ihat}{\hat{\imath}}
\newcommand*{\Ihat}{\widehat{I}}
\newcommand*{\Ical}{\mathcal{I}}
\newcommand*{\Jhat}{\widehat{J}}
\newcommand*{\lavec}[1]{\mathbf{#1}}  %
\newcommand*{\defeq}{:=}
\newcommand*{\floor}[1]{\left\lfloor #1 \right\rfloor}
\newcommand*{\ceil}[1]{\left\lceil #1 \right\rceil}
\newcommand*{\smallceil}[1]{\lceil #1 \rceil}
\DeclareMathOperator*{\E}{E}
\DeclareMathOperator*{\argmin}{argmin}
\DeclareMathOperator*{\argmax}{argmax}
\DeclareMathOperator{\opt}{opt}
\DeclareMathOperator{\Sum}{sum}
\DeclareMathOperator{\vol}{vol}
\DeclareMathOperator{\size}{size}
\DeclareMathOperator{\LP}{LP}
\newcommand*{\config}{configuration}
\newcommand*{\asymAppx}{asymptotic-approximate}
\newcommand*{\appx}{approximate}
\newcommand*{\dff}{weighting function}
\newcommand*{\DFF}{Weighting Function}
\newcommand*{\Ccal}{\mathcal{C}}
\newcommand*{\Chat}{\widehat{C}}
\newcommand*{\safed}{\texorpdfstring{$d$}{d}}
\newcommand*{\Htild}{\widetilde{H}}
\newcommand*{\Khat}{\widehat{K}}
\newcommand*{\Pbar}{\overline{P}}
\newcommand*{\xhat}{\widehat{x}}
\newcommand*{\yhat}{\widehat{y}}
\newcommand*{\Icalhat}{\widehat{\mathcal{I}}}
\newcommand*{\Null}{\texttt{null}}
\newcommand*{\lo}{^{\mathrm{(lo)}}}
\newcommand*{\hi}{^{\mathrm{(hi)}}}
\newcommand*{\best}{\mathrm{best}}
\newcommand*{\ceildeltsq}{\smallceil{1/\delta^2}}
\newcommand*{\optdsp}{\operatorname{opt}_{d\mathrm{SP}}}
\newcommand*{\optdbp}{\operatorname{opt}_{d\mathrm{BP}}}
\newcommand*{\optdmcbp}{\operatorname{opt}_{d\mathrm{MCBP}}}
\newcommand*{\optdmcsp}{\operatorname{opt}_{d\mathrm{MCSP}}}
\newcommand*{\optdmcks}[1][d]{\operatorname{opt}_{{#1}\mathrm{MCKS}}}
\newcommand*{\assortSet}{\Psi}
\DeclareMathOperator{\type}{type}
\DeclareMathOperator{\btype}{btype}
\DeclareMathOperator{\last}{last}
\DeclareMathOperator{\sopt}{sopt}
\DeclareMathOperator{\DLP}{DLP}
\newcommand*{\wfk}[1][k]{\widetilde{f}_{#1}}
\newcommand*{\wHk}[1][k]{\widetilde{H}_{#1}}
\DeclareMathOperator{\flatten}{flat}
\DeclareMathOperator{\proj}{proj}
\DeclareMathOperator{\level}{level}
\DeclareMathOperator{\canShelv}{can-shelv}
\DeclareMathOperator{\canShelvHyp}{\hyperref[defn:hgap:can-shelv]{\canShelv}}
\DeclareMathOperator{\reduce}{reduce}
\DeclareMathOperator{\smallArea}{smallArea}
\newcommand*{\hdhk}[1][k]{\operatorname{\mathtt{HDH}}_{#1}}
\newcommand*{\hdhksp}[1][k]{\operatorname{\mathtt{HDH-SP}}_{#1}}
\newcommand*{\hdhkspHyp}[1][k]{\operatorname{\hyperref[algo:hdhksp]{\mathtt{HDH-SP}}}_{#1}}
\newcommand*{\hdhkunit}{\operatorname{\mathtt{HDH-unit-pack}}_k}
\newcommand*{\hdhkunitHyp}{\operatorname{\hyperref[sec:hdhk-prelims:hdhkunit]{\mathtt{HDH-unit-pack}}}_k}
\newcommand*{\hdhknf}[1][k]{\operatorname{\mathtt{HDH-NF}}_{#1}}
\newcommand*{\fhk}{\operatorname{\mathtt{fullh}}_k}
\newcommand*{\hdhks}{\operatorname{\mathtt{HDH-KS}}}
\DeclareMathOperator{\round}{\mathtt{round}}
\newcommand*{\hgapk}{\operatorname{\mathtt{HGaP}}_k}
\DeclareMathOperator{\chooseAndPack}{\mathtt{choose-and-pack}}
\DeclareMathOperator{\simpleChooseAndPack}{\mathtt{simple-choose-and-pack}}
\DeclareMathOperator{\inflate}{\mathtt{inflate}}
\DeclareMathOperator{\guessShelves}{\mathtt{guess-shelves}}
\g@addto@macro{\UrlBreaks}{%
\do\/%
\do\a\do\b\do\c\do\d\do\e\do\f\do\g\do\h\do\i\do\j\do\k\do\l\do\m%
\do\n\do\o\do\p\do\q\do\r\do\s\do\t\do\u\do\v\do\w\do\x\do\y\do\z%
\do\A\do\B\do\C\do\D\do\E\do\F\do\G\do\H\do\I\do\J\do\K\do\L\do\M%
\do\N\do\O\do\P\do\Q\do\R\do\S\do\T\do\U\do\V\do\W\do\X\do\Y\do\Z%
\do\0\do\1\do\2\do\3\do\4\do\5\do\6\do\7\do\8\do\9%
}
\renewcommand*{\defeq}{\coloneqq}
\newcolumntype{L}{>{$\displaystyle}l<{$}}
\algnewcommand{\LineComment}[1]{\State \textcolor{gray}{\texttt{//} \textit{#1}}}
\newcommand*{\acknowledgements}[1]{\paragraph{Acknowledgements.} #1}
\newtheorem{theorem}{Theorem}
\newtheorem{definition}{Definition}
\newtheorem{property}[definition]{Property}
\newtheorem{corollary}{Corollary}[theorem]
\newtheorem{lemma}[theorem]{Lemma}
\newtheorem{observation}[theorem]{Observation}
\newtheorem{transformation}[definition]{Transformation}
\crefname{claim}{Claim}{Claims}
\crefname{property}{Property}{Properties}
\crefname{observation}{Observation}{Observations}
\crefname{transformation}{Transformation}{Transformations}
\title{Harmonic Algorithms for Packing\\$d$-dimensional Cuboids Into Bins}
\author{Eklavya Sharma\\
Department of Computer Science and Automation\\
Indian Institute of Science, Bengaluru.\\
\texttt{eklavyas@iisc.ac.in}}
\date{\empty}
\begin{document}

\maketitle

\begin{abstract}
We explore approximation algorithms for the $d$-dimensional
geometric bin packing problem ($d$BP).
Caprara \texorpdfstring{\cite{caprara2008}}{(MOR 2008)}
gave a \emph{harmonic-based} algorithm for $d$BP
having an asymptotic approximation ratio (AAR) of
\texorpdfstring{$T_{\infty}^{d-1}$ (where $T_{\infty} \approx 1.691$)}{1.692^(d-1)}.
However, their algorithm doesn't allow items to be rotated.
This is in contrast to some common applications of $d$BP,
like packing boxes into shipping containers.
We give approximation algorithms for $d$BP when items can be
orthogonally rotated about all or a subset of axes.
We first give a fast and simple harmonic-based algorithm having AAR
\texorpdfstring{$T_{\infty}^{d}$}{1.692^d}.
We next give a more sophisticated harmonic-based algorithm,
which we call \texorpdfstring{$\hgapk$}{HGaP}, having AAR
\texorpdfstring{$T_{\infty}^{d-1}(1+\eps)$}{(1+epsilon)*1.692^(d-1)}.
This gives an AAR of roughly $2.860 + \eps$ for 3BP with rotations,
which improves upon the best-known AAR of $4.5$.
In addition, we study the \emph{multiple-choice} bin packing problem
that generalizes the rotational case.
Here we are given $n$ sets of $d$-dimensional cuboidal items and we have to
choose exactly one item from each set and then pack the chosen items.
Our algorithms also work for the multiple-choice bin packing problem.
We also give fast and simple approximation algorithms for the multiple-choice
versions of $d$D strip packing and $d$D geometric knapsack.

\end{abstract}

\acknowledgements{I want to thank my advisor, Prof.~Arindam Khan,
for his valuable comments, and Arka Ray for helpful suggestions.}

\section{Introduction}

Packing of rectangular and cuboidal items is a fundamental problem in computer science,
mathematics, and operations research. Packing problems find numerous applications in practice,
e.g., physical packing of concrete 3D items during
storage or transportation~\cite{bortfeldt2013constraints},
cutting prescribed 2D pieces from cloth or metal sheet
while minimizing the waste~\cite{gilmore1961linear}, etc.
In this paper, we study packing of $d$-dimensional ($d$D) cuboidal items (for $d \ge 2$).

Let $I$ be a set of $n$ number of $d$D cuboidal items,
where each item has length at most one in each dimension.
A feasible packing of items into a $d$D cuboid is a packing where
items are placed inside the cuboid parallel to the axes without any overlapping.
A $d$D {\em unit cube} is a $d$D cuboid of length one in each dimension.
In the $d$D bin packing problem ($d$BP), we have to compute
a feasible packing of $I$ (without rotating the items)
into the minimum number of bins that are $d$D unit cubes.
Let $\optdbp(I)$ denote the minimum number of bins needed to pack $I$.

$d$BP is NP-hard, as it generalizes the classic bin packing problem~\cite{coffman2013bin}.
Thus, we study approximation algorithms.
For $d$BP, the worst-case approximation ratio usually occurs only for
{\em small} pathological instances.
Thus, the standard performance measure is the asymptotic approximation ratio (AAR).
For an algorithm $\mathcal{A}$, AAR is defined as:
\[ \lim\limits_{m \to \infty} \quad \sup_{I \in \mathtt{I}:\, \opt(I) = m}
    \quad \frac{\mathcal{A}(I)}{\opt(I)}, \]
where $\mathtt{I}$ is the set of all problem instances.
$\mathcal{A}(I)$ and $\opt(I)$ are the number of bins used by
$\mathcal{A}$ and the optimal algorithm, respectively, on $I$.

Coffman et al.~\cite{coffman1980performance} initiated the study of approximation
algorithms for rectangle packing. They studied packing algorithms such as
First-Fit Decreasing Height (FFDH) and Next-Fit Decreasing Height (NFDH).
In his seminal paper, Caprara~\cite{caprara2008} devised a polynomial-time algorithm
for $d$BP called $\hdhk$ (Harmonic Decreasing Height),
where $k \in \mathbb{Z}$ is a parameter to the algorithm.
$\hdhk$ has AAR equal to $T_k^{d-1}$,
where $T_k$ is a decreasing function of $k$ and
$T_{\infty} \defeq \lim_{k \to \infty} T_k \approx 1.691$.
The algorithm $\hdhk$ is based on an extension of the harmonic algorithm~\cite{leelee} for 1BP.

A limitation of $\hdhk$ is that it does not allow rotation of items.
This is in contrast to some real-world problems,
like packing boxes into shipping containers ($d=3$),
where items can often be rotated orthogonally,
i.e., $90^{\circ}$ rotation around all or a subset of axes~\cite{baldi2012three, stoyan2014packing}.
Orientation constraints may sometimes limit the vertical orientation of a box
to one dimension (``This side up'')
or to two (of three) dimensions (e.g., long but low and
narrow box should not be placed on its smallest surface).
These constraints are introduced to deter goods
and packaging from being damaged and to ensure the stability of the load.
One of our primary contributions is presenting variants of $\hdhk$
that work for generalizations of $d$BP
that capture the notion of orthogonal rotation of items.

\subsection{Prior Work}

For 2BP, Bansal et al.~\cite{rna} obtained AAR of $T_{\infty}+\eps$ even for the case with rotations,
using a more sophisticated algorithm that used properties of harmonic rounding.
Then there has been a series of improvements \cite{rna, JansenP2013} culminating with
the present best AAR of 1.406~\cite{bansal2014binpacking}, for both the
cases with and without orthogonal rotations.
Bansal et al.~\cite{bansal2004} showed that $d$BP is APX-hard, even for $d=2$.
They also gave an asymptotic PTAS for $d$BP where all items are $d$D squares.

Closely related to $d$BP is the $d$D strip packing problem ($d$SP), where
we have to compute a packing of $I$ (without rotating the items)
into a $d$D cuboid (called a strip) that has length one in the first $d-1$ dimensions
and has the minimum possible length (called height) in the $d\Th$ dimension.

For 2SP, an asymptotic PTAS was given by Kenyon and R\'emila~\cite{kenyon1996strip}.
Jansen and van Stee~\cite{jansen2005strip} extended this to the case with orthogonal rotations.
For 3SP, when rotations are not allowed, Bansal et al.~\cite{BansalHISZ13} gave a harmonic-based
algorithm achieving AAR of $T_{\infty}+\eps$.
Recently, this has been improved to $1.5+\eps$~\cite{JansenPS14}.
Miyazawa and Wakabayashi~\cite{miyazawa2009three} studied 3SP
and 3BP when rotations are allowed, and
gave algorithms with AAR 2.64 and 4.89, respectively.
Epstein and van Stee~\cite{EpsteinS06a} gave an improved AAR of 2.25 and 4.5
for 3SP and 3BP with rotations, respectively.
The $\hdhk$ algorithm also works for $d$SP and has an AAR of $T_k^{d-1}$.
For online $d$BP, there are harmonic-based
$T_\infty^d$-asymptotic-approximation algorithms \cite{EpsteinS05, CsirikV93},
which are optimal for $O(1)$ memory algorithms.

\begin{comment}
For $d$KS, we are not aware of any algorithm for $d>3$.
For 2KS, Jansen and Zhang~\cite{Jansen2004} gave a $(\frac12-\eps)$ approximation.
This has been recently improved to
$\frac{9}{17}-\eps$~\cite{l-packing} and works for both the cases with and without rotations.
For 3KS, Diedrich et al.~\cite{3dks} have given $\frac17 - \eps$ for the case without rotations.
Furthermore, for the case where rotation by 90 degrees,
either around the z-axis or around all axes, is permitted,
they obtain algorithms with approximation ratios
$\frac16 - \eps$ and $\frac15 - \eps$, respectively.
\end{comment}

\subsection{Multiple-Choice Packing}

We will now define the $d$D multiple-choice bin packing problem ($d$MCBP).
This generalizes $d$BP and captures the notion of orthogonal rotation of items.
This perspective will be helpful in designing algorithms for the rotational case.
In $d$MCBP, we're given a set $\Ical = \{I_1, I_2, \ldots, I_n\}$,
where for each $j$, $I_j$ is a set of items, henceforth called an {\em itemset}.
We have to pick exactly one item from each itemset and pack those items
into the minimum number of bins. See \cref{fig:2mcbp} for an example of 2MCBP.

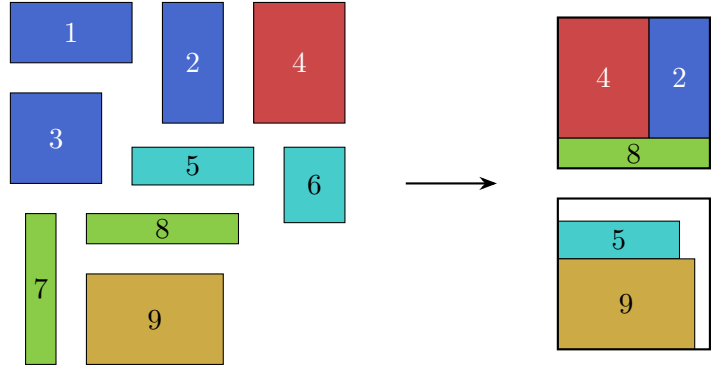
\begin{figure}[htb]
\centering
\begin{tikzpicture}[
item/.style={draw},
bin/.style={draw,thick},
myarrow/.style={->,>={Stealth},thick},
scale=0.8,
]
\definecolor{myblue}{HTML}{4768CC}
\definecolor{myred}{HTML}{CC4747}
\definecolor{mycyan}{HTML}{47CCCC}
\definecolor{mygreen}{HTML}{8ACC47}
\definecolor{myyellow}{HTML}{CCAB47}
\begin{scope}
\path[item,fill=myblue]
    (0, 0) rectangle +(2, -1) node[text=white,pos=0.5] {1}
    (2.5, 0) rectangle +(1, -2) node[text=white,pos=0.5] {2}
    (0, -1.5) rectangle +(1.5, -1.5) node[text=white,pos=0.5] {3};
\path[item,fill=myred]
    (4, 0) rectangle +(1.5, -2) node[text=white,pos=0.5] {4};
\path[item,fill=mycyan]
    (2, -2.4) rectangle +(2, -0.625) node[pos=0.5] {5}
    (4.5, -2.4) rectangle +(1, -1.25) node[pos=0.5] {6};
\path[item,fill=mygreen]
    (0.25, -3.5) rectangle +(0.5, -2.5) node[pos=0.5] {7}
    (1.25, -3.5) rectangle +(2.5, -0.5) node[pos=0.5] {8};
\path[item,fill=myyellow]
    (1.25, -4.5) rectangle +(2.25, -1.5) node[pos=0.5] {9};
\end{scope}
\draw[myarrow] (6.5, -3) -- (8,-3);
\begin{scope}[xshift=9cm,yshift=-0.25cm]
\path[item,fill=myred] (0, 0) rectangle +(1.5, -2) node[text=white,pos=0.5] {4};
\path[item,fill=myblue] (1.5, 0) rectangle +(1, -2) node[text=white,pos=0.5] {2};
\path[item,fill=mygreen] (0, -2) rectangle +(2.5, -0.5) node[pos=0.5] {8};
\path[item,fill=mycyan] (0, -4) rectangle +(2, 0.625) node[pos=0.5] {5};
\path[item,fill=myyellow] (0, -4) rectangle +(2.25, -1.5) node[pos=0.5] {9};
\path[bin] (0, 0) rectangle +(2.5, -2.5);
\path[bin] (0, -3) rectangle +(2.5, -2.5);
\end{scope}
\end{tikzpicture}

\caption[2MCBP example]{2MCBP example: packing the input
$\Ical = \{\{1, 2, 3\}, \{4\}, \{5, 6\}, \{7, 8\}, \{9\}\}$ into two bins.
Here items of the same color belong to the same itemset.}
\label{fig:2mcbp}
\end{figure}

We can model rotations using multiple-choice packing:
Given a set $I$ of items, for each item $i \in I$,
create an itemset $I_i$ that contains all allowed orientations of $i$.
Then the optimal solution to $\Ical \defeq \{I_i: i \in I\}$
will tell us how to rotate and pack items in $I$.

Some algorithms for 2D bin packing with rotations assume that
the bin is square \cite{rna,JansenP2013,bansal2014binpacking}.
This assumption holds without loss of generality when rotations are forbidden,
because we can scale the items.
But if rotations are allowed, this won't work because
items $i_1$ and $i_2$ that are rotations of each other
may stop being rotations of each other after they are scaled.
Multiple-choice packing algorithms can be used in this case.
For each item $i \in I$, we will create an itemset $I_i$ that
contains scaled orientations of $i$.

Multiple-choice packing problems have been studied before.
Lawler gave an FPTAS for the multiple-choice knapsack problem~\cite{lawler1979fast}.
Patt-Shamir and Rawitz gave an algorithm for multiple-choice vector bin packing having
AAR $O(\log d)$ and a PTAS for multiple-choice vector knapsack~\cite{patt2012vector}.
Similar notions have been studied in the scheduling of
malleable or moldable jobs~\cite{ZhangJ07, Jansen12}.

\subsection{Our Contributions}

After the introduction of the harmonic algorithm for online 1BP
by Lee and Lee~\cite{leelee}, many variants have found
widespread use in multidimensional packing problems (both offline and online)
\cite{caprara2008,rna,BansalHISZ13,BaloghBDEL18,EpsteinS05,CsirikV93,han2011new,ramanan1989line,seiden2002online}.
They are also simple, fast, and easy to implement.
For example, among algorithms for 3SP, 2BP and 3BP with practical running time,
harmonic-based algorithms provide the best AAR.

In our work, we extend harmonic-based algorithms to $d$MCBP.
$d$MCBP subsumes the rotational case for geometric bin packing,
and we believe $d$MCBP is an important natural generalization
of geometric bin packing that may be of independent interest.

In \cref{sec:hdhk-prelims}, we describe ideas from $\hdhk$ \cite{caprara2008}
that help us devise harmonic-based algorithms for $d$MCBP.
In \cref{sec:fhk}, we show an $O(Nd + nd\log n)$-time algorithm for $d$MCBP,
called $\fhk$, having an AAR of $T_k^d$, where $n$ is the number of itemsets
and $N$ is the total number of items across all the $n$ itemsets.
$\fhk$ is a fast and simple algorithm that works in two stages:
In the first stage, we select the \emph{smallest} item from each itemset
(we will precisely define \emph{smallest} in \cref{sec:fhk}).
In the second stage, we pack the selected items into bins
using a variant of the $\hdhk$ algorithm.

In \cref{sec:hgap}, we show an algorithm for $d$MCBP, called $\hgapk$, having an AAR of
$T_k^{d-1}(1+\eps)$ and having a running time of $N^{O(1/\eps^2)}n^{(1/\eps)^{O(1/\eps)}}+O(Nd + nd\log n)$.
For $d \ge 3$, this matches the present best AAR for the case
where rotations are forbidden.
Also, for large $k$, this gives an AAR of roughly $T_{\infty}^2 \approx 2.860$
for 3D bin packing when orthogonal rotations are allowed,
which is an improvement over the previous best AAR of $4.5$~\cite{EpsteinS06a},
an improvement after fourteen years.

As harmonic algorithms are ubiquitous in bin packing, we expect
our results will have applications in other related problems.

Our techniques can be extended to some other packing problems,
like strip packing and geometric knapsack.
In \cref{sec:hdhk-sp}, we define the $d$D multiple-choice strip packing problem ($d$MCSP)
and extend Caprara's $\hdhk$ algorithm \cite{caprara2008} to $d$MCSP.
The algorithm has AAR $T_k^{d-1}$ and runs in time $O(Nd + nd\log n)$,
where $n$ is the number of itemsets and $N$ is the total number of items across all itemsets.
In \cref{sec:hdhks}, we define the $d$D multiple-choice knapsack problem ($d$MCKS),
and for any $0 < \eps < 1$, we show an $O(Nd + N\log N + Nn/\eps + nd\log n)$-time algorithm
that is $(1-\eps)3^{-d}$-approximate.

\section{Preliminaries}

Let $[n] \defeq \{1, 2, \ldots, n\}$.
For a set $X$, define $\Sum(X) \defeq \sum_{x \in X} x$.
For an $n$-dimensional vector $\lavec{v}$,
define $\Sum(\lavec{v}) \defeq \sum_{i=1}^n \lavec{v}_i$.
For a set $X \subseteq I$ of items and any function $f: I \mapsto \mathbb{R}$,
$f(X)$ is defined to be $\sum_{i \in X} f(i)$, unless stated otherwise.

The length of a $d$D item $i$ in the $j\Th$ dimension is denoted by $\ell_j(i)$.
Define $\vol(i) \defeq \prod_{j=1}^d \ell_j(i)$.
For a $d$D cuboid $i$, call the first $d-1$ dimensions \emph{base dimensions}
and call the $d\Th$ dimension \emph{height}.
For a set $I$ of items, $|I|$ is the number of items in $I$.
Let $|P|$ denote the number of bins used by a packing $P$ of items into bins.

\subsection{Multiple-Choice Packing}

Let $\Ical$ be a set of itemsets.
Define $\flatten(\Ical)$ to be the union of all itemsets in $\Ical$.

Let $K$ be a set of items that contains exactly one item from each itemset in $\Ical$.
Formally, for each itemset $I \in \Ical$, $|K \cap I| = 1$.
Then $K$ is called an \emph{assortment} of $\Ical$.
Let $\assortSet(\Ical)$ denote the set of all assortments of $\Ical$.
In $d$MCBP, given an input instance $\Ical$,
we have to select an assortment $K \in \assortSet(\Ical)$ and output a bin packing of $K$,
such that the number of bins used is minimized. Therefore,
$\optdmcbp(\Ical) = \min_{K \in \assortSet(\Ical)} \optdbp(K)$.

\section{Important Ideas from the \texorpdfstring{$\hdhk$}{HDHk} Algorithm}
\label{sec:hdhk-prelims}

In this section, we will describe some important ideas behind the $\hdhk$ algorithm
for $d$BP by Caprara~\cite{caprara2008}.
These ideas are the building blocks for our algorithms for $d$MCBP.

\subsection{\DFF{}s}

Fekete and Schepers~\cite{fekete2004} present a useful approach for
obtaining lower bounds on the optimal solution to bin packing problems.
Their approach is based on \emph{\dff{}s}.

\begin{definition}
$g: [0, 1] \mapsto [0, 1]$ is a weighting function iff
for all $m \in \mathbb{Z}_{>0}$ and $x \in [0, 1]^m$,
\[ \sum_{i=1}^m x_i \le 1 \implies \sum_{i=1}^m g(x_i) \le 1 \]
(Weighting functions are also called \emph{dual feasible functions} (DFFs)).
\end{definition}

\begin{restatable}{theorem}{rthmDffPack}
\label{thm:dff-pack}
Let $I$ be a set of $d$D items that can be packed into a bin.
Let $g_1, g_2, \ldots, g_d$ be \dff{}s.
For $i \in I$, define $g(i)$ as the item whose length is $g_j(\ell_j(i))$ in the $j\Th$ dimension,
for each $j \in [d]$.
Then $\{g(i): i \in I\}$ can be packed into a $d$D bin (without rotating the items).
\end{restatable}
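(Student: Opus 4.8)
The plan is to rescale one coordinate at a time. Since the $d$ rescalings act on independent coordinates and commute, it suffices to prove the single-coordinate version: if a set $I$ of items packs into a unit cube and $g$ is a \dff{}, then the items obtained from $I$ by replacing, for each $i$, its $d\Th$ length $\ell_d(i)$ with $g(\ell_d(i))$ while keeping $\ell_1(i),\ldots,\ell_{d-1}(i)$ unchanged also pack into a unit cube. Applying this successively for coordinates $d, d-1, \ldots, 1$ with the \dff{}s $g_d, g_{d-1}, \ldots, g_1$ yields \cref{thm:dff-pack}; the only thing to verify between applications is that each one outputs a genuine axis-parallel packing into $[0,1]^d$, so the next application is legitimate.

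For the single-coordinate statement, fix a packing of $I$ into $[0,1]^d$ and write item $i$ as the box $\prod_{j=1}^d [a^i_j, a^i_j + \ell_j(i))$. I would keep the first $d-1$ coordinates of every box exactly as they are and choose only new $d\Th$ positions $y_i \ge 0$. Call two items \emph{conflicting} if their projections onto the first $d-1$ coordinates have intersecting interiors; because the original boxes are interior-disjoint, conflicting items must have disjoint $d\Th$ intervals in the original packing. Sort the items by $a^i_d$ (ties, which can only occur between non-conflicting items, broken arbitrarily) and process them in this order, placing each item $i$ directly above the highest already-placed item that conflicts with it, i.e. set $y_i := 0$ if no earlier item conflicts with $i$, and otherwise $y_i := y_{i'} + g(\ell_d(i'))$ where $i'$ is the conflicting predecessor maximizing $y_{i'} + g(\ell_d(i'))$. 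Conflicting items then receive disjoint $d\Th$ intervals, and non-conflicting items are already interior-disjoint in the first $d-1$ coordinates, so the output is a valid packing \emph{provided} $y_i + g(\ell_d(i)) \le 1$ for every $i$.

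The heart of the proof is this last inequality. For $i$ with $y_i > 0$, let $\pi(i)$ be the conflicting predecessor attaining the maximum, so $y_i = y_{\pi(i)} + g(\ell_d(\pi(i)))$, and iterate to get a chain $i = i_0, i_1 = \pi(i_0), i_2 = \pi(i_1), \ldots, i_r$ that stops once $y_{i_r} = 0$; this terminates because each $i_{t+1}$ strictly precedes $i_t$ in the processing order. Telescoping gives $y_i + g(\ell_d(i)) = \sum_{t=0}^r g(\ell_d(i_t))$. I would then show that $i_0, \ldots, i_r$ have pairwise disjoint $d\Th$ intervals in the \emph{original} packing, so $\sum_{t=0}^r \ell_d(i_t) \le 1$, and the \dff{} property of $g$ immediately gives $\sum_{t=0}^r g(\ell_d(i_t)) \le 1$, as needed. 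For the disjointness: consecutive items $i_{t+1}, i_t$ conflict, hence have disjoint original $d\Th$ intervals, and since $a^{i_{t+1}}_d \le a^{i_t}_d$ this forces $a^{i_{t+1}}_d + \ell_d(i_{t+1}) \le a^{i_t}_d$; because the $a^{i_t}_d$ are strictly decreasing along the chain, a short induction upgrades this to ``the $d\Th$ interval of $i_t$ lies entirely below $a^{i_s}_d$ whenever $s < t$'', which is exactly pairwise disjointness. I expect this monotone-stacking step to be the only place needing real care; the remaining items are routine bookkeeping — in particular checking that after rescaling coordinate $d$ the result is again a bona fide geometric packing, so that the induction over coordinates goes through.
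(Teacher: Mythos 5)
Your proposal is correct and is essentially the same argument the paper gives (Lemma~\ref{thm:dff-pack-1} plus iteration over coordinates): reduce to one coordinate, keep the projections onto the other $d-1$ coordinates fixed, restack along the $d$\Th{} axis by placing each item just above its highest-stacked ``conflicting'' predecessor, and then bound the new top face by following the chain of predecessors and applying the \dff{} property to the original lengths along that chain. The only cosmetic difference is that the paper defines $\pi$ and $u$ by recursion on a $\level$ function and bounds $\sum_t \ell_d(i_t)$ by telescoping the original $v_d$ differences directly, whereas you sort by $a^i_d$ and observe pairwise disjointness of the chain's original intervals — these are interchangeable ways of reaching $\sum_t \ell_d(i_t) \le 1$.
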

\Cref{thm:dff-pack} is proved in \cref{sec:dff-trn}.

\subsection{The Harmonic Function}
\label{sec:hdhk-prelims:harmonic}

To obtain a lower-bound on $\optdbp(I)$ using \cref{thm:dff-pack},
Caprara~\cite{caprara2008} defined a function $f_k$.
For an integer constant $k \ge 3$, $f_k: [0, 1] \mapsto [0, 1]$ is defined as
\[ f_k(x) \defeq \begin{cases}
\frac{1}{q} & x \in \left(\left.\frac{1}{q+1}, \frac{1}{q}\right]\right. \textrm{ for } q \in [k-1]
\\ \frac{k}{k-2}x & x \le \frac{1}{k} \end{cases}. \]
$f_k$ was originally defined and studied by Lee and Lee~\cite{leelee} for their
online algorithm for 1BP, except that they used $k/(k-1)$ instead of $k/(k-2)$.
Define $\type_k: [0, 1] \mapsto [k]$ as
\[ \type_k(x) \defeq \begin{cases}
q & x \in \left(\left.\frac{1}{q+1}, \frac{1}{q}\right]\right. \textrm{ for } q \in [k-1]
\\ k & x \le \frac{1}{k} \end{cases}. \]

Define $T_k$ to be the smallest positive constant such that
$H_k(x) \defeq f_k(x)/T_k$ is a \dff{}. We call $H_k$ the \emph{harmonic \dff}.
We can efficiently compute $T_k$ as a function of $k$ using ideas from \cite{leelee}.
\Cref{table:tk-values} lists the values of $T_k$ for the first few $k$.
It can also be proven that $T_k$ is a decreasing function of $k$
and $T_{\infty} \defeq \lim_{k \to \infty} T_k \approx 1.6910302$.

\begin{table}[!ht]
\centering
\caption{Values of $T_k$.}
\begin{tabular}{|c|c|c|c|c|c|c|}
\hline
$k$ & 3 & 4 & 5 & 6 & 7 & $\infty$ \\
\hline
$T_k$ & 3 & 2 & $11/6 = 1.8\overline{3}$ & $7/4 = 1.75$
& $26/15 = 1.7\overline{3}$ & $\approx 1.6910302$ \\
\hline
\end{tabular}
\label{table:tk-values}
\end{table}

For a $d$D cuboid $i$, define $f_k(i)$ to be the cuboid
whose length is $f_k(\ell_j(i))$ in the $j\Th$ dimension, for each $j \in [d]$.
For a set $I$ of $d$D cuboids, let $f_k(I) \defeq \{f_k(i): i \in I\}$.
Similarly define $H_k(i)$ and $H_k(I)$.
Define $\type(i)$ to be a $d$-dimensional vector whose $j\Th$ component is $\type_k(\ell_j(i))$.
Note that there can be at most $k^d$ different values of $\type(i)$.
Sometimes, for the sake of convenience, we may express $\type(i)$ as an integer in $[k^d]$.

\begin{theorem}
\label{thm:fvol-bp}
For a set of $I$ of $d$D items, $\vol(f_k(I)) \le T_k^d \optdbp(I)$.
\end{theorem}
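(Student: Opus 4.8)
The plan is to reduce the claim to the single-bin case and then sum over the bins of an optimal packing. First I would recall that, by the definition of $T_k$, the function $H_k = f_k/T_k$ is a \dff{}; in particular $H_k$ maps $[0,1]$ into $[0,1]$, so $H_k(i)$ is a well-defined $d$D item for every item $i$. Since $f_k(\ell_j(i)) = T_k \, H_k(\ell_j(i))$ in each coordinate $j \in [d]$, multiplying over the $d$ coordinates yields the bookkeeping identity $\vol(f_k(i)) = T_k^d \, \vol(H_k(i))$ for every item $i$, and hence $\vol(f_k(J)) = T_k^d \, \vol(H_k(J))$ for any finite set $J$ of items (using additivity of $\vol$ over items).

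Next I would handle a single bin. Let $J$ be any set of items that fits into one bin. Applying \cref{thm:dff-pack} with $g_1 = g_2 = \cdots = g_d = H_k$, the set $H_k(J) = \{H_k(i) : i \in J\}$ can be packed into a $d$D unit-cube bin without rotation. Since a feasible packing places the items disjointly inside a region of volume $1$, the total volume of the packed items is at most $1$, i.e.\ $\vol(H_k(J)) \le 1$. Combining with the identity above, $\vol(f_k(J)) \le T_k^d$.

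Finally I would assemble the global bound. Fix an optimal packing of $I$ into $m \defeq \optdbp(I)$ bins, and let $J_1, \ldots, J_m$ be the partition of $I$ induced by which bin each item is placed in; each $J_\ell$ fits into one bin, so the previous paragraph gives $\vol(f_k(J_\ell)) \le T_k^d$. Then
\[
\vol(f_k(I)) \;=\; \sum_{\ell=1}^m \vol(f_k(J_\ell)) \;\le\; \sum_{\ell=1}^m T_k^d \;=\; T_k^d \, \optdbp(I),
\]
where the first equality uses that $f_k(I) = \bigcup_{\ell=1}^m f_k(J_\ell)$ and additivity of $\vol$.

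There is no serious obstacle here. The only steps needing a moment's care are: (i) the passage from "$J$ packs into a bin" to "$H_k(J)$ packs into a bin", which is exactly what \cref{thm:dff-pack} provides once we note $H_k$ is a \dff{}; and (ii) the fact that a feasible packing into a unit cube forces the summed item volume to be at most $1$, which is immediate from disjointness and the bin having volume $1$. Everything else is the identity $\vol \circ f_k = T_k^{d}\,(\vol \circ H_k)$ together with additivity of volume over the partition of $I$ induced by an optimal packing.
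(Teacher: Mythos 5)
Your proof is correct and follows essentially the same route as the paper: partition $I$ by the bins of an optimal packing, use \cref{thm:dff-pack} with $g_1=\cdots=g_d=H_k$ to conclude $\vol(H_k(J_\ell))\le 1$ for each bin, and sum using $\vol(f_k(\cdot))=T_k^d\vol(H_k(\cdot))$. The only difference is that you spell out a couple of steps the paper leaves implicit.
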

\begin{proof}
Let $m \defeq \optdbp(I)$. Let $J_j$ be the items in the
$j\Th$ bin in the optimal bin packing of $I$.
By \cref{thm:dff-pack} and because $H_k$ is a \dff{},
$H_k(J_j)$ fits in a bin. Therefore,
\[ \vol(f_k(I))
= \sum_{j=1}^m T_k^d \vol(H_k(J_j))
\le \sum_{j=1}^m T_k^d
= T_k^d \optdbp(I).
\qedhere \]
\end{proof}

\subsection{The \texorpdfstring{$\hdhkunit$}{HDH-unit-pack} Subroutine}
\label{sec:hdhk-prelims:hdhkunit}

From the $\hdhk$ algorithm by Caprara~\cite{caprara2008},
we extracted out a useful subroutine, which we call $\hdhkunit$,
that satisfies the following useful property:
\begin{property}
\label{prop:hdhkunit}
The algorithm $\hdhkunit^{[t]}(I)$ takes a \emph{sequence} $I$ of $d$D items such that
all items have type $t$ and $\vol(f_k(I-\{\last(I)\})) < 1$
(here $\last(I)$ is the last item in sequence $I$).
It returns a packing of $I$ into a single $d$D bin
in $O(nd\log n)$ time, where $n \defeq |I|$.
\end{property}

The design of $\hdhkunit$ and its correctness can be inferred from
Lemma 4.1 in \cite{caprara2008}.
We use $\hdhkunit$ as a black-box subroutine in our algorithms,
i.e., we only rely on \cref{prop:hdhkunit};
we don't need to know anything else about $\hdhkunit$.
Nevertheless, for the sake of completeness, in \cref{sec:hdhkunit},
we give a complete description of $\hdhkunit$ and prove its correctness.

\section{Fast and Simple Algorithm for \safed{}MCBP
(\texorpdfstring{$\fhk$}{fullh\_k})}
\label{sec:fhk}

We will now describe an algorithm for $d$BP called the \emph{full-harmonic algorithm}
($\fhk$). We will then extend it to $d$MCBP.
The $\fhk$ algorithm works by first partitioning the items based on their $\type$ vector
(type vector is defined in \cref{sec:hdhk-prelims:harmonic}).
Then for each partition, it repeatedly picks the smallest prefix $J$
such that $\vol(f_k(J)) \ge 1$ and packs $J$ into a $d$D bin using $\hdhkunit$.
See \cref{algo:fhk} for a more precise description of $\fhk$.
Note that $\fhk(I)$ has a running time of $O(|I|d\log |I|)$.

\begin{algorithm}[!ht]
\caption{$\fhk(I)$: Returns a bin packing of $d$D items $I$.}
\label{algo:fhk}
\begin{algorithmic}[1]
\State Let $P$ be an empty list.
\For{each $\type$ $t$}
    \State $I^{[t]} = \{i \in I: \type(i) = t\}$.
    \While{$|I^{[t]}| > 0$}
        \State Find $J$, the smallest prefix of $I^{[t]}$ such that
            $J = I^{[t]}$ or $\vol(f_k(J))) \ge 1$.
        \State $B = \hdhkunitHyp^{[t]}(J)$.
            \Comment{$B$ is a packing of $J$ into a $d$D bin}.
        \State Append $B$ to the list $P$.
        \State Remove $J$ from $I^{[t]}$.
    \EndWhile
\EndFor
\State \Return the list $P$ of bins.
\end{algorithmic}
\end{algorithm}

\begin{theorem}
\label{thm:fhk-fvol}
The number of bins used by $\fhk(I)$ is less than $Q + \vol(f_k(I))$,
where $Q$ is the number of distinct $\type$s of items (so $Q \le k^d$).

\end{theorem}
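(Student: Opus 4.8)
The plan is to analyze \cref{algo:fhk} one $\type$-class at a time, using the observation that, among the bins the algorithm opens for a fixed type $t$, every bin except possibly the last one is \emph{full} in the sense that its set of items has $f_k$-volume at least $1$. Before counting, I would first check that the algorithm is well defined, i.e.\ that every call $\hdhkunit^{[t]}(J)$ satisfies the hypothesis of \cref{prop:hdhkunit}. All items placed in $J$ have type $t$ by construction, so the only thing to verify is $\vol(f_k(J-\{\last(J)\}))<1$. But $J$ is chosen as the \emph{smallest} prefix of the current $I^{[t]}$ with $J=I^{[t]}$ or $\vol(f_k(J))\ge 1$; hence the strictly shorter prefix $J':=J-\{\last(J)\}$ violates both alternatives, and in particular $\vol(f_k(J'))<1$ (with the convention $\vol(f_k(\emptyset))=0$ when $|J|=1$). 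So each $\hdhkunit$ call returns a legal packing of $J$ into one bin, and $\fhk(I)$ outputs a feasible bin packing of $I$.

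Next I would do the per-type count. Fix a type $t$ with $I^{[t]}\neq\emptyset$ and let $J_1,\dots,J_{b_t}$ be the prefixes packed, in order, while the inner loop processes $I^{[t]}$; these sets partition $I^{[t]}$, and $b_t$ is exactly the number of bins $\fhk$ devotes to type $t$. For every $r<b_t$ there are still items left after $J_r$ is removed, so $J_r\neq I^{[t]}$ at that moment and the loop condition forces $\vol(f_k(J_r))\ge 1$. The final prefix $J_{b_t}$ is nonempty, and since item side lengths are positive and $f_k$ is positive on positive inputs, $\vol(f_k(J_{b_t}))>0$. Using that $\vol(f_k(\cdot))$ is additive over items, this gives
\[
\vol(f_k(I^{[t]}))=\sum_{r=1}^{b_t}\vol(f_k(J_r))>(b_t-1)+0=b_t-1,
\]
hence $b_t<\vol(f_k(I^{[t]}))+1$.

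Finally I would sum this strict inequality over the $Q\ge 1$ distinct types occurring in $I$ (the claimed bound is for nonempty $I$, so $Q\ge1$), which yields
\[
(\text{bins used by }\fhk(I))=\sum_t b_t<\sum_t\bigl(\vol(f_k(I^{[t]}))+1\bigr)=\vol(f_k(I))+Q,
\]
and $Q\le k^d$ was already recorded when $\type$ was introduced. I do not expect a real obstacle here: the argument is elementary, and the only delicate points are (i) discharging the precondition of \cref{prop:hdhkunit}, which is handled purely by the minimality of the chosen prefix, and (ii) keeping the final inequality strict rather than $\le$, which is why I separate out the last bin of each type and note that it still carries strictly positive $f_k$-volume.
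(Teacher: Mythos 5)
Your proof is correct and matches the paper's own argument: both count, for each type $t$, the $b_t$ bins opened for $I^{[t]}$, observe that the first $b_t-1$ of them each receive $f_k$-volume at least $1$ while the last is nonempty and hence carries positive $f_k$-volume, deduce $b_t < \vol(f_k(I^{[t]}))+1$ by additivity, and sum over the $Q$ types. Your added paragraph discharging the precondition of $\hdhkunit$ via minimality of the chosen prefix is a welcome bit of extra care but does not change the argument.
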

\begin{proof}
Let $I^{[t]}$ be the items in $I$ of type $t$.
Suppose $\fhk(I)$ uses $m^{[t]}$ bins to pack $I^{[t]}$.
For each type $t$, the first $m^{[t]}-1$ bins have $\vol\cdot f_k$ at least 1,
so $\vol(f_k(I^{[t]})) > m^{[t]} - 1$.
Therefore, total number of bins used is
$\sum_{t=1}^Q m^{[t]} < \sum_{t=1}^Q (1 + \vol(f_k(I^{[t]})))
= Q + \vol(f_k(I))$.
\end{proof}

\begin{lemma}[Corollary to \cref{thm:fhk-fvol,thm:fvol-bp}]
\label{thm:fhk-appx}
$\fhk(I)$ uses less than $Q + T_k^d\optdbp(I)$ bins,
where $Q$ is the number of distinct item $\type$s.
\end{lemma}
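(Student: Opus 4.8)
The plan is to simply chain the two results we already have. By \cref{thm:fhk-fvol}, the number of bins used by $\fhk(I)$ is strictly less than $Q + \vol(f_k(I))$, where $Q$ is the number of distinct $\type$s appearing among the items of $I$. By \cref{thm:fvol-bp}, we have $\vol(f_k(I)) \le T_k^d \optdbp(I)$. Substituting the latter bound into the former immediately yields that $\fhk(I)$ uses fewer than $Q + T_k^d \optdbp(I)$ bins.

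The only thing to be careful about is the exact nature of the bound on $Q$. In \cref{thm:fhk-fvol} the quantity $Q$ is the number of $\type$s actually realized by items of $I$, which is at most the total number of possible $\type$ vectors, namely $k^d$ (as noted in \cref{sec:hdhk-prelims:harmonic}). So the statement ``$Q$ is the number of distinct item $\type$s'' carries over verbatim, and one may additionally recall $Q \le k^d$ if a type-free bound is wanted. There is no real obstacle here; this lemma is exactly the advertised corollary, and the proof is a one-line composition.

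\begin{proof}
By \cref{thm:fhk-fvol}, $\fhk(I)$ uses fewer than $Q + \vol(f_k(I))$ bins, where $Q$ is the number of distinct $\type$s of items in $I$. By \cref{thm:fvol-bp}, $\vol(f_k(I)) \le T_k^d \optdbp(I)$. Combining these two bounds, $\fhk(I)$ uses fewer than $Q + T_k^d \optdbp(I)$ bins.
\end{proof}
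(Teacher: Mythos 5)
Your proof is correct and is exactly the intended argument: the paper labels this lemma as a corollary to \cref{thm:fhk-fvol,thm:fvol-bp} and gives no separate proof, since chaining the two bounds is all that is needed. Your substitution of $\vol(f_k(I)) \le T_k^d \optdbp(I)$ into the strict inequality from \cref{thm:fhk-fvol} is the right one-line composition.
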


\begin{theorem}
Let $\Ical$ be a $d$MCBP instance.
Let $\Khat \defeq \{\argmin_{i \in I} \vol(f_k(i)): I \in \Ical\}$,
i.e., $\Khat$ is the assortment obtained by picking from each itemset
the item $i$ having the minimum value of $\vol(f_k(i))$.
Then the number of bins used by $\fhk(\Khat)$ is less than $Q + T_k^d\optdmcbp(\Ical)$,
where $Q$ is the number of distinct $\type$s of items in $\flatten(\Ical)$ (so $Q \le k^d$).
\end{theorem}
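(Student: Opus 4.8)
The plan is to combine three ingredients already in hand: the bin-count bound for $\fhk$ from \cref{thm:fhk-fvol}, the volume lower bound from \cref{thm:fvol-bp}, and the fact that $\Khat$ is chosen greedily to minimize $\vol(f_k(\cdot))$ itemset by itemset. First I would apply \cref{thm:fhk-fvol} to the input $\Khat$: since $\Khat \subseteq \flatten(\Ical)$, every type occurring in $\Khat$ occurs in $\flatten(\Ical)$, so the number of distinct types in $\Khat$ is at most $Q$, and hence $\fhk(\Khat)$ uses fewer than $Q + \vol(f_k(\Khat))$ bins. It remains to show $\vol(f_k(\Khat)) \le T_k^d \optdmcbp(\Ical)$.

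Next I would introduce an optimal assortment: let $K^\star \in \argmin_{K \in \assortSet(\Ical)} \optdbp(K)$, so that $\optdbp(K^\star) = \optdmcbp(\Ical)$. Applying \cref{thm:fvol-bp} to the item set $K^\star$ gives $\vol(f_k(K^\star)) \le T_k^d \optdbp(K^\star) = T_k^d \optdmcbp(\Ical)$.

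The key comparison is $\vol(f_k(\Khat)) \le \vol(f_k(K^\star))$. Both $\Khat$ and $K^\star$ are assortments of $\Ical$, i.e., each contains exactly one item from each itemset $I \in \Ical$. For a fixed itemset $I$, let $\ihat_I$ be the item $\Khat$ picks from $I$ and $i^\star_I$ the item $K^\star$ picks from $I$. By definition $\ihat_I = \argmin_{i \in I} \vol(f_k(i))$, so $\vol(f_k(\ihat_I)) \le \vol(f_k(i^\star_I))$. Summing this inequality over all itemsets $I \in \Ical$ and using that $\vol(f_k(\cdot))$ is additive over a set of items yields $\vol(f_k(\Khat)) = \sum_{I \in \Ical} \vol(f_k(\ihat_I)) \le \sum_{I \in \Ical} \vol(f_k(i^\star_I)) = \vol(f_k(K^\star))$. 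Chaining the three bounds, $\fhk(\Khat)$ uses fewer than $Q + \vol(f_k(\Khat)) \le Q + \vol(f_k(K^\star)) \le Q + T_k^d \optdmcbp(\Ical)$ bins, as claimed.

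There is essentially no hard step here — the argument is a short assembly of prior results. The only point that needs a moment's care is the bookkeeping around $Q$: \cref{thm:fhk-fvol} states its bound in terms of the number of distinct types \emph{in the input to $\fhk$}, whereas the statement to be proved phrases $Q$ as the number of distinct types in $\flatten(\Ical)$; the inclusion $\Khat \subseteq \flatten(\Ical)$ bridges this, and the bound only gets (weakly) larger, so the stated inequality still holds.
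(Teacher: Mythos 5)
Your proof is correct and takes essentially the same route as the paper: apply \cref{thm:fhk-fvol} to $\Khat$, compare $\vol(f_k(\Khat)) \le \vol(f_k(K^*))$ itemset-by-itemset using the argmin choice, and then invoke \cref{thm:fvol-bp} on the optimal assortment $K^*$. Your extra remark about $Q$ being computed over $\flatten(\Ical)$ rather than over $\Khat$ is a small clarification the paper leaves implicit, but the argument is otherwise the same.
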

\begin{proof}
For any assortment $K$, $\vol(f_k(\Khat)) \le \vol(f_k(K))$.
Let $K^*$ be the assortment in an optimal packing of $\Ical$.
By \cref{thm:fhk-fvol,thm:fvol-bp}, the number of bins used by
$\fhk(\Khat)$ is less than
\[ Q + \vol(f_k(\Khat))
\le Q + \vol(f_k(K^*))
\le Q + T_k^d\optdbp(K^*).
= Q + T_k^d\optdmcbp(\Ical) \qedhere \]
\end{proof}

Let $N \defeq |\flatten(\Ical)|$ and $n \defeq |\Ical|$.
We can find $\Khat$ in $O(Nd)$ time and compute $\fhk(\Khat)$ in $O(nd\log n)$ time.
This gives us an $O(Nd + nd\log n)$-time algorithm for $d$MCBP having AAR $T_k^d$.

\subsection{\safed{}BP with Rotations}
\label{sec:fhk-rot}

As mentioned before, we can solve the rotational version of $d$BP by reducing it to $d$MCBP.
Specifically, for each item $i$ in the $d$BP instance,
we create an itemset containing all orientations of $i$,
and we pack the resulting $d$MCBP instance using $\fhk$.
Since an item can have up to $d!$ allowed orientations,
this can take up to $O(nd! + nd\log n)$ time.
Hence, the running time is large when $d$ is large.
However, we can do better for some special cases.

When the bin has the same length in each dimension, then for any item $i$,
$\vol(f_k(i))$ is independent of how we orient $i$.
Hence, we can orient the items $I$ arbitrarily and then pack them using $\fhk$
in $O(nd\log n)$ time.

Suppose there are no orientation constraints, i.e., all $d!$ orientations of each item are allowed.
Let $L_j$ be the length of the bin in the $j\Th$ dimension, for each $j \in [d]$.
To use $\fhk$ to pack $I$, we need to find the best orientation for each item $i \in I$,
i.e., we need to find a permutation $\pi$ for each item $i$ such that
$\prod_{j=1}^d f_k\left( \ell_{\pi_j}(i) / L_j\right)$ is minimized.
This can be formulated as a maximum-weight bipartite matching problem on a graph
with $d$ vertices in each partition: for every $u \in [d]$ and $v \in [d]$,
the edge $(u, v)$ has a non-negative weight of $-\log(f_k(\ell_u(i) / L_v))$.
So, using the Kuhn-Munkres algorithm \cite{munkres1957algorithms},
we can find the best orientation for each item in $O(d^3)$ time.
Hence, we can pack $I$ using $\fhk$ in $O(nd^3 + nd\log n)$ time.

\section{Better Algorithm for \safed{}MCBP (\texorpdfstring{$\hgapk$}{HGaP})}
\label{sec:hgap}

Here we will describe a $T_k^{d-1}(1+\eps)$-\asymAppx{} algorithm for $d$MCBP that is based on
$\hdhk$ and Lueker and Fernandez de la Vega's APTAS for 1BP~\cite{bp-aptas}.
We call our algorithm \emph{Harmonic Guess-and-Pack} ($\hgapk$).
This improves upon $\fhk$ that has AAR $T_k^d$.

\begin{comment}
In \cite{caprara2008}, Caprara showed that we can get a $T_k^{d-1}(1+\eps)$-approx
algorithm for $d$BP by using $\hdhksp$ to pack the items into shelves
and then packing the shelves into bins using the APTAS for 1BP.
$\hgapk$ works the other way around, i.e., we first guess
the heights of the shelves and the way they are packed into bins,
and then we compute an assortment of the input that can be packed into the shelves.
\end{comment}

\begin{definition}
For a $d$D item $i$, let $h(i) \defeq \ell_d(i)$,
$w(i) \defeq \prod_{j=1}^{d-1} f_k(\ell_j(i))$ and $a(i) \defeq w(i)h(i)$.
Let $\round(i)$ be a rectangle of height $h(i)$ and width $w(i)$.
For a set $X$ of $d$D items, define $w(X) \defeq \sum_{i \in X} w(i)$
and $\round(X) \defeq \{\round(i): i \in X\}$.
\end{definition}

For any $\eps > 0$, the algorithm $\hgapk(\Ical, \eps)$ returns
a bin packing of $\Ical$, where $\Ical$ is a set of $d$D itemsets.
$\hgapk$ first converts $\Ical$ to a set $\Icalhat$ of 2D itemsets.
It then computes $P_{\best}$, which is a \emph{structured} bin packing of $\Icalhat$
(we formally define \emph{structured} later).
Finally, it uses the algorithm $\inflate$ to convert $P_{\best}$
into a bin packing of the $d$D itemsets $\Ical$, where
$|\inflate(P_{\best})|$ is very close to $|P_{\best}|$.
See \cref{algo:hgap} for a more precise description.
We show that $|P_{\best}| \lessapprox T_k^{d-1}(1+\eps)\opt(\Ical)$,
which proves that $\hgapk$ has an AAR of $T_k^{d-1}(1+\eps)$.
This approach of converting items to 2D, packing them,
and then converting back to $d$D is very useful,
because most of our analysis is about how to compute a structured 2D packing, and
a packing of 2D items is easier to visualize and reason about than a packing of $d$D items.

\begin{algorithm}[!ht]
\caption{$\hgapk(\Ical, \eps)$: Returns a bin packing of $d$D itemsets $\Ical$,
where $\eps \in (0, 1)$.}
\label{algo:hgap}
\begin{algorithmic}[1]
\State Let $\delta \defeq \eps/(2+\eps)$.
\State \label{alg-line:hgap:round}$\Icalhat = \{\round(I): I \in \Ical\}$
\State Initialize $P_{\best}$ to \Null.
\For{$P \in \guessShelves(\Icalhat, \delta)$}
    \State $\Pbar = \chooseAndPack(\Icalhat, P, \delta)$
    \If{$\Pbar$ is not \Null{} and ($P_{\best}$ is \Null{} or $|\Pbar| \le |P_{\best}|$)}
        \State $P_{\best} = \Pbar$
    \EndIf
\EndFor
\State \Return $\inflate(P_{\best})$
\end{algorithmic}
\end{algorithm}

A 2D bin packing is said to be \emph{shelf-based} if items are packed into \emph{shelves}
and the shelves are packed into bins, where a shelf is a rectangle of width 1.
See \cref{fig:shelf-based-2} for an example.
A structured bin packing is a shelf-based bin packing
where the heights of the shelves satisfy some additional properties
(we describe these properties later).
The algorithm $\guessShelves$ repeatedly guesses the number and heights of shelves
and computes a structured packing $P$ of those shelves into bins.
Then for each packing $P$, the algorithm $\chooseAndPack(\Icalhat, P, \delta)$ \emph{tries to}
pack an assortment of $\Icalhat$ into the shelves in $P$ plus maybe one additional shelf.
If $\chooseAndPack$ succeeds, call the resulting bin packing $\Pbar$.
Otherwise, $\chooseAndPack$ returns \Null.
$P_{\best}$ is the value of $\Pbar$ with the minimum number of bins
across all guesses by $\guessShelves$.

\begin{figure}[!ht]
\centering
\begin{tikzpicture}[scale=0.8,
shelf-line/.style = {draw={black!30}},
item/.style = {fill={black!10}}
]
\draw[shelf-line] (0,1.5) -- (4,1.5);
\draw[shelf-line] (0,2.7) -- (4,2.7);
\draw[shelf-line] (0,3.5) -- (4,3.5);
\draw (0,0) rectangle (4,4);
\draw[item]
    (0.0,0) rectangle +(1.3,1.5)
    (1.3,0) rectangle +(1.6,1.3)
    (2.9,0) rectangle +(1.1,1.2);
\draw[item]
    (0.0,1.5) rectangle +(0.4,1.2)
    (0.4,1.5) rectangle +(2,1)
    (2.4,1.5) rectangle +(1.6,0.8);
\draw[item]
    (0,2.7) rectangle +(2,0.8)
    (2,2.7) rectangle +(1,0.6);
\end{tikzpicture}

\caption{An example of shelf-based packing with 3 shelves.}
\label{fig:shelf-based-2}
\end{figure}
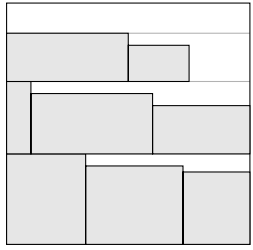

We prove that the AAR of $\hgapk$ is $T_k^{d-1}(1+\eps)$ by showing that
for some $P^* \in \guessShelves(\Icalhat, \delta)$,
we have $|P^*| \lessapprox T_k^{d-1}(1+\eps)\opt(\Ical)$
and $\chooseAndPack(\Icalhat, P^*, \delta)$ is not \Null.

We will now precisely define \emph{structured} bin packing
and state the main theorems on $\hgapk$.

\subsection{Structured Packing}

\begin{definition}[Slicing]
Slicing a 1D item $i$ is the operation of replacing it by items $i_1$ and $i_2$
such that $\size(i_1) + \size(i_2) = \size(i)$.

Slicing a rectangle $i$ using a vertical cut is the operation of replacing $i$
by two rectangles $i_1$ and $i_2$ where
$h(i) = h(i_1) = h(i_2)$ and $w(i) = w(i_1) + w(i_2)$.
Slicing $i$ using a horizontal cut is the operation of replacing $i$
by two rectangles $i_1$ and $i_2$ where
$w(i) = w(i_1) = w(i_2)$ and $h(i) = h(i_1) + h(i_2)$.
\end{definition}

\begin{definition}[Shelf-based $\delta$-fractional packing]
\label{defn:hgap:shelf-based-packing}
Let $\delta \in (0, 1)$ be a constant. Let $K$ be a set of rectangular items.
Items in $K_L \defeq \{i \in K: h(i) > \delta\}$ are said to be `$\delta$-large'
and items in $K_S \defeq K - K_L$ are said to be `$\delta$-small'.
A $\delta$-fractional bin packing of $K$ is defined to be a packing of $K$ into bins where
items in $K_L$ can be sliced (recursively) using vertical cuts only,
and items in $K_S$ can be sliced (recursively) using both horizontal and vertical cuts.

A \emph{shelf} is a rectangle of width 1 into which we can pack items
such that the bottom edge of each item in the shelf touches
the bottom edge of the shelf. A shelf can itself be packed into a bin.
A $\delta$-fractional bin packing of $K$ is said to be \emph{shelf-based} iff
(all slices of) all items in $K_L$ are packed into shelves,
the shelves are packed into the bins,
and items in $K_S$ are packed outside the shelves (and inside the bins).
Packing of items into a shelf $S$ is said to be \emph{tight} iff
the top edge of some item (or slice) in $S$ touches the top edge of $S$.
\end{definition}

\begin{definition}[Structured packing]
Let $K$ be a set of rectangles and
let $P$ be a packing of empty shelves into bins.
Let $H$ be the set of heights of shelves in $P$ (note that $H$ is not a multiset,
i.e., we only consider distinct heights of shelves).
Then $P$ is said to be \emph{structured} for $(K, \delta)$ iff
$|H| \le \ceildeltsq$ and each element in $H$ is
the height of some $\delta$-large item in $K$.

A shelf-based $\delta$-fractional packing of $K$ is said to be
\emph{structured} iff the shelves in the packing are structured for $(K, \delta)$.
Define $\sopt_{\delta}(K)$ to be the number of bins in the optimal structured
$\delta$-fractional packing of $K$.
\end{definition}

$\hgapk$ relies on the following key structural theorem.
We formally prove it in \cref{sec:hgap:struct} and give an outline of the proof here.
\begin{restatable}[Structural theorem]{theorem}{rthmHgapStruct}
\label{thm:hgap:struct}
Let $I$ be a set of $d$D items. Let $\delta \in (0, 1)$ be a constant. Then
$\sopt_{\delta}(\round(I)) < T_k^{d-1}(1+\delta)\optdbp(I) + \ceildeltsq + 1 + \delta$.
\end{restatable}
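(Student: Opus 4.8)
The plan is to start from an optimal bin packing of $I$ into $m \defeq \optdbp(I)$ bins and transform it, bin by bin, into a structured $\delta$-fractional shelf packing of $\round(I)$, controlling the blowup in the number of bins at each step. The first step is to apply the harmonic weighting function in the base dimensions only: for each item $i$, replace it by a rectangle of height $h(i) = \ell_d(i)$ and width $w(i) = \prod_{j=1}^{d-1} f_k(\ell_j(i))$, i.e.\ by $\round(i)$. By \Cref{thm:dff-pack} applied with $g_j = H_k$ for $j \in [d-1]$ and $g_d = \mathrm{id}$, the items $\{H_k(\ell_1(i)),\ldots,H_k(\ell_{d-1}(i)),\ell_d(i)\}$ packed in one original bin still fit in a bin; scaling the first $d-1$ coordinates by $T_k$ shows that the total \emph{base area} $w(i)$ of items originally in one bin is at most $T_k^{d-1}$, while heights are unchanged. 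So within each of the $m$ bins, the rectangles $\round(i)$ have total width at most $T_k^{d-1}$ and height at most $1$ in the vertical coordinate.

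The second step is to convert each such "$T_k^{d-1}$-wide, $1$-tall" collection into shelves of width $1$. Here I would slice the $\delta$-large rectangles (those with $h(i) > \delta$) using vertical cuts and pack them greedily into shelves using a Next-Fit-Decreasing-Height-style argument: sort $\delta$-large rectangles by height, open shelves of width $1$, and place (slices of) rectangles left to right. Because rectangles may be sliced vertically, every shelf except possibly the last at each height band is packed to width exactly $1$, so the total width $T_k^{d-1}$ per original bin turns into at most $T_k^{d-1}$ "worth" of full shelves plus a bounded number of partial ones; the heights of the shelves are heights of $\delta$-large items, as required for "structured". The $\delta$-small rectangles are handled separately: they can be sliced both ways, so they pack with essentially no waste into the leftover vertical space, contributing total area at most (total area of all rectangles) which is itself at most $T_k^{d-1} m$ since each bin contributes area $\le T_k^{d-1}$.

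The third step is the height-discretization that makes the packing \emph{structured} in the strong sense that only $\ceildeltsq$ distinct shelf heights appear. Standard linear-grouping: the $\delta$-large rectangles have heights in $(\delta, 1]$; sort them and partition into at most $\lceil 1/\delta^2 \rceil$ groups, rounding every height up to the largest in its group. Rounding up heights only increases the bin count by a controlled additive amount — one extra bin per group in the worst case, or more carefully, the rounded-up "large" items can be packed into $\sopt$-many bins plus $O(1/\delta^2)$ extra, because the total height removed from the tallest group is at most the height of one bin. Then I stack the full-width shelves into bins of height $1$ using Next-Fit, which wastes at most one shelf-height ($\le 1$) per bin. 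Summing: $m$ bins become $T_k^{d-1} m$ (from the area/width bound) times $(1+\delta)$ (from Next-Fit stacking waste being a $\delta$ fraction, since all shelf heights except possibly short ones exceed... ) plus an additive $\ceildeltsq + 1 + \delta$ absorbing the partial shelves, the grouping overhead, and rounding.

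The main obstacle I anticipate is bookkeeping the $(1+\delta)$ multiplicative factor precisely rather than as $O(\delta)$: the theorem claims the clean bound $T_k^{d-1}(1+\delta)\optdbp(I) + \ceildeltsq + 1 + \delta$, so the Next-Fit shelf-stacking waste and the linear-grouping waste must be arranged so that exactly one of them contributes the multiplicative $(1+\delta)$ and everything else is pushed into the small additive constant. Concretely, I expect the argument to be: pack the $\delta$-large shelves greedily into bins; since each such shelf has height $> \delta$ wait — shelves can be short — so instead the trick is that $\delta$-large \emph{items} have height $> \delta$, hence each shelf (whose height equals that of its tallest item) has height $> \delta$, so a bin of height $1$ holds all but a $<\delta$-height tail, giving the $(1+\delta)$ factor against total shelf-height, and total shelf-height is bounded via a charging argument back to $T_k^{d-1}\optdbp(I)$ using that each full shelf has width $1$ and area equal to its height. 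Making that charging tight, together with correctly counting the at-most-$\ceildeltsq$ distinct heights surviving, is the delicate part; the rest is routine slicing and Next-Fit accounting.
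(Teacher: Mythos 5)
Your proposal correctly identifies two of the paper's ingredients — the per-bin DFF area/width bound (via \cref{thm:dff-pack}) and linear grouping for height discretization — but the central claim on which you hang the multiplicative $(1+\delta)$ factor is false, and the approach cannot be repaired without introducing the key idea the paper actually uses.

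Specifically, you propose to shelve the $\delta$-large items, then pack shelves into bins with Next-Fit, and argue that because each shelf has height $>\delta$, Next-Fit loses only a $(1+\delta)$ factor against total shelf height. This is wrong: if every shelf has height $1/2 + \epsilon$ with $\delta = 1/2$, Next-Fit (or indeed any algorithm) puts one shelf per bin, so the number of bins is roughly twice the total shelf height, not $(1+\delta)$ times it. In general Next-Fit on 1D items in $(\delta, 1]$ only yields a factor-$2$ bound against total size, and the quantity one needs to control is not total shelf height but the 1D bin-packing optimum of the shelves, for which total height is merely a weak lower bound. No amount of bookkeeping fixes this; the plan needs a qualitatively different tool to upper-bound the 1D OPT by $T_k^{d-1}\optdbp(I)$ up to lower-order terms.

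What the paper does instead is quite different. After canonical shelving and linear grouping (\cref{defn:hgap:lingroup,thm:hgap:n-pivots,thm:hgap:pred-chain}), it formulates 1D bin packing of the rounded-down shelves $J\lo$ plus the sliceable small-item area as a configuration LP, $\LP(\Ihat)$. An extreme-point solution has at most $\ceildeltsq$ positive entries, giving $\opt(J\lo \cup \Ihat_S) \le \opt(\LP(\Ihat)) + \ceildeltsq$ (\cref{thm:hgap:opt-to-lp}). The crucial step — entirely absent from your sketch — is bounding $\opt(\LP(\Ihat))$: by strong duality this equals $\opt(\DLP(\Ihat))$, and the paper applies a monotonization transformation (\cref{trn:hgap:half-mono,thm:hgap:mono-feas}) to an optimal dual solution to obtain a monotone \dff{} $\eta$. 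Setting $p(i) = w(i)\eta(h(i))$, one shows $p(I) \ge \opt(\DLP(\Ihat))$ and, because both $H_k$ and $\eta$ are \dff{}s, $p(I) \le T_k^{d-1}\optdbp(I)$ by \cref{thm:dff-pack} (\cref{thm:hgap:p-vs-opt}). Also note that the $(1+\delta)$ multiplicative factor does not come from stacking waste at all; it comes from the linear-grouping error, since $J\hi \preceq J\lo \cup J_1\hi$ and $|J_1\hi| = q \le 1 + \delta\size(J) < 1 + \delta(1 + a(\Ihat_L))$, with $a(\Ihat_L) \le T_k^{d-1}\optdbp(I)$ again by the DFF bound (\cref{thm:hgap:sopt-le-optlo,thm:hgap:can-shelv-size}). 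Without the LP-duality/\dff{} construction, your argument cannot achieve the stated bound.
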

\begin{proof}[Proof outline]
Let $\Ihat \defeq \round(I)$. Let $\Ihat_L$ and $\Ihat_S$ be the
$\delta$-large and $\delta$-small items in $\Ihat$, respectively.

We give a simple greedy algorithm to pack $\Ihat_L$ into shelves.
Let $J$ be the shelves output by this algorithm.
We can treat $J$ as a 1BP instance,
and $\Ihat_S$ as a sliceable 1D item of size $a(\Ihat_S)$.
We prove that an optimal 1D bin packing of $J \cup \Ihat_S$ gives us
an optimal shelf-based $\delta$-fractional packing of $\Ihat$.

We use linear grouping by Lueker and Fernandez de la Vega~\cite{bp-aptas}.
We partition $J$ into linear groups of size $\floor{\delta\size(J)} + 1$ each.
Let $h_j$ be the height of the first 1D item in the $j\Th$ group.
Let $J\hi$ be the 1BP instance obtained by rounding up the height of each item
in the $j\Th$ group to $h_j$ for all $j$.
Then $J\hi$ contains at most $\ceildeltsq$ distinct sizes,
so the optimal packing of $J\hi \cup \Ihat_S$ gives us a
structured $\delta$-fractional packing of $\Ihat$.
Therefore, $\sopt_{\delta}(\Ihat) \le \opt(J\hi \cup \Ihat_S)$.
Let $J\lo$ be the 1BP instance obtained by rounding down the height of each item
in the $j\Th$ group to $h_{j+1}$ for all $j$.
We prove that $J\lo$ contains at most $\ceildeltsq-1$ distinct sizes and that
$\opt(J\hi \cup \Ihat_S) < \opt(J\lo \cup \Ihat_S) + \delta a(\Ihat_L) + (1 + \delta)$.

We model packing $J\lo \cup \Ihat_S$ as a linear program, denoted by $\LP(\Ihat)$,
that has at most $\ceildeltsq^{1/\delta}$ variables and $\ceildeltsq$ non-trivial constraints.
The optimum extreme point solution to $\LP(\Ihat)$, therefore,
has at most $\ceildeltsq$ positive entries,
so $\opt(J\lo \cup \Ihat_S) \le \opt(\LP(\Ihat)) + \ceildeltsq$.

We use techniques from Caprara~\cite{caprara2008} to obtain a
monotonic \dff{} $\eta$ from the optimal solution to the dual of $\LP(\Ihat)$.
For each item $i \in I$, we define $p(i) \defeq w(i)\eta(h(i))$
and prove that $p(I) \ge \opt(\LP(\Ihat))$.
By \cref{thm:dff-pack}, we get that $p(I) \le T_k^{d-1}\optdbp(I)$
and $a(\Ihat_L) \le T_k^{d-1}\optdbp(I)$.
Combining the above facts gives us an upper-bound on $\sopt_{\delta}(\Ihat)$
in terms of $\optdbp(I)$.
\end{proof}

\subsection{Subroutines}

\subsubsection{\texorpdfstring{$\guessShelves$}{guess-shelves}}

The algorithm $\guessShelves(\Icalhat, \delta)$ takes a set $\Icalhat$
of 2D itemsets and a constant $\delta \in (0, 1)$ as input.
We will design $\guessShelves$ so that it satisfies the following theorem.
\begin{theorem}
\label{thm:hgap:guess-shelves}
$\guessShelves(\Icalhat, \delta)$ returns all possible packings of empty shelves into at most
$|\Icalhat|$ bins such that each packing is structured for $(\flatten(\Icalhat), \delta)$.
$\guessShelves(\Icalhat, \delta)$ returns at most $T \defeq (N^{\ceildeltsq}+1)(n+1)^R$
packings, where $N \defeq |\flatten(\Icalhat)|$, $n \defeq |\Icalhat|$,
and $R \defeq \binom{\ceildeltsq + \ceil{1/\delta}-1}{\ceil{1/\delta}-1}
\le (1+\ceildeltsq)^{1/\delta}$. Its running time is $O(T)$.
\end{theorem}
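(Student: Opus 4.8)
The plan is to give an explicit construction of $\guessShelves(\Icalhat, \delta)$ that enumerates, in two nested stages, first the multiset of shelf heights and then the way those shelves are arranged into bins, and to argue that this enumeration is exhaustive among structured packings while keeping the count bounded by $T$. First I would extract the set $\mathcal{H}$ of candidate shelf heights: by the definition of \emph{structured}, every shelf height must equal $h(i)$ for some $\delta$-large rectangle $i \in \flatten(\Icalhat)$, and there are at most $N$ such rectangles, so $\mathcal{H}$ has at most $N$ elements and can be computed in $O(Nd)$ time (actually $O(N)$, since the rectangles in $\Icalhat$ are already 2D). A structured packing uses at most $\ceildeltsq$ \emph{distinct} heights, so the first stage enumerates every subset $H \subseteq \mathcal{H}$ with $|H| \le \ceildeltsq$; the number of such subsets is at most $\sum_{j=0}^{\ceildeltsq}\binom{N}{j} \le N^{\ceildeltsq}+1$ (the $+1$ absorbing the empty set / the $j=0$ term cleanly), which is exactly the first factor in $T$.

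Next, for each fixed choice of $H$, I would enumerate the ways to build a bin out of shelves whose heights lie in $H$. Since $|H| \le \ceildeltsq$ and a bin has height $1$, a single bin's shelf-configuration is determined by how many shelves of each height in $H$ it contains, subject to the heights summing to at most $1$; the number of such configurations is at most the number of ways to choose a multiset of size at most $\ceil{1/\delta}-1$ from $|H| \le \ceildeltsq$ types --- here I use that each $\delta$-large shelf has height $> \delta$, so a bin holds fewer than $1/\delta$ shelves --- which is $\binom{\ceildeltsq + \ceil{1/\delta}-1}{\ceil{1/\delta}-1} = R$. A full packing into at most $n := |\Icalhat|$ bins is then specified by how many bins of each of these at most $R$ configuration-types are used, with the total number of bins at most $n$; this is a multiset of size at most $n$ over at most $R$ types, bounded by $(n+1)^R$. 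Multiplying the two stages gives $(N^{\ceildeltsq}+1)(n+1)^R = T$ packings, and the claimed bound $R \le (1+\ceildeltsq)^{1/\delta}$ follows from the standard estimate $\binom{a+b}{b} \le (a+1)^b$ with $a = \ceildeltsq$, $b = \ceil{1/\delta}-1$.

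For correctness I must check the two halves of the theorem statement: that every packing returned is genuinely structured for $(\flatten(\Icalhat), \delta)$, and that \emph{all} such structured packings into at most $|\Icalhat|$ bins are returned. The first is immediate from the construction --- we only ever use heights drawn from $\mathcal{H}$, we cap the number of distinct heights at $\ceildeltsq$, we never overfill a bin vertically, and we never use more than $n$ bins --- so each output packing satisfies the definition of structured verbatim. The second direction is the place to be slightly careful: given an arbitrary structured packing $P$ of empty shelves into $\le |\Icalhat|$ bins, its set of distinct shelf heights is some $H$ with $|H| \le \ceildeltsq$ and $H \subseteq \mathcal{H}$ by definition, hence $H$ is enumerated in stage one; and then $P$, viewed bin-by-bin, decomposes into configuration-types all of which are enumerated in stage two, so $P$ itself appears in the output. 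The running time is $O(T)$ because each of the $T$ packings is emitted in amortized constant (or at most $O(\poly(\ceildeltsq, R))$, absorbed into $O(T)$) time once the $O(N)$ preprocessing to compute $\mathcal{H}$ is done.

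I expect the only real subtlety --- the ``hard part,'' such as it is --- to be bookkeeping the counting so that it lands exactly on the stated bound $T$ rather than something merely of the same order: in particular getting the $+1$ in $N^{\ceildeltsq}+1$ from treating the $\le$ in $|H|\le\ceildeltsq$ (and the possibility $H=\emptyset$) correctly, and getting the exponent in $(n+1)^R$ right by recognizing that a packing into $\le n$ bins over $R$ configuration-types is a weak composition of a number $\le n$ into $R$ parts, which has at most $(n+1)^R$ possibilities. Everything else is a direct unwinding of the definition of \emph{structured} from the previous subsection.
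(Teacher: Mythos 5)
Your proposal is correct and follows the same two-stage enumeration that the paper uses: first guess the set of at most $\ceildeltsq$ distinct shelf heights from the $\le N$ candidate heights (giving the $N^{\ceildeltsq}+1$ factor), then observe that each bin holds at most $\ceil{1/\delta}-1$ shelves since every shelf height exceeds $\delta$, so there are at most $R$ bin configurations, and a packing into $\le n$ bins is a multiset of size $\le n$ over those configurations (giving the $(n+1)^R$ factor). The counting, the appeal to the weak-composition bound of \cref{thm:nos-sum}, and the correctness/exhaustiveness argument all match the paper's reasoning.
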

$\guessShelves$ works by first guessing at most $\ceildeltsq$ distinct heights of shelves.
It then enumerates all configurations, i.e.,
different ways in which shelves can be packed into a bin.
It then guesses the configurations in a bin packing of the shelves.
$\guessShelves$ can be easily implemented using standard techniques.
For the sake of completeness, we give a more precise description of $\guessShelves$
and prove \cref{thm:hgap:guess-shelves} in \cref{sec:hgap:guess-shelves}.

\subsubsection{\texorpdfstring{$\chooseAndPack$}{choose-and-pack}}

$\chooseAndPack(\Icalhat, P, \delta)$ takes as input a set $\Icalhat$ of 2D itemsets,
a constant $\delta \in (0, 1)$, and a bin packing $P$ of empty shelves that is
structured for $(\flatten(\Icalhat), \delta)$.
It tries to pack an assortment of $\Icalhat$ into the shelves in $P$.

$\chooseAndPack$ works by rounding up the width of all
$\delta$-large items in $\Icalhat$ to a multiple of $1/n$.
This would increase the number of shelves required by 1, so it adds another empty shelf.
It then uses dynamic programming to pack an assortment into the shelves,
such that the area of the chosen $\delta$-small items is minimum.
This is done by maintaining a dynamic programming table that keeps track of
the number of itemsets considered so far and
the remaining space in shelves of each type.
If it is not possible to pack the items into the shelves,
then $\chooseAndPack$ outputs \Null.
In \cref{sec:hgap:choose-and-pack}, we give the details of this algorithm
and formally prove the following theorems:

\begin{restatable}{theorem}{rthmCAPNotNull}
\label{thm:hgap:cap-not-null}
If there exists an assortment $\Khat$ of $\Icalhat$ having
a structured $\delta$-fractional bin packing $P$,
then $\chooseAndPack(\Icalhat, P, \delta)$ does not output \Null.
\end{restatable}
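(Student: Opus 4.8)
The plan is to take the assortment $\Khat$ and the structured $\delta$-fractional packing $P$ promised by the hypothesis and exhibit, from them, an explicit assortment-with-slicing that the dynamic program inside $\chooseAndPack$ must accept, so that $\chooseAndPack$ cannot return \Null. Recall the three things $\chooseAndPack(\Icalhat, P, \delta)$ does: it rounds the width of every $\delta$-large item of $\flatten(\Icalhat)$ up to a multiple of $1/n$ (with $n \defeq |\Icalhat|$), it adjoins one extra empty shelf placed in a fresh bin, and it runs a DP ranging over all assortments of $\Icalhat$ whose rounded $\delta$-large parts can be sliced (vertical cuts only) into the shelves of $P$ together with the extra shelf, returning one that minimizes the total area of the chosen $\delta$-small items; it outputs \Null only when no assortment passes this test, together with the concluding check that the chosen $\delta$-small area fits in the non-shelf region of the bins.

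First I would read off what the given packing of $\Khat$ into $P$ provides. Each item of $\Khat_L$ is sliced by vertical cuts into pieces lying in shelves of $P$ of height at least its own, so every item of $\Khat_L$ has height at most $H$, the tallest shelf height of $P$ (if $P$ has no shelf, then $\Khat_L$ is empty and there is nothing to do for the $\delta$-large part). The $\delta$-small items of $\Khat$, being sliceable both ways, fill the part of the bins of $P$ not covered by shelves, hence $a(\Khat_S) \le A$, where $A$ is the total non-shelf area of $P$.

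Next I would build a rounded, DP-feasible copy of $\Khat$. Give the extra shelf height $H$ (a height already occurring in $P$, hence admissible, and at least the height of every item of $\Khat_L$). Keep every original slice of every $i \in \Khat_L$ where it was, and additionally place a single new slice of $i$, of width $\ceil{n\,w(i)}/n - w(i) < 1/n$, into the extra shelf; then the slices of $i$ total exactly its rounded width, each original shelf is untouched and so still carries total slice-width at most $1$, and the extra shelf carries total width $\sum_{i \in \Khat_L}\bigl(\ceil{n\,w(i)}/n - w(i)\bigr) < n\cdot(1/n) = 1$, which fits in its width-$1$ slot (order within a shelf is irrelevant, items being bottom-aligned). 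So the rounded $\delta$-large part of $\Khat$ is sliceable into the shelves of $P$ plus the extra shelf, i.e.\ $\Khat$ is a feasible competitor in the DP.

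Finally I would handle the $\delta$-small items. Since $\Khat$ passes the feasibility test, the DP returns an assortment $K'$ with $a(K'_S) \le a(\Khat_S)$. Putting the extra shelf in a fresh bin changes the available non-shelf area from $A$ to $A + (1 - H) \ge A$, so this area is at least $A \ge a(\Khat_S) \ge a(K'_S)$ and the chosen $\delta$-small items fit. Hence $\chooseAndPack$ outputs a genuine bin packing rather than \Null. The step I expect to need the most care is matching this to the exact DP of \cref{sec:hgap:choose-and-pack} --- that ``sliceable into the shelves of $P$ plus one extra shelf'' is precisely the predicate the DP decides and that its objective is exactly the $\delta$-small area --- after which the claim reduces to the counting bound $\sum_{i \in \Khat_L}(\ceil{n\,w(i)}/n - w(i)) < 1$ and the elementary area comparison above.
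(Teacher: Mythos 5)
Your proposal is correct and matches the paper's own proof: both round up each $\delta$-large width to a multiple of $1/n$, slice each rounded item into its original width plus a remainder of width $<1/n$, pack the originals exactly as in $P$, and collect the at-most-$n$ remainders (total width $<1$) into the single extra shelf of height $h_1$, thereby exhibiting a feasible assortment for $\simpleChooseAndPack$. Your write-up is somewhat more explicit than the paper's (spelling out the edge case of no shelves and the final $\delta$-small-area comparison $g(n,\vec u) \le a(\Khat_S) \le A \le A + (1-h_1)$, which the paper leaves implicit), but it is the same argument.
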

\begin{restatable}{theorem}{rthmCAPCorrect}
\label{thm:hgap:cap-correct}
If the output of $\chooseAndPack(\Icalhat, P, \delta)$ is not \Null,
then the output $\Pbar$ is a shelf-based $\delta$-fractional packing of some assortment
of $\Icalhat$ such that $|\Pbar| \le |P| + 1$
and the distinct shelf heights in $\Pbar$ are the same as that in $P$.
\end{restatable}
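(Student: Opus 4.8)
The plan is to unpack the construction of $\chooseAndPack$ step by step and verify that each modification it makes to the instance and to the target shelf-packing $P$ preserves the three claimed invariants: that $\Pbar$ is a shelf-based $\delta$-fractional packing of a genuine assortment of $\Icalhat$, that $|\Pbar| \le |P|+1$, and that the distinct shelf heights agree with those of $P$. First I would fix notation: $\Icalhat$ is a set of $2$D itemsets, $\flatten(\Icalhat)$ its union, $n \defeq |\Icalhat|$, and the $\delta$-large / $\delta$-small split is by height (\cref{defn:hgap:shelf-based-packing}). I would recall that $\chooseAndPack$ first rounds up the width of every $\delta$-large item in $\flatten(\Icalhat)$ to the next multiple of $1/n$, then adds one extra empty shelf to $P$ to obtain a packing $P'$ of empty shelves with $|P'| \le |P|+1$ and with exactly the same distinct shelf heights as $P$ (adding a shelf whose height equals one of the heights already present, or more carefully, a shelf whose height is the maximum $\delta$-large height in $\flatten(\Icalhat)$, which is already among the heights of $P$ since $P$ is structured for $(\flatten(\Icalhat),\delta)$). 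The DP then selects one item per itemset and assigns the $\delta$-large selected items (with rounded-up widths) to shelves of $P'$ grouped by height, minimizing the total area of the selected $\delta$-small items; $\delta$-small items are placed afterwards by an NFDH-type argument in the leftover bin space. If the DP reports infeasibility, $\chooseAndPack$ returns $\Null$, so we may assume it succeeds.

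The core of the argument is the width-rounding accounting. For each distinct $\delta$-large height $h \in H$, let $W_h$ be the total width capacity of shelves of height $h$ in $P$ (each shelf has width $1$, so $W_h$ is the number of such shelves). The DP assigns $\delta$-large items of height at most... — more precisely, each $\delta$-large item is assigned to a shelf whose height is $\ge$ its own height, per the structured-packing rule — so the total rounded width assigned to shelves of height $h$ is at most $W_h$. Since each of the at most $n$ selected $\delta$-large items had its width increased by strictly less than $1/n$, the total width increase is strictly less than $1$; hence the selected $\delta$-large items with their \emph{original} widths fit into shelves of total width capacity $W_h + \ldots$ minus at most one unit, i.e. into the shelves of $P$ together with at most one extra shelf — and that extra shelf is exactly the one $\chooseAndPack$ added. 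Summing the bookkeeping across heights (the increases are globally bounded by $1$, not per-height, but since each extra unit of width can be absorbed by one extra shelf of the appropriate height, and we only added one extra shelf, I must argue the rounding is done so that the overflow lands in a single height class — this is where the rounding-to-multiples-of-$1/n$ and the single added shelf of the maximal $\delta$-large height interact), I get that all selected $\delta$-large items, unrounded, pack into the shelves of $P'$, and these shelves live in $|P'| \le |P|+1$ bins. The $\delta$-small items, sliced using both horizontal and vertical cuts as \cref{defn:hgap:shelf-based-packing} permits, are then packed into the remaining free space of these bins; since the DP minimized their total area and we already know (from \cref{thm:hgap:cap-not-null} and the structural guarantee driving $\guessShelves$) that \emph{some} assortment's $\delta$-small items fit, the chosen assortment's $\delta$-small area is no larger and hence also fits, using that sliceable $\delta$-small area can always be packed into any free region of at least that area when slicing is unrestricted.

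The distinct-shelf-heights claim is then immediate: $P'$ adds only a shelf whose height was already present in $P$, the DP never creates shelves of new heights (it only assigns items to the existing shelves of $P'$), and $\inflate$ is not invoked inside $\chooseAndPack$, so the heights in $\Pbar$ are exactly $H$. The $|\Pbar| \le |P|+1$ claim follows since $\Pbar$ uses exactly the bins of $P'$.

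The main obstacle I anticipate is the width-rounding overflow argument: showing that rounding up to the next multiple of $1/n$ and adding a \emph{single} shelf suffices, rather than one shelf per distinct $\delta$-large height. The clean way is to observe that within each bin, the shelves of a given height occupy a contiguous stack, the per-shelf width slack from rounding down to the true widths accumulates, and because at most $n$ items are rounded and each gains $<1/n$, the total deficit is $<1$ width-unit across the \emph{entire} instance; one fresh shelf of width $1$ (of height equal to the tallest $\delta$-large item, which $P$ already contains a shelf of) can host whatever got displaced. Making this displacement argument rigorous — specifying exactly which items move to the new shelf and checking the new shelf's height dominates their heights — is the delicate step; everything else is routine bookkeeping given \cref{defn:hgap:shelf-based-packing} and the properties of the DP described in \cref{sec:hgap:choose-and-pack}.
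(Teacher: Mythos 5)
The paper's proof of this theorem is a one-liner---``Follows from the definition of $\simpleChooseAndPack$''---and deliberately so: the statement is a routine ``whatever the algorithm outputs is valid'' check. Your proposal reaches the right conclusions on the shelf-heights claim and on $|\Pbar|\le|P|+1$, but the middle of your argument imports machinery that belongs to a different theorem and introduces a real logical problem.

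The ``width-rounding accounting'' paragraph is the proof of \cref{thm:hgap:cap-not-null}, not of \cref{thm:hgap:cap-correct}. In \cref{thm:hgap:cap-not-null} one must show that a \emph{hypothetical} good assortment, after rounding, still fits in $P$ plus one extra shelf; that is where the slicing-off-a-strip-of-width-$<1/n$ and the single extra shelf of height $h_1$ actually do work. Here, by contrast, the DP in $\simpleChooseAndPack$ has already produced a packing of the \emph{rounded} assortment into the shelves of the augmented structure, and the area test on $\delta$-small items is performed \emph{inside} the DP (the algorithm returns \Null{} precisely when $g(n,\vec u)$ exceeds the available non-shelf area). So there is nothing to ``absorb'' and no overflow to account for: to pass from the rounded assortment of $\Icalhat'$ back to the corresponding assortment of $\Icalhat$, you simply shrink each $\delta$-large item to its original width in place. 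Widths only decrease, heights are untouched, so the packing remains a shelf-based $\delta$-fractional packing. Your attempt to re-derive this by summing per-height deficits and reassigning displaced items is both unnecessary and never closes cleanly (you yourself flag that the per-height bookkeeping ``interacts'' with the single added shelf in a way you don't pin down).

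The second problem is the appeal to \cref{thm:hgap:cap-not-null} and ``the structural guarantee driving $\guessShelves$'' to conclude that the chosen $\delta$-small items fit. \Cref{thm:hgap:cap-correct} is an unconditional correctness statement (given a non-\Null{} output); it must not assume the hypothesis of \cref{thm:hgap:cap-not-null} (that some assortment admits a structured $\delta$-fractional packing into $P$). Using it here makes the theorem weaker than stated and risks circularity, since \cref{thm:hgap:hgap} uses both theorems side by side. The fit of the small items is guaranteed directly by the algorithm's own feasibility check, with no external existence hypothesis. Once you drop the misplaced accounting argument and the appeal to \cref{thm:hgap:cap-not-null}, what remains is exactly the paper's two-line observation.
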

\begin{restatable}{theorem}{rthmCAPTime}
\label{thm:hgap:cap-time}
$\chooseAndPack(\Icalhat, P, \delta)$ runs in $O(Nn^{2\ceildeltsq})$ time.
Here $N \defeq |\flatten(\Icalhat)|$, $n \defeq |\Icalhat|$.
\end{restatable}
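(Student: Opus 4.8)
The plan is to analyze the dynamic program described in the paragraph preceding the theorem, tracking the size of the DP table and the cost of filling each entry. First I would set up the state space. After rounding the widths of the $\delta$-large items up to multiples of $1/n$, each shelf of a given height $h \in H$ has its free width expressible as an integer multiple of $1/n$ in $\{0, 1, \ldots, n\}$; since there are at most $\ceildeltsq$ distinct shelf heights, the ``remaining-space'' portion of a DP state is a vector in $\{0, 1, \ldots, n\}^{|H|}$, of which there are at most $(n+1)^{\ceildeltsq}$ possibilities. Adding the index of how many itemsets have been processed so far (at most $n+1$ values) gives a table with $O(n \cdot (n+1)^{\ceildeltsq}) = O(n^{\ceildeltsq + 1})$ entries; I would fold the extra factor of $n$ and the $+1$ in the exponent into the stated bound $O(N n^{2\ceildeltsq})$ (using $\ceildeltsq \ge 1$, so $\ceildeltsq + 1 \le 2\ceildeltsq$, and $N \ge n$).

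Next I would bound the work per transition. Processing itemset $I_j$ means, for each current DP state, trying each of the $|I_j|$ items in that itemset: a $\delta$-small item contributes its area and doesn't consume shelf width, while a $\delta$-large item (of some height in $H$, after rounding) consumes a known integer number of $1/n$-width units from a shelf of matching height, if such space remains. Each such trial is $O(1)$ work (a comparison and an addition, updating the minimized $\delta$-small area), so processing itemset $I_j$ costs $O(|I_j| \cdot (n+1)^{\ceildeltsq})$. Summing over all itemsets, $\sum_j |I_j| = N$, so the total transition cost is $O(N (n+1)^{\ceildeltsq})$. Combined with table initialization, the running time is $O(N n^{\ceildeltsq} + n^{\ceildeltsq+1})$, which is absorbed into $O(N n^{2\ceildeltsq})$ since $N \ge n$ and $\ceildeltsq \ge 1$.

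The main obstacle I expect is not the arithmetic but making the state space claim airtight: I need to argue that after the width-rounding step, the free width in every shelf throughout the DP remains an integer multiple of $1/n$, so that the state is genuinely discrete and bounded by $(n+1)^{\ceildeltsq}$. This hinges on every $\delta$-large item's (rounded) width being a multiple of $1/n$ and on the DP only ever subtracting such widths — a fact established in the construction of $\chooseAndPack$ in \cref{sec:hgap:choose-and-pack}. The $\delta$-small items need separate care: they are packed by area (not by discrete width slots), so I would note that the DP does not track their placement at all during the shelf-filling recursion — it only accumulates their total area as the quantity to minimize — and their actual placement into leftover bin space is handled afterward, consistent with \Cref{thm:hgap:cap-correct}. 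Once the discreteness of the state is nailed down, the counting and the per-step $O(1)$ bound are routine, and the theorem follows.
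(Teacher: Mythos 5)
There is a genuine gap in your state-space analysis, and it produces a bound that is ``too good'' by a factor of $n^{\ceildeltsq}$ — a red flag you dismissed too quickly by loosening your exponent to match the statement.

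You claim the remaining-space portion of the DP state is a vector in $\{0, 1, \ldots, n\}^{|H|}$, i.e., one counter per distinct height with each counter ranging up to $n$ (i.e., a remaining width of at most 1). But the packing $P$ produced by $\guessShelves$ can contain \emph{many} shelves of the same height $h_r$ — there are up to $n$ bins, each holding up to $\smallceil{1/\delta}-1$ shelves, so there can be on the order of $n$ shelves of a single height. The DP in \cref{sec:hgap:choose-and-pack} handles this by \emph{aggregating} all shelves of height $h_r$ into one meta-shelf whose remaining width is tracked by $u_r / n$; since the aggregate width starts at $n_r$ (the number of such shelves) and $n_r$ is $\Theta(n)$ in the worst case, $u_r$ ranges over $\{0, 1, \ldots, n^2\}$, not $\{0, 1, \ldots, n\}$. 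This is why the paper defines $\vec{u} \in \{0, 1, \ldots, n^2\}^t$. Your description (``each shelf of a given height $h \in H$ has its free width $\ldots$ in $\{0, 1, \ldots, n\}$'') would only give a polynomial-sized DP table if there were exactly one shelf per height, and it does not capture the algorithm: a DP with your state space could not correctly decide where items go when multiple shelves of the same height exist.

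With the correct state space of size $(n^2+1)^t$ (for $t \le \ceildeltsq$), the table has $O(n \cdot n^{2\ceildeltsq})$ entries, and summing the per-itemset work $\sum_j |I_j| \cdot (n^2+1)^t = N \cdot O(n^{2\ceildeltsq})$ gives the stated bound $O(N n^{2\ceildeltsq})$ directly and tightly, with no need for the artificial slack $\ceildeltsq + 1 \le 2\ceildeltsq$. Your final answer happens to be numerically consistent with the theorem only because you rounded your (incorrect) bound up; the underlying analysis of the DP is wrong and should be fixed by recognizing the per-height aggregation and the resulting $\{0,\ldots,n^2\}$ range for each $u_r$.
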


\subsubsection{\texorpdfstring{$\inflate$}{inflate}}

For a set $I$ of $d$D items, $\inflate$ is an algorithm that
converts a shelf-based packing of $\round(I)$ into a packing of $I$
having roughly the same number of bins.

For a $d$D item $i$, $\btype(i)$ (called \emph{base type}) is defined to be
a $(d-1)$-dimensional vector whose $j\Th$ component is $\type_k(\ell_j(i))$.
Roughly, $\inflate(P)$ works as follows:
It first slightly modifies the packing $P$ so that
items of different base types are in different shelves and
$\delta$-small items are no longer sliced using horizontal cuts.
Then it converts each 2D shelf to a $d$D shelf of the same height using $\hdhkunitHyp$
(a $d$D shelf is a cuboid where the first $d-1$ dimensions are equal to 1).

In \cref{sec:hgap:inflate}, we formally describe $\inflate$
and prove the following theorem.
\begin{theorem}
\label{thm:hgap:inflate}
Let $I$ be a set of $d$D items having $Q$ distinct base types
(there can be at most $k^{d-1}$ distinct base types, so $Q \le k^{d-1}$).
Let $P$ be a shelf-based $\delta$-fractional packing of $\round(I)$
where shelves have $t$ distinct heights.
Then $\inflate(P)$ returns a packing of $I$ into less than
$|P|/(1-\delta) + t(Q-1) + 1 + \delta Q/(1-\delta)$ bins
in $O(|I|d\log|I|)$ time.
\end{theorem}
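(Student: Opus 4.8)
The goal is to prove Theorem~\ref{thm:hgap:inflate}, which converts a shelf-based $\delta$-fractional packing $P$ of $\round(I)$ into a genuine packing of the $d$D items $I$, losing only a bounded number of bins. The plan is to carry out the three steps sketched just before the theorem statement and then bound the bin count and running time.

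\textbf{Step 1: separate base types within shelves.} I would first look at a single shelf $S$ in $P$. The (slices of) $\delta$-large 2D items packed in $S$ come from $d$D items of possibly many base types; there are at most $Q \le k^{d-1}$ of them. For each shelf $S$ and each base type $b$ present in $S$, collect the slices of items with base type $b$ and push them to a contiguous sub-strip of $S$; this is possible because all slices in $S$ have width summing to at most $1$ and we only reorder horizontally. Now cut $S$ at the $\le Q-1$ boundaries between consecutive base-type blocks, turning $S$ into at most $Q$ sub-shelves, each containing slices of a single base type and all of the same height as $S$. Summed over all shelves, if $P$ has $t$ distinct shelf heights, a careful accounting (grouping shelves by height and noting that within one height class the extra shelves created are at most $Q-1$ \emph{per bin}, or more simply $Q-1$ per original shelf) yields the $t(Q-1)$ additive term. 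I should be careful here: the theorem's bound has $t(Q-1)$, not (number of shelves)$\times(Q-1)$, so the separation must be done globally across all shelves of a common height, repacking the single-base-type sub-shelves of equal height together so that only $Q-1$ partially-filled sub-shelves per height remain. This repacking is legitimate because sub-shelves of the same height are interchangeable as far as the bin-level shelf packing is concerned.

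\textbf{Step 2: de-slice the $\delta$-small items.} The $\delta$-small 2D items were allowed horizontal cuts; I need to undo these so each $\delta$-small item sits in one place with its full height. Since $\delta$-small items live outside the shelves, inside free regions of the bins, I would apply an NFDH-style or simple greedy repacking of the $\delta$-small items: because each has height $\le \delta$, standard arguments show they can be repacked into strips of height $\le \delta$ stacked on top of the shelf structure, inflating the total bin count by a factor $1/(1-\delta)$ plus a small additive constant. This is the source of the $|P|/(1-\delta)$ factor and the $\delta Q/(1-\delta)$ and $+1$ additive slack. I would invoke the fact that $a(\Ihat_S)$-type area bounds carry over, and that de-slicing an area-$A$ set of height-$\le\delta$ rectangles costs at most $A/(1-\delta)$-ish additional shelf-height.

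\textbf{Step 3: lift 2D shelves to $d$D shelves via $\hdhkunit$.} Now every shelf (or sub-shelf) contains slices of $d$D items of a single base type $b$, and within a shelf the widths $w(i) = \prod_{j=1}^{d-1} f_k(\ell_j(i))$ sum to at most $1$. Here I use \cref{prop:hdhkunit}: feed the sequence of $d$D items (restricted to this shelf) to $\hdhkunit^{[b]}$ — the hypothesis $\vol(f_k(I-\{\last(I)\})) < 1$ is exactly the statement that the widths of all-but-last sum to $<1$, which holds since we only packed items whose total $w$ fits in the width-$1$ shelf — and it returns a packing into a single $d$D shelf (a cuboid with the first $d-1$ sides equal to $1$) of the same height. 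Stacking these $d$D shelves exactly as the 2D shelves were stacked in $P$ gives a valid $d$D bin packing of $I$. One subtlety: if items were sliced by vertical cuts in the 2D packing, the corresponding $d$D items must not actually be sliced — I'd argue that vertical slicing of $\round(i)$ corresponds to splitting $w(i)$, and since $\hdhkunit$ only needs the sequence and the width bound, I can instead treat each originally-sliced item as contributing to whichever single shelf it can be assigned to after a standard "unslice into the shelf whose remaining capacity suffices, else open one more" argument — this is folded into the additive constants.

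\textbf{Main obstacle and running time.} The hard part is the bookkeeping in Step~1 to get exactly the $t(Q-1)$ term rather than something proportional to the number of shelves: this requires the global repacking of equal-height single-base-type sub-shelves and a clean argument that only $Q-1$ under-full sub-shelves survive per distinct height, contributing $\le t(Q-1)$ extra shelves, each of which fits in its own bin in the worst case (or is absorbed into existing bins — I'd take the crude "own bin" bound to match the statement). Combining: Step~2 gives $|P|/(1-\delta) + \delta Q/(1-\delta) + 1$ bins of shelf-structure, Step~1 adds $t(Q-1)$, and Step~3 adds nothing to the bin count. The running time is dominated by the $\hdhkunit$ calls; since the shelves partition $\round(I)$, the total work is $\sum_S O(|S| d \log |S|) = O(|I| d \log |I|)$, plus $O(|I|d)$ for the sorting/separating bookkeeping, giving the claimed $O(|I|d\log|I|)$ bound.
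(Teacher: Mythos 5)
Your proposal follows the same three-step outline the paper uses — (1) separate base types within shelves of a common height, paying $t(Q-1)$; (2) convert horizontal slicing of $\delta$-small items into vertical-only slicing, paying the $1/(1-\delta)$ multiplicative and $1+\delta Q/(1-\delta)$ additive terms; (3) replace each 2D shelf by a $d$D shelf via $\hdhkunit$. Steps 1 and 2 are described loosely but the arithmetic you do for the bin count is correct and matches the paper's bound. The real gap is in Step 3.

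After Step 2 the $\delta$-large items (and the re-shelved $\delta$-small items) may be vertically sliced into \emph{arbitrarily many} pieces spread over several shelves, since $\delta$-fractional packings allow recursive vertical cuts. You acknowledge the need to ``un-slice'' before invoking $\hdhkunit$, but your proposed remedy — ``unslice into the shelf whose remaining capacity suffices, else open one more,'' claiming the cost is ``folded into the additive constants'' — would require justifying an extra additive term that does not appear in the theorem's bound, and you do not bound how many new shelves that greedy un-slicing might create. The paper's Step 3 costs \emph{zero} extra bins, and it achieves this by first re-shelving the items of each base type via $\canShelv$; \cref{thm:hgap:can-shelv-pred} guarantees the canonical shelving is a predecessor of the existing shelving (so it fits in the same shelves without adding any), and the crucial structural property of canonical shelving is that every item is sliced at most once, so each shelf contains at most one leading slice. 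That is exactly what lets the paper define $R_j$ (unsliced items in $S_j$ plus the one item whose leading slice is there), place the sliced item last in the sequence, and verify $w(R_j - \last(R_j)) < 1$ — the precondition of $\hdhkunit$. Without the canonical-reshelving step and \cref{thm:hgap:can-shelv-pred}, your Step 3 neither satisfies $\hdhkunit$'s hypothesis nor preserves the stated bin bound, so the proof as written does not close.

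Secondarily, note that Step 2 is not simply ``undo horizontal cuts'' in place: the paper packs the $\delta$-small items of each base type $q$ into fresh canonical shelves (total shelf height $H < a(\Ihat_S) + Q\delta$) and Next-Fits these shelves into the wasted vertical space of the $m$ bins, where at most height $\delta$ per bin is unusable; that is precisely where the $\delta(Q+m)/(1-\delta)$ term is harvested. Your sketch gets to the right numbers but should be explicit that it is canonical shelving plus a Next-Fit packing of the resulting small-item shelves, since the same $\canShelv$ machinery is then reused in Step 3.
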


Now that we have mentioned the guarantees of all the subroutines used by $\hgapk$,
we can prove the correctness and running time of $\hgapk$.

\subsection{Correctness and Running Time of \texorpdfstring{$\hgapk$}{HGaP}}

\begin{theorem}
\label{thm:hgap:hgap}
The number of bins used by $\hgapk(\Ical, \eps)$ to pack $\Ical$ is less than
\[ T_k^{d-1}(1+\eps)\optdmcbp(\Ical)
    + \ceil{\left(\frac{2}{\eps}+1\right)^2}\left(Q + \frac{\eps}{2}\right)
    + 3 + (Q+3)\frac{\eps}{2}. \]
Here $Q \le k^{d-1}$ is the number of distinct base types in $\flatten(\Ical)$.
\end{theorem}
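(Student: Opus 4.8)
The plan is to chain together the guarantees of the three subroutines ($\guessShelves$, $\chooseAndPack$, $\inflate$) together with the structural theorem (\cref{thm:hgap:struct}), mirroring the informal outline already given after \cref{algo:hgap}. First I would fix an optimal $d$MCBP solution: let $K^*$ be an assortment of $\Ical$ with $\optdbp(K^*) = \optdmcbp(\Ical)$, and let $\Khat^* \defeq \round(K^*)$ be the corresponding assortment of $\Icalhat$. Applying \cref{thm:hgap:struct} to the $d$D item set $K^*$ gives a structured $\delta$-fractional packing of $\Khat^*$ using fewer than $T_k^{d-1}(1+\delta)\optdmcbp(\Ical) + \ceildeltsq + 1 + \delta$ bins; call this packing $P^{\#}$ and let $P^*$ be the packing of empty shelves underlying it. Since $P^{\#}$ uses at most $|\Icalhat|$ bins (each bin holds at least one chosen item, so the number of bins is at most $n$ — and if it exceeds $n$ we are already done trivially), \cref{thm:hgap:guess-shelves} guarantees $P^* \in \guessShelves(\Icalhat, \delta)$.

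Next I would argue that the branch $P = P^*$ of the \textbf{for} loop in \cref{algo:hgap} produces a good $\Pbar$. Because $\Khat^*$ is an assortment of $\Icalhat$ admitting the structured $\delta$-fractional packing $P^{\#}$ whose empty-shelf skeleton is $P^*$, \cref{thm:hgap:cap-not-null} tells us $\chooseAndPack(\Icalhat, P^*, \delta)$ is not \Null. By \cref{thm:hgap:cap-correct}, its output $\Pbar^*$ is a shelf-based $\delta$-fractional packing of some assortment of $\Icalhat$ with $|\Pbar^*| \le |P^*| + 1$ and with the same set of distinct shelf heights as $P^*$ — in particular at most $\ceildeltsq$ distinct heights. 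Since $P_{\best}$ is chosen to minimize $|\Pbar|$ over all loop iterations, $|P_{\best}| \le |\Pbar^*| \le |P^*| + 1 < T_k^{d-1}(1+\delta)\optdmcbp(\Ical) + \ceildeltsq + 2 + \delta$, and $P_{\best}$ is a shelf-based $\delta$-fractional packing of an assortment of $\Icalhat = \round(\Ical)$ with at most $t \le \ceildeltsq$ distinct shelf heights.

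Finally I would apply \cref{thm:hgap:inflate}. The assortment packed by $P_{\best}$ corresponds to an assortment $K$ of the original $d$D itemsets $\Ical$ with $\round(K)$ being exactly the set packed by $P_{\best}$; $K$ has $Q$ distinct base types. Thus $\inflate(P_{\best})$ is a genuine bin packing of $K$ (hence of $\Ical$) using fewer than
\[ \frac{|P_{\best}|}{1-\delta} + t(Q-1) + 1 + \frac{\delta Q}{1-\delta} \]
bins. Substituting the bound on $|P_{\best}|$, using $t \le \ceildeltsq$, and then plugging in $\delta = \eps/(2+\eps)$ (so that $1/(1-\delta) = 1 + \eps/2$ and $1+\delta \le 1+\eps/2$, giving $(1+\delta)/(1-\delta) = 1+\eps$), the leading term becomes $T_k^{d-1}(1+\eps)\optdmcbp(\Ical)$ and the additive terms collect into $\ceildeltsq(Q + \eps/2) + 3 + (Q+3)\eps/2$. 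The main obstacle is purely bookkeeping: one must verify that with $\delta = \eps/(2+\eps)$ the three additive contributions — $\ceildeltsq + 2 + \delta$ scaled by $1/(1-\delta)$, the term $\ceildeltsq(Q-1)$, and $1 + \delta Q/(1-\delta)$ — sum to at most the stated expression $\ceil{(2/\eps+1)^2}(Q+\eps/2) + 3 + (Q+3)\eps/2$; this needs a careful but routine computation, noting that $\ceildeltsq = \ceil{1/\delta^2} = \ceil{(2/\eps+1)^2}$ and that $\delta/(1-\delta) = \eps/2$. The running-time claim (not part of this theorem's statement but asserted in the introduction) would follow by multiplying the bound $T$ on the number of $\guessShelves$ iterations from \cref{thm:hgap:guess-shelves} by the per-iteration cost of $\chooseAndPack$ from \cref{thm:hgap:cap-time}, plus the one-time $\inflate$ and $\round$ costs.
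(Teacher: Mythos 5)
Your proposal is correct and follows essentially the same chain as the paper: fix the optimal assortment $K^*$, use \cref{thm:hgap:struct} to bound the bin count of the best structured packing $P^*$ of $\round(K^*)$, observe via \cref{thm:hgap:guess-shelves,thm:hgap:cap-not-null,thm:hgap:cap-correct} that $|P_{\best}| \le |P^*| + 1$, apply \cref{thm:hgap:inflate}, and substitute $\delta = \eps/(2+\eps)$. Your key simplifications ($1/(1-\delta) = 1+\eps/2$, $\delta/(1-\delta)=\eps/2$, $(1+\delta)/(1-\delta)=1+\eps$, $\ceil{1/\delta^2}=\ceil{(2/\eps+1)^2}$) are exactly what makes the paper's final algebra work.
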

\begin{proof}
Let $K^*$ be the assortment in an optimal bin packing of $\Ical$.
Let $\Khat^* = \round(K^*)$.
Let $P^*$ be the optimal structured $\delta$-fractional bin packing of $\Khat^*$.
Then $|P^*| = \sopt_{\delta}(\Khat^*)$ by the definition of $\sopt$.
By \thmdepcref{thm:hgap:guess-shelves}{}, $P^* \in \guessShelves(\Icalhat, \delta)$.
Let $\Pbar^* = \chooseAndPack(\Icalhat, P^*, \delta)$.
By \thmdepcref{thm:hgap:cap-not-null}{}, $\Pbar^*$ is not \Null{}.
By \thmdepcref{thm:hgap:cap-correct}{}, $P_{\best}$ is
structured for $(\flatten(\Icalhat), \delta)$
and $|P_{\best}| \le |\Pbar^*| \le \sopt_{\delta}(\Khat^*) + 1$.

By \thmdepcref{thm:hgap:inflate}{}, we get that
\[ |\inflate(P_{\best})| < \frac{\sopt_{\delta}(\Khat^*)}{1-\delta}
    + \ceil{\frac{1}{\delta^2}}(Q-1) + 1 + \frac{\delta Q + 1}{1-\delta}. \]
By \thmdepcref{thm:hgap:struct}{} (structural theorem)
and using $\optdbp(K^*) = \optdmcbp(\Ical)$, we get
\[ \sopt_{\delta}(\Khat^*)
< T_k^{d-1}(1+\delta)\optdmcbp(\Ical) + \ceildeltsq + 1 + \delta. \]
Therefore, $|\inflate(P_{\best})|$ is less than
\begin{align*}
&T_k^{d-1}\frac{1+\delta}{1-\delta}\optdmcbp(\Ical)
    + \ceil{\frac{1}{\delta^2}}\left(Q + \frac{\delta}{1-\delta}\right)
    + 3 + \frac{\delta(3+Q)}{1-\delta}
\\ &= T_k^{d-1}(1+\eps)\optdmcbp(\Ical)
    + \ceil{\left(\frac{2}{\eps}+1\right)^2}\left(Q + \frac{\eps}{2}\right)
    + 3 + (Q+3)\frac{\eps}{2}.
\qedhere \end{align*}
\end{proof}
\begin{theorem}
\label{thm:hgap:hgap-time}
$\hgapk(\Ical, \eps)$ runs in time $O(N^{1+\ceildeltsq}n^{R+2\ceildeltsq}+Nd+nd\log n)$, where
$n \defeq |\Icalhat|$, $N \defeq |\flatten(\Icalhat)|$, $\delta \defeq \eps/(2+\eps)$
and $R \defeq \binom{\ceildeltsq + \ceil{1/\delta}-1}{\ceil{1/\delta}-1}
\le (1+\ceildeltsq)^{1/\delta}$.
\end{theorem}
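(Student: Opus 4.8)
The plan is to read the running time off Algorithm~\cref{algo:hgap} line by line and sum the contributions, invoking the running-time bounds already established for the three subroutines $\guessShelves$, $\chooseAndPack$, and $\inflate$ (\cref{thm:hgap:guess-shelves,thm:hgap:cap-time,thm:hgap:inflate}). Throughout I would treat $\eps$, and hence $\delta$, $\ceildeltsq$, and $R$, as constants, so that multiplicative factors depending only on $\eps$ are absorbed into the $O(\cdot)$.

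First I would handle Lines~1--2. Line~1 computes $\delta$ in $O(1)$ time. Line~2 builds $\Icalhat = \{\round(I) : I \in \Ical\}$: for each of the $N$ items $i \in \flatten(\Ical)$ we compute $h(i) = \ell_d(i)$ and $w(i) = \prod_{j=1}^{d-1} f_k(\ell_j(i))$, and since each evaluation of $f_k$ (equivalently of $\type_k$) is $O(1)$, this is $O(d)$ per item, hence $O(Nd)$ in total. I would also remark that each rounded 2D item is tagged with a reference to its originating $d$D item, so that $\inflate$ has access to the $d$D data later; this tagging stays within the $O(Nd)$ budget. Next comes the main loop. By \cref{thm:hgap:guess-shelves}, $\guessShelves(\Icalhat, \delta)$ enumerates at most $T \defeq (N^{\ceildeltsq}+1)(n+1)^R$ packings in $O(T)$ time. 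For each such $P$, the loop body calls $\chooseAndPack(\Icalhat, P, \delta)$, which runs in $O(Nn^{2\ceildeltsq})$ time by \cref{thm:hgap:cap-time}, and then performs an update of $P_{\best}$ that costs $O(n)$ (comparing bin counts and copying a pointer); this overhead is dominated by the $\chooseAndPack$ call. Hence the loop costs $O(T \cdot Nn^{2\ceildeltsq})$. Finally, Line~10 runs $\inflate(P_{\best})$; since $P_{\best}$ packs an assortment of $\Icalhat$, it involves at most $n$ items, so \cref{thm:hgap:inflate} bounds this step by $O(nd\log n)$.

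It then remains to simplify $T \cdot Nn^{2\ceildeltsq}$. In the nontrivial case $N \ge n \ge 1$ (otherwise the instance is degenerate and the bound is immediate), we have $N^{\ceildeltsq}+1 \le 2N^{\ceildeltsq}$ and $(n+1)^R \le 2^R n^R$, so $T \cdot Nn^{2\ceildeltsq} = O(N^{1+\ceildeltsq} n^{R+2\ceildeltsq})$. Adding the $O(Nd)$ from Line~2 and the $O(nd\log n)$ from Line~10 yields the claimed bound $O(N^{1+\ceildeltsq} n^{R+2\ceildeltsq} + Nd + nd\log n)$.

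I do not anticipate a genuine obstacle here, since the argument is essentially bookkeeping; but two points deserve a sentence of care. First, one must check that each $\round(i)$ and each $f_k$-evaluation really costs $O(d)$ and $O(1)$ respectively, so that Line~2 does not secretly cost more than $O(Nd)$. Second, one must verify that the per-iteration work of the loop outside $\chooseAndPack$ (the enumeration overhead charged by $\guessShelves$, plus comparing and copying $P_{\best}$) is subsumed by the $\chooseAndPack$ bound, so that multiplying the two subroutine bounds legitimately gives the loop's total cost.
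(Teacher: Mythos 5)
Your proof is correct and takes the same approach as the paper, which simply states that the bound follows from \cref{thm:hgap:guess-shelves,thm:hgap:cap-time,thm:hgap:inflate}. You have merely spelled out the bookkeeping (per-line costs, multiplying the enumeration count $T$ by the per-call cost of $\chooseAndPack$, and absorbing constant factors like $2^R$) that the paper leaves implicit.
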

\begin{proof}
Follows from \thmdepcref{thm:hgap:guess-shelves,thm:hgap:cap-time,thm:hgap:inflate}{}.
\end{proof}
\Cref{sec:hgap:improve-time} gives hints on improving the running time of $\hgapk$.

\subsection{\safed{}BP with Rotations}

We can solve the rotational version of $d$BP by reducing it to $d$MCBP
and using the $\hgapk$ algorithm.
Since each item can have up to $d!$ orientations,
the running time is polynomial in $nd!$, which is large when $d$ is large.
But we can do better for some special cases.

When the bin has the same length in each dimension, then for any item $i$,
$w(i) \defeq \prod_{j=1}^{d-1} f_k(\ell_j(i))$
is invariant to permuting the first $d-1$ dimensions.
In the first step of $\hgapk$, we replace each $d$D item $i$ by a rectangle
of width $w(i)$ and height $\ell_d(i)$.
So, instead of considering all $d!$ orientations,
we just need to consider at most $d$ different orientations,
where each orientation has a different length in the $d\Th$ dimension.

Suppose there are no orientation constraints, i.e., all $d!$ orientations of each item are allowed.
Let $L_j$ be the length of the bin in the $j\Th$ dimension, for each $j \in [d]$.
Analogous to the trick in \cref{sec:fhk-rot}, we first fix the $d\Th$ dimension of the item
and then optimally permute the first $d-1$ dimensions using a max-weight bipartite matching algorithm.
Hence, we need to consider only $d$ orientations instead of $d!$.

\appendix
\section{Details of the \texorpdfstring{$\hgapk$}{HGaP} Algorithm}
\label{sec:hgap-extra}

This section gives details of the subroutines used by $\hgapk$.
It also proves the theorems claimed in \cref{sec:hgap}.

\subsection{Preliminaries}

\begin{definition}
\label{defn:pred}
Let $I_1$ and $I_2$ be sets of 1D items.
Then $I_1$ is defined to be a predecessor of $I_2$ (denoted as $I_1 \preceq I_2$)
iff there exists a one-to-one mapping $\pi: I_1 \mapsto I_2$ such that
$\forall i \in I_1, \size(i) \le \size(\pi(i))$.
\end{definition}
\begin{observation}
\label{obs:pred-pack}
Let $I_1 \preceq I_2$ where $\pi$ is the corresponding mapping.
Then we can obtain a packing of $I_1$ from a packing of $I_2$,
by packing each item $i \in I_1$ in the place of $\pi(i)$.
Hence, $\opt(I_1) \le \opt(I_2)$.
\end{observation}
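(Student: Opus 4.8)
The plan is to argue at the level of a single packing: I will show that any feasible $1$D bin packing of $I_2$ can be transformed, bin by bin, into a feasible bin packing of $I_1$ using no more bins, and then specialize to an optimal packing of $I_2$ to deduce $\opt(I_1) \le \opt(I_2)$.

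First I would fix a feasible packing $P$ of $I_2$ into $|P|$ bins, and for each bin $b$ let $I_2^{(b)} \subseteq I_2$ denote the items assigned to $b$; feasibility means $\Sum(I_2^{(b)})$ is at most the bin capacity. Using the one-to-one map $\pi\colon I_1 \mapsto I_2$ supplied by the definition of $\preceq$, I would build a packing $P'$ of $I_1$ by placing each $i \in I_1$ into the same bin that contains $\pi(i)$. The items of $I_1$ that then land in bin $b$ are exactly those $i \in I_1$ with $\pi(i) \in I_2^{(b)}$.

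The one remaining point is to check that each bin of $P'$ is still feasible. Since $\pi$ is injective, the items $\pi(i)$ over all $i$ with $\pi(i) \in I_2^{(b)}$ are distinct elements of $I_2^{(b)}$, so $\sum_{i:\,\pi(i) \in I_2^{(b)}} \size(\pi(i)) \le \Sum(I_2^{(b)})$; combining this with $\size(i) \le \size(\pi(i))$ for every $i \in I_1$, the total size placed in bin $b$ by $P'$ is at most $\Sum(I_2^{(b)})$, hence at most the bin capacity. Thus $P'$ is a feasible packing of $I_1$ with $|P'| \le |P|$. Taking $P$ to be an optimal packing of $I_2$ yields $\opt(I_1) \le |P'| \le |P| = \opt(I_2)$.

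There is essentially no hard step here; the only subtlety worth flagging is that injectivity of $\pi$ must be invoked so that the per-bin size bound is not over-counted. Everything else is bookkeeping, and the statement ``pack each $i$ in the place of $\pi(i)$'' is made precise exactly by the bin-by-bin argument above.
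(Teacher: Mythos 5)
Your argument is correct and is essentially the same as the paper's, which simply states the item-for-item replacement $i \mapsto \pi(i)$ as the construction and treats feasibility as immediate from $\size(i) \le \size(\pi(i))$. Your bin-by-bin accounting (using injectivity of $\pi$ to avoid over-counting) is a faithful expansion of that one-line justification, not a different route.
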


\begin{definition}[Canonical shelving]
\label{defn:hgap:can-shelv}
Let $I$ be a set of rectangles.
Order the items in $I$ in non-increasing order of height
(break ties arbitrarily but deterministically)
and greedily pack them into tight shelves,
slicing items using vertical cuts if necessary.
The set of shelves thus obtained is called the \emph{canonical shelving} of $I$,
and is denoted by $\canShelv(I)$.
(The canonical shelving is unique because ties are broken deterministically.)
\end{definition}

See \cref{fig:can-shelv-example} for an example of canonical shelving.

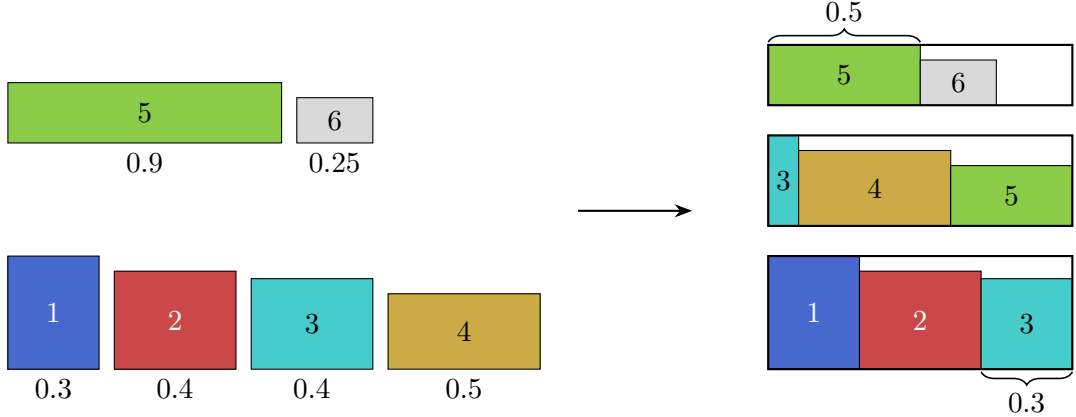
\begin{figure}[!htb]
\centering
\begin{tikzpicture}[
item/.style = {draw,fill={black!15}},
bin/.style={draw,thick},
myarrow/.style={->,>={Stealth},thick},
mybrace/.style = {decoration={brace,mirror,raise=1pt,amplitude=5pt},semithick,decorate},
]
\definecolor{myblue}{HTML}{4768CC}
\definecolor{myred}{HTML}{CC4747}
\definecolor{mycyan}{HTML}{47CCCC}
\definecolor{mygreen}{HTML}{8ACC47}
\definecolor{myyellow}{HTML}{CCAB47}
\begin{scope}
\path[item,fill=myblue] (0,0) rectangle +(1.2,1.5) node[text=white,pos=0.5] {1};
\node[anchor=north] at (0.6,0) {0.3};
\path[item,fill=myred] (1.4,0) rectangle +(1.6,1.3) node[text=white,pos=0.5] {2};
\node[anchor=north] at (2.2,0) {0.4};
\path[item,fill=mycyan] (3.2,0) rectangle +(1.6,1.2) node[pos=0.5] {3};
\node[anchor=north] at (4,0) {0.4};
\path[item,fill=myyellow] (5,0) rectangle +(2,1) node[pos=0.5] {4};
\node[anchor=north] at (6,0) {0.5};
\path[item,fill=mygreen] (0,3) rectangle +(3.6,0.8) node[pos=0.5] {5};
\node[anchor=north] at (1.8,3) {0.9};
\path[item] (3.8,3) rectangle +(1,0.6) node[pos=0.5] {6};
\node[anchor=north] at (4.3,3) {0.25};
\end{scope}
\draw[myarrow] (7.5,2.1) -- (9,2.1);
\begin{scope}[xshift=10cm]
\begin{scope}
\path[item,fill=myblue] (0.0,0) rectangle +(1.2,1.5) node[text=white,pos=0.5] {1};
\path[item,fill=myred] (1.2,0) rectangle +(1.6,1.3) node[text=white,pos=0.5] {2};
\path[item,fill=mycyan] (2.8,0) rectangle +(1.2,1.2) node[pos=0.5] {3};
\path[bin] (0,0) rectangle +(4,1.5);
\draw[mybrace] (2.8,0) -- node[below=5pt] {0.3} (4,0);
\end{scope}
\begin{scope}[yshift=1.9cm]
\path[item,fill=mycyan] (0,0) rectangle +(0.4,1.2) node[pos=0.5] {3};
\path[item,fill=myyellow] (0.4,0) rectangle +(2,1) node[pos=0.5] {4};
\path[item,fill=mygreen] (2.4,0) rectangle +(1.6,0.8) node[pos=0.5] {5};
\path[bin] (0,0) rectangle +(4,1.2);
\end{scope}
\begin{scope}[yshift=3.5cm]
\path[item,fill=mygreen] (0,0) rectangle +(2,0.8) node[pos=0.5] {5};
\path[item] (2,0) rectangle +(1,0.6) node[pos=0.5] {6};
\path[bin] (0,0) rectangle +(4,0.8);
\draw[mybrace] (2,0.8) -- node[above=5pt] {0.5} (0,0.8);
\end{scope}
\end{scope}
\end{tikzpicture}

\caption{Six items and their canonical shelving into three tight shelves of width 1.
The items are numbered by decreasing order of height.
Each item has its width mentioned below it.
Item 3 was sliced into two items of widths 0.3 and 0.1.
Item 5 was sliced into two items of widths 0.4 and 0.5.}
\label{fig:can-shelv-example}
\end{figure}

Suppose a set $I$ of rectangular items is packed into a set $J$ of shelves.
Then we can interpret $J$ as a 1BP instance where
the height of each shelf is the size of the corresponding 1D item.
We will now prove that the canonical shelving is optimal,
i.e., any shelf-based bin packing of items can be obtained
by first computing the canonical shelving and then
packing the shelves into bins like a 1BP instance.

\begin{lemma}
\label{thm:hgap:can-shelv-pred}
Let $I$ be a set of rectangles packed inside shelves $J$.
Let $J^* \defeq \canShelv(I)$. Then $J^* \preceq J$.
\end{lemma}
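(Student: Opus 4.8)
The plan is to establish $J^* \preceq J$ directly from \cref{defn:pred} by exhibiting the natural injection. Write the canonical shelves in the order produced by $\canShelv$ as $J^* = (J^*_1, \ldots, J^*_\ell)$ and, after reindexing, write the shelves of $J$ in non-increasing order of height as $S_1, S_2, \ldots, S_m$; both height sequences are then non-increasing. I would take the mapping $\pi(J^*_j) \defeq S_j$ and show that $\ell \le m$ (so that $\pi$ is well defined and one-to-one) and that $h(J^*_j) \le h(S_j)$ for every $j \in [\ell]$.

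The main device is the ``cumulative width'' picture of $\canShelv(I)$: line the items of $I$ up along the interval $[0, W]$, where $W \defeq w(I)$, in the non-increasing-height order used by $\canShelv$, each item occupying a subinterval of length equal to its width (an item that gets sliced merely straddles an integer point, with both pieces keeping its height). Then $J^*_j$ is the content of $[j-1, \min(j, W))$, we have $\ell = \ceil{W}$, and $h(J^*_j)$ is the height of the first (equivalently, tallest) item meeting that subinterval. The one fact I need about this picture is: for any height value $h$, the items of height at least $h$ form a prefix $[0, W_{\ge h})$ of the layout, where $W_{\ge h}$ denotes their total width; hence $J^*_j$ contains an item of height $\ge h$ iff $W_{\ge h} > j-1$, and in particular $W_{\ge h(J^*_j)} > j - 1$.

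Transferring this to $J$: each item (or slice) of height $\ge h$ must lie in a shelf of $J$ of height $\ge h$, since an item never exceeds the height of the shelf containing it. So if exactly $r$ shelves of $J$ --- namely $S_1, \ldots, S_r$, by the sorting --- have height $\ge h(J^*_j)$, they together hold width at most $r$, yet they must contain all items of height $\ge h(J^*_j)$, whose total width exceeds $j-1$; therefore $r \ge j$, which gives $h(S_j) \ge h(J^*_j)$. For the cardinality bound, slicing preserves total width, so $J$ packs width $W$ into shelves of width $1$, whence $m \ge W$ and thus $m \ge \ceil{W} = \ell$. Together these two facts say exactly that $\pi$ witnesses $J^* \preceq J$.

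I do not anticipate a real obstacle. The only slightly fussy points are the bookkeeping around sliced items and half-open intervals --- making ``the first item meeting $[j-1, \min(j,W))$'' unambiguous and identifying its height with $h(J^*_j)$ --- and dispatching the degenerate cases $W = 0$ and $W \in \mathbb{Z}$. The substance is the single counting step $r \ge j$, i.e.\ ``items at least as tall as $h(J^*_j)$ need at least $j$ shelves of width $1$,'' applied once in each of the two packings.
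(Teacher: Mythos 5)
Your proposal is correct and rests on the same core counting argument as the paper's proof: in the given packing $J$, all items (and slices) of height at least $h$ must sit in shelves of height at least $h$, and since each such shelf has width $1$, the number of shelves of height at least $h(J^*_j)$ must be at least the total width of items that tall, which exceeds $j-1$. The paper packages this as a greedy repacking of $I$ into the sorted shelves of $J$ and derives a contradiction from a hypothetical failure, whereas you name the injection $\pi(J^*_j) = S_j$ up front and verify $h(J^*_j) \le h(S_j)$ directly via the cumulative-width picture; the substance is identical, and your direct formulation arguably reads a bit more cleanly.
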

\begin{proof}
We say that a shelf is full if the total width of items in a shelf is 1.
Arrange the shelves $J$ in non-increasing order of height,
and arrange the items $I$ in non-increasing order of height.
Then try to pack $I$ into $J$ using the following greedy algorithm:
For each item $i$, pack the largest possible slice of $i$ into
the first non-full shelf and pack the remaining slice (if any) in the next shelf.
If this greedy algorithm succeeds, then within each shelf of $J$,
there is a shelf of $J^*$, so $J^* \preceq J$.
We will now prove that this greedy algorithm always succeeds.

For the sake of proof by contradiction, assume that the greedy algorithm failed, i.e.,
for an item (or slice) $i$ there was a non-full shelf $S$ but $h(i) > h(S)$.
Let $I'$ be the items (and slices) packed before $i$ and $J'$ be the shelves before $S$.
Therefore, $w(I') = |J'|$.

All items in $I'$ have height at least $h(i)$,
so all shelves in $J'$ have height at least $h(i)$.
All shelves after $J'$ have height less than $h(i)$.
Therefore, $J'$ is exactly the set of shelves of height at least $h(i)$.

In the packing $P$, $I' \cup \{i\}$ can only be packed into shelves of height
at least $h(i)$, so $w(I') + w(i) \le |J'|$. But this contradicts $w(I') = |J'|$.
Therefore, the greedy algorithm cannot fail.
\end{proof}

\begin{lemma}
\label{thm:nos-sum}
Consider the inequality $x_1 + x_2 + \ldots + x_n \le s$,
where for each $j \in [n]$, $x_j \in \mathbb{Z}_{\ge 0}$.
Let $N$ be the number of solutions to this inequality.
Then $N = \binom{s+n}{n} \le (s+1)^n$.
\end{lemma}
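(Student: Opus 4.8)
The plan is to count the number of solutions in nonnegative integers to $x_1 + x_2 + \ldots + x_n \le s$ by a standard stars-and-bars argument, then derive the crude bound $\binom{s+n}{n} \le (s+1)^n$ separately. First I would convert the inequality into an equality by introducing a slack variable: a solution $(x_1, \ldots, x_n)$ with $\sum_{j=1}^n x_j \le s$ is in bijection with a solution $(x_1, \ldots, x_n, x_{n+1})$ in nonnegative integers to $x_1 + \ldots + x_n + x_{n+1} = s$, where $x_{n+1} \defeq s - \sum_{j=1}^n x_j \ge 0$. So $N$ equals the number of ways to write $s$ as an ordered sum of $n+1$ nonnegative integers.

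Next I would count those ordered representations by the classic combinatorial identity: the number of nonnegative integer solutions to $y_1 + \ldots + y_{m} = s$ is $\binom{s + m - 1}{m - 1}$, obtained by arranging $s$ indistinguishable ``stars'' and $m-1$ ``bars'' in a row. With $m = n+1$ this gives $N = \binom{s + n}{n}$. (Alternatively one can prove this by induction on $n$: the base case $n = 0$ or $n = 1$ is immediate, and the inductive step sums $\binom{(s - x_n) + n - 1}{n-1}$ over $x_n = 0, \ldots, s$ and applies the hockey-stick identity $\sum_{t=0}^{s}\binom{t + n - 1}{n-1} = \binom{s+n}{n}$.)

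Finally, for the inequality $\binom{s+n}{n} \le (s+1)^n$, I would write $\binom{s+n}{n} = \prod_{j=1}^{n} \frac{s+j}{j}$ and bound each factor: since $j \ge 1$, we have $\frac{s+j}{j} = 1 + \frac{s}{j} \le 1 + s = s+1$, and multiplying the $n$ factors yields the claim. None of this is hard; the only step requiring a moment's care is getting the slack-variable reduction and the star-and-bar count lined up with the right index ($n+1$ summands, hence $n$ bars), so that the binomial coefficient comes out as $\binom{s+n}{n}$ rather than an off-by-one variant. That bookkeeping is the ``main obstacle,'' such as it is.
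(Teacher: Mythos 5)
Your proof is correct. For the count $N = \binom{s+n}{n}$ you and the paper take essentially the same view (the paper simply cites this as standard; your slack-variable plus stars-and-bars argument fills in that detail). For the bound $\binom{s+n}{n} \le (s+1)^n$, however, you take a genuinely different route: you prove it as a self-contained algebraic inequality about binomial coefficients, writing $\binom{s+n}{n} = \prod_{j=1}^n \frac{s+j}{j}$ and bounding each factor by $s+1$. The paper instead argues combinatorially: every solution $(x_1,\ldots,x_n)$ has each $x_j \in \{0,1,\ldots,s\}$, and there are only $(s+1)^n$ such tuples, so $N \le (s+1)^n$; the inequality on $\binom{s+n}{n}$ then follows because $N$ equals it. The paper's argument is a one-liner but leans on the combinatorial meaning of $N$; your algebraic argument is a touch longer but establishes the binomial bound independently of any counting interpretation, which would be useful if one wanted the inequality without the lemma's setup.
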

\begin{proof}
The proof of $N = \binom{s+n}{n}$ is a standard result in combinatorics.

To prove $N \le (s+1)^n$, note that we can choose each
$x_j \in \{0, 1, \ldots, s\}$ independently.
\end{proof}

\subsection{Structural Theorem}
\label{sec:hgap:struct}

Let $I$ be a set of $d$D items.
Let $\Ihat \defeq \round(I)$.
Let $\delta \in (0, 1)$ be a constant.
Let $\Ihat_L \defeq \{i \in \Ihat: h(i) > \delta\}$ and $\Ihat_S \defeq \Ihat - \Ihat_L$.
Let $J \defeq \canShelv(\Ihat_L)$. Let $m \defeq |J|$, i.e., $J$ contains $m$ shelves.
We can interpret $\Ihat_S$ as a single sliceable 1D item of size $a(\Ihat_S)$.

We will show the existence of a structured $\delta$-fractional packing of $\Ihat$
into at most $T_k^{d-1}(1+\delta)\optdbp(I) + \ceildeltsq + 1 + \delta$ bins.
This would prove \cref{thm:hgap:struct}.

\begin{definition}[Linear grouping~\cite{bp-aptas}]
\label{defn:hgap:lingroup}
Arrange the 1D items $J$ in non-increasing order of size and number them from 1 to $m$.
Let $q \defeq \floor{\delta\size(J)} + 1$.
Let $J_1$ be the first $q$ items, $J_2$ be the next $q$ items, and so on.
$J_j$ is called the $j\Th$ \emph{linear group} of $J$.
This gives us $t \defeq \ceil{m/q}$ linear groups.
Note that the last group, $J_t$, may have less than $q$ items.

Let $h_j$ be the size of the first item in $J_j$. Let $h_{t+1} \defeq 0$.
For $j \in [t-1]$, let $J_j\lo$ be the items obtained by
decreasing the height of items in $J_j$ to $h_{j+1}$.
For $j \in [t]$, let $J_j\hi$ be the items obtained by
increasing the height of items in $J_j$ to $h_j$.

Let ${J\lo \defeq \bigcup_{j=1}^{t-1} J_j\lo}$
and ${J\hi \defeq \bigcup_{j=1}^t J_j\hi}$.
We call $J\lo$ a down-rounding of $J$ and $J\hi$ an up-rounding of $J$.
\end{definition}

\begin{lemma}
\label{thm:hgap:n-pivots}
$t \le \ceildeltsq$.
\end{lemma}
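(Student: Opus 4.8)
The plan is to bound $t = \ceil{m/q}$ using the definition $q = \floor{\delta\size(J)} + 1$. The key observation is that each of the $m$ shelves in $J = \canShelv(\Ihat_L)$ corresponds to a $\delta$-large item (or collection of slices of $\delta$-large items), so each shelf has height exceeding $\delta$; hence $\size(J) > \delta m$, which gives $m < \size(J)/\delta$. Combining this with $q > \delta\size(J)$ (since $q = \floor{\delta\size(J)} + 1 > \delta\size(J)$, as $\floor{x} + 1 > x$ always), I expect to get
\[ \frac{m}{q} < \frac{\size(J)/\delta}{\delta\size(J)} = \frac{1}{\delta^2}, \]
and therefore $t = \ceil{m/q} \le \ceil{1/\delta^2} = \ceildeltsq$, using the fact that if $x < y$ for an integer-valued-upper-bound situation we can conclude $\ceil{x} \le \ceil{y}$ when $y$ is... actually one must be a little careful here: $m/q < 1/\delta^2$ only gives $\ceil{m/q} \le \ceil{1/\delta^2}$ directly since $\ceil{\cdot}$ is monotone and $\ceil{1/\delta^2} \ge 1/\delta^2 > m/q$ forces $\ceil{m/q} \le \ceil{1/\delta^2}$.

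First I would handle the degenerate case $m = 0$ (the empty canonical shelving), where $t = \ceil{0/q} = 0 \le \ceildeltsq$ trivially, so assume $m \ge 1$. Next I would establish $\size(J) > \delta m$: every shelf $S \in J$ is \emph{tight} (by \cref{defn:hgap:can-shelv}), meaning its top edge touches some item or slice, and that item/slice comes from $\Ihat_L$, so it has height $> \delta$; hence $h(S) > \delta$ for each of the $m$ shelves, giving $\size(J) = \sum_{S \in J} h(S) > \delta m$. Then I would write $q = \floor{\delta\size(J)} + 1 > \delta\size(J) > \delta^2 m$, so $m/q < 1/\delta^2$, and conclude $t = \ceil{m/q} \le \ceildeltsq$ by monotonicity of the ceiling.

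I do not anticipate a serious obstacle here — this is a routine calculation. The only point requiring a moment's care is justifying that each shelf in the canonical shelving is genuinely tight and that its height is realized by a $\delta$-large item, i.e., confirming $h(S) > \delta$ for every shelf; this follows because the canonical shelving packs items in non-increasing height order, so the first item placed in any shelf is the tallest in that shelf and determines its height, and that item is in $\Ihat_L$. A secondary subtlety is the strictness of the inequality: since $q$ is a positive integer with $q > \delta\size(J) > 0$, and $m \ge 1$, the chain $m/q < 1/\delta^2$ is strict, which is all that is needed.
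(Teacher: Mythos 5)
Your proof is correct and follows essentially the same route as the paper's: bound $\size(J) > \delta |J|$ because every shelf in the canonical shelving of $\Ihat_L$ has height exceeding $\delta$, bound $q = \floor{\delta\size(J)}+1 > \delta\size(J)$, combine to get $|J|/q < 1/\delta^2$, and apply monotonicity of the ceiling. Your extra care about the $m=0$ case and about why each shelf's height is realized by a $\delta$-large item fills in details the paper leaves implicit, but the argument is the same.
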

\begin{proof}
Since each shelf in $J$ has height more than $\delta$, $\size(J) > |J|\delta$.
\[ t \defeq \ceil{\frac{|J|}{\floor{\delta\size(J)}+1}}
\le \ceil{\frac{\size(J)/\delta}{\delta\size(J)}}
= \ceil{\frac{1}{\delta^2}}.
\qedhere \]
\end{proof}

\begin{lemma}
\label{thm:hgap:pred-chain}
$J\lo \preceq J \preceq J\hi \preceq J\lo \cup J_1\hi$.
\end{lemma}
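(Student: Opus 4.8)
The plan is to verify each of the three relations in the chain separately and then invoke the (obvious) transitivity of $\preceq$. Throughout I would keep the notation of \cref{defn:hgap:lingroup}: write $q \defeq \floor{\delta\size(J)}+1$, recall that $J$ is listed in non-increasing order of size, that group $J_j$ occupies the positions $(j-1)q+1, \ldots, \min(jq,m)$, and that $h_j$ is the size of the first item of $J_j$ with $h_{t+1} \defeq 0$. The one observation driving everything is that, because the list is sorted non-increasingly, every item of $J_j$ has size in the interval $[h_{j+1}, h_j]$: its size is at most that of the first item of $J_j$, namely $h_j$, and at least that of the first item of $J_{j+1}$, namely $h_{j+1}$.

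First I would prove $J\lo \preceq J$. For each $j \in [t-1]$, map the copy of the $i\Th$ item of $J_j\lo$ (which has size $h_{j+1}$) to the $i\Th$ item of $J_j \subseteq J$, whose size is $\ge h_{j+1}$ by the observation. Since $J_1, \ldots, J_{t-1}$ are pairwise-disjoint subsets of $J$, this is one-to-one, so $J\lo \preceq J$. Next, $J \preceq J\hi$ is symmetric: for each $j \in [t]$, send the $i\Th$ item of $J_j$ (size $\le h_j$) to the $i\Th$ item of $J_j\hi$ (size exactly $h_j$); this is a bijection $J \to J\hi$ that never decreases size.

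The last relation $J\hi \preceq J\lo \cup J_1\hi$ is where I would argue at the level of the multisets of sizes. The size-multiset of $J\hi = \bigcup_{j=1}^{t}J_j\hi$ consists of $|J_j|$ copies of $h_j$ for each $j \in [t]$, i.e.\ exactly $q$ copies of $h_j$ for $j \in [t-1]$ and $|J_t| \le q$ copies of $h_t$. On the other side, $J_1\hi$ contributes $q$ copies of $h_1$ and $J_j\lo$ contributes $q$ copies of $h_{j+1}$ for each $j \in [t-1]$, so the size-multiset of $J\lo \cup J_1\hi$ is exactly $q$ copies of $h_j$ for every $j \in [t]$. Hence, for every value $v$, the number of items of size $v$ in $J\hi$ is $\sum_{j:\,h_j=v}|J_j| \le \sum_{j:\,h_j=v} q$, the number in $J\lo \cup J_1\hi$; so the size-multiset of $J\hi$ is a sub-multiset of that of $J\lo \cup J_1\hi$, which yields a one-to-one size-preserving (in particular size-non-decreasing) map and thus $J\hi \preceq J\lo \cup J_1\hi$.

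I expect the first two relations to be immediate from the sorted order, and the main (though still routine) obstacle to be the bookkeeping in the last step: $J\lo$ deliberately omits the group $J_t$ whereas $J\hi$ keeps it as $J_t\hi$, and $J_t$ may hold fewer than $q$ items. The point is precisely that the built-in ``shift by one group'' in $J\lo$ (group $j$ is rounded down to $h_{j+1}$) together with the extra block $J_1\hi$ reproduces, with multiplicity $q$ each, all the heights $h_1, \ldots, h_t$ occurring in $J\hi$, leaving room for the short last group. Finally, chaining the three relations gives the lemma.
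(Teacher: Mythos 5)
Your proof is correct and follows essentially the same strategy as the paper's: the first two relations are immediate from the sorted order, and the third relation is driven by the ``shift by one group'' observation. The only cosmetic difference is in how you establish $J\hi \preceq J\lo \cup J_1\hi$: the paper pairs groups explicitly, noting that $J_{j+1}\hi \preceq J_j\lo$ for each $j \in [t-1]$ (since both consist of items of height $h_{j+1}$ and $|J_{j+1}| \le q = |J_j|$) and then takes the union with $J_1\hi$, whereas you aggregate into a single multiset comparison over all size values $v$. Both correctly handle the possibly-short last group $J_t$ and the possibility that some $h_j$ coincide; the paper's group-by-group pairing is a bit more direct, while yours trades that for slightly more bookkeeping at the multiset level, but the substance is the same.
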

\begin{proof}
It is trivial to see that $J\lo \preceq J \preceq J\hi$.
For $j \in [t-1]$, all (1D) items in both $J_j\lo$ and $J_{j+1}\hi$ have height $h_{j+1}$,
and $|J_{j+1}| \le q = |J_j|$. Therefore, $J_{j+1}\hi \preceq J_j\lo$ and hence
\[ J\hi = J_1\hi \cup \bigcup_{j=1}^{t-1} J_{j+1}\hi
\preceq J_1\hi \cup \bigcup_{j=1}^{t-1} J_j\lo = J_1\hi \cup J\lo.
\qedhere \]
\end{proof}

\begin{lemma}
\label{thm:hgap:can-shelv-size}
$\size(J) < 1 + a(\Ihat_L)$.
\end{lemma}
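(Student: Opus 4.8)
The plan is to run the standard ``area versus next-shelf-height'' argument familiar from level-oriented packing analyses. First I would index the shelves $S_1, S_2, \ldots, S_m$ of $J \defeq \canShelv(\Ihat_L)$ in the order in which the greedy procedure of \cref{defn:hgap:can-shelv} creates them. Since the items of $\Ihat_L$ are processed in non-increasing order of height, the first (slice of an) item placed in $S_r$ is the tallest object in $S_r$, so $h(S_r)$ equals its height and $h(S_1) \ge h(S_2) \ge \cdots \ge h(S_m)$. I would also record two structural facts: every rectangle $i \in \Ihat_L$ has width $w(i) = \prod_{j=1}^{d-1} f_k(\ell_j(i)) \le 1$, so each item is sliced across at most two consecutive shelves; and whenever an item does not fit entirely into the current shelf it is sliced so that the shelf becomes exactly full, hence every $S_r$ with $r < m$ has total item width exactly $1$ (while $S_m$ has total width at most $1$).

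The core of the argument is the inequality $A_r \ge h(S_{r+1})$ for each $r \in [m-1]$, where $A_r \defeq \sum_{s \text{ a slice in } S_r} w(s)h(s)$ is the occupied area of $S_r$; note that $\sum_{r=1}^m A_r = a(\Ihat_L)$ because slicing preserves total area. To prove the inequality I would show that every slice in $S_r$ has height at least $h(S_{r+1})$: the slice starting $S_{r+1}$ is either the leftover of the last item placed in $S_r$ (so the same height) or the item immediately following it in the sorted order (so no taller), and in either case $h(S_{r+1})$ is at most the minimum slice-height inside $S_r$. Since the total width in $S_r$ equals $1$, this gives $A_r = \sum_s w(s)h(s) \ge h(S_{r+1})\sum_s w(s) = h(S_{r+1})$.

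Summing over $r$, $a(\Ihat_L) = A_m + \sum_{r=1}^{m-1} A_r \ge A_m + \sum_{r=1}^{m-1} h(S_{r+1}) = A_m + \bigl(\size(J) - h(S_1)\bigr)$, so $\size(J) \le h(S_1) + a(\Ihat_L) - A_m$. To finish I would use that $h(S_1) = \max_{i \in \Ihat_L} h(i) \le 1$ (every item length is at most $1$) and that $S_m$ is nonempty when $m \ge 1$, so $A_m > 0$; together these yield the strict bound $\size(J) < 1 + a(\Ihat_L)$, the case $\Ihat_L = \emptyset$ being trivial since then $\size(J)=0$. The one genuinely delicate point is the height comparison $h(S_{r+1}) \le (\text{minimum slice-height in } S_r)$, which requires a brief case split on whether the transition from $S_r$ to $S_{r+1}$ occurs because an item was sliced or because an item filled $S_r$ exactly; the rest is bookkeeping.
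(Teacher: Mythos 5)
Your proof is correct and follows essentially the same route as the paper's: both arguments establish that every item (slice) in shelf $S_r$ has height at least $h(S_{r+1})$, combine this with the fact that each non-final shelf is filled to width $1$ to get $a(S_r) \ge h(S_{r+1})$, and then sum over $r$ using $h(S_1) \le 1$ and strict positivity of the last shelf's area to obtain the strict inequality. Your version is a little more explicit about why $h(S_{r+1})$ bounds the minimum slice-height in $S_r$ (the case split on whether the last item was sliced), whereas the paper compresses this into the single remark that the shelves are tight and the items are processed in non-increasing order of height, but the underlying argument is the same.
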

\begin{proof}
In the canonical shelving of $\Ihat_L$, let $S_j$ be the $j\Th$ shelf.
Let $h(S_j)$ be the height of $S_j$.
Let $a(S_j)$ be the total area of the items in $S_j$.
Since the shelves are tight, items in $S_j$ have height at least $h(S_{j+1})$.
So, $a(S_j) \ge h(S_{j+1})$ and
\[ \size(J) = \sum_{j=1}^{|J|} h(S_j) \le 1 + \sum_{j=1}^{|J|-1} h(S_{j+1})
\le 1 + \sum_{j=1}^{|J|-1} a(S_j) < 1 + a(\Ihat_L).
\qedhere \]
\end{proof}

\begin{lemma}
\label{thm:hgap:sopt-le-optlo}
$\sopt_{\delta}(\Ihat) < \opt(J\lo \cup \Ihat_S) + \delta a(\Ihat_L) + (1 + \delta)$.
\end{lemma}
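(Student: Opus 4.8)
The plan is to bound $\sopt_{\delta}(\Ihat)$ by exhibiting a structured $\delta$-fractional packing of $\Ihat$ whose bin count is controlled by $\opt(J\lo \cup \Ihat_S)$. The starting point is the two earlier lemmas of this subsection: by \cref{thm:hgap:can-shelv-pred}, $\canShelv(\Ihat_L)$ is a predecessor of the shelves in any shelf-based packing of $\Ihat_L$, so a shelf-based $\delta$-fractional packing of $\Ihat$ can be read off from a 1BP packing of the instance $J \cup \{\Ihat_S\}$, where $\Ihat_S$ is treated as a single sliceable 1D item of size $a(\Ihat_S)$ (it is $\delta$-small, so horizontal cuts are allowed, and its slices can be placed in the leftover space of the shelf-packed bins). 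Hence $\sopt_{\delta}(\Ihat) \le \opt(J\hi \cup \Ihat_S)$, using that $J\hi$ has at most $\ceildeltsq$ distinct heights by \cref{thm:hgap:n-pivots} and each such height is the height of a $\delta$-large item of $\Ihat$ (it is the height of the first item of some linear group), so this packing is \emph{structured}. This is the step where I'd be most careful: I must check that rounding up to $J\hi$ does not break the ``shelf heights are heights of $\delta$-large items of $\Ihat$'' requirement in the definition of structured, and that the $\delta$-small remainder really can be accommodated by slicing — but both follow from the definitions.

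Next I would pass from $J\hi$ down to $J\lo$. By \cref{thm:hgap:pred-chain} we have $J\hi \preceq J\lo \cup J_1\hi$, so by \cref{obs:pred-pack}, $\opt(J\hi \cup \Ihat_S) \le \opt(J\lo \cup J_1\hi \cup \Ihat_S)$. Now I bound the cost of the extra group $J_1\hi$ separately: it consists of $q = \floor{\delta\size(J)}+1$ items, each of height at most $1$, so it can be packed greedily (e.g.\ Next-Fit on 1D items of size $\le 1$) into at most $\size(J_1\hi) + 1 \le q \cdot h_1 + 1$ bins — or even more crudely, since each item has size $\le 1$, into at most $q$ bins, but the area bound is what I want. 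Using $h_1 \le 1$ and $q - 1 = \floor{\delta \size(J)} \le \delta \size(J)$, together with \cref{thm:hgap:can-shelv-size} which gives $\size(J) < 1 + a(\Ihat_L)$, I get that $J_1\hi$ fits in fewer than $\delta\size(J) + 2 < \delta(1 + a(\Ihat_L)) + 2 = \delta a(\Ihat_L) + (2 + \delta)$ bins; combined with $\opt(J\lo \cup J_1\hi \cup \Ihat_S) \le \opt(J\lo \cup \Ihat_S) + (\text{bins for } J_1\hi)$ this gives a bound of the right shape. I may need to shave the additive constant from $2+\delta$ down to $1+\delta$ — that is done by observing $|J_1\hi| = q$ items of height $\le h_1$ pack into $\ceil{q h_1}$ shelves of height $h_1$ which stack into $\le \ceil{q h_1 \cdot 1}$ bins, and being slightly more careful with the floor/ceiling arithmetic, e.g.\ noting $q h_1 \le \floor{\delta\size(J)} + h_1 \le \delta\size(J) + 1$ so the number of bins is $\le \floor{\delta\size(J)} + 1 < \delta(1 + a(\Ihat_L)) + 1$.

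Assembling the chain,
\[
\sopt_{\delta}(\Ihat) \le \opt(J\hi \cup \Ihat_S) \le \opt(J\lo \cup \Ihat_S) + \bigl(\delta a(\Ihat_L) + (1+\delta)\bigr),
\]
which is exactly the claimed inequality (with the strictness coming from the strict inequality in \cref{thm:hgap:can-shelv-size}). The main obstacle, as noted, is the bookkeeping in the first step — verifying that the 1BP-to-shelf-packing correspondence genuinely yields a \emph{structured} $\delta$-fractional packing of $\Ihat$ (correct distinct-height count via \cref{thm:hgap:n-pivots}, heights realized by $\delta$-large items, and correct treatment of the sliceable $\delta$-small mass) — rather than any single calculation. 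The rest is routine rounding-and-packing arithmetic using the lemmas already established.
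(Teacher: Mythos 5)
Your proof follows the paper's structure exactly: step 1 ($\sopt_\delta(\Ihat) \le \opt(J\hi \cup \Ihat_S)$ via the optimal 1BP of $J\hi \cup \Ihat_S$ inducing a structured $\delta$-fractional packing, with the distinct-height count from \cref{thm:hgap:n-pivots} and the shelf heights being $\delta$-large item heights by tightness of the canonical shelving), step 2 ($J\hi \preceq J\lo \cup J_1\hi$ and \cref{obs:pred-pack}), and step 3 (bound the extra bins from $J_1\hi$). So the decomposition and the key lemmas invoked are the same.

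However, your handling of step 3 goes wrong in the middle. The elements of $J_1\hi$ are $q$ one-dimensional items, each of size exactly $h_1 \le 1$. You dismiss the bound $\opt(J_1\hi) \le |J_1\hi| = q$ as ``too crude'' and instead try an area argument, first landing on $\delta a(\Ihat_L) + (2+\delta)$, and then try to ``shave'' the constant by claiming $q$ items of size $h_1$ pack into $\ceil{q h_1}$ bins. That claim is false: 1D items cannot be sliced, so if $h_1 > 1/2$ (e.g.\ $h_1 = 0.9$, $q = 10$), each bin holds only one item and you genuinely need $q$ bins, while $\ceil{q h_1} = 9$. (Your phrasing ``pack into $\ceil{q h_1}$ shelves of height $h_1$ which stack into bins'' suggests you were momentarily treating these 1D items as 2D rectangles of width less than 1 that can sit side by side in a shelf, which they are not.) The irony is that the ``crude'' bound you set aside is both correct and sufficient: $\opt(J_1\hi) \le q = \floor{\delta\size(J)} + 1 \le 1 + \delta\size(J) < 1 + \delta(1 + a(\Ihat_L))$ by \cref{thm:hgap:can-shelv-size}, which gives exactly $\delta a(\Ihat_L) + (1+\delta)$. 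This is precisely what the paper does. So replace the area detour with the one-item-per-bin bound and the proof is clean.
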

\begin{proof}
By the definition of $\canShelv$, $\Ihat_L$ can be packed into $J$.
By \thmdepcref{thm:hgap:pred-chain}{}, $J \preceq J\hi$, so $\Ihat_L$ can be packed into $J\hi$.
By \thmdepcref{thm:hgap:n-pivots}{}, the number of distinct sizes in $J\hi$
is at most $\ceildeltsq$.
So, the optimal 1D bin packing of $J\hi \cup \Ihat_S$ will
give us a structured $\delta$-fractional bin packing of $\Ihat$.
Hence, $\sopt_{\delta}(\Ihat) \le \opt(J\hi \cup \Ihat_S)$.

By \thmdepcref{thm:hgap:pred-chain,obs:pred-pack}{} we get
\[ \opt(J\hi \cup \Ihat_S) \le \opt(J\lo \cup J_1\hi \cup \Ihat_S)
\le \opt(J\lo \cup \Ihat_S) + \opt(J_1\hi). \]
By \thmdepcref{thm:hgap:can-shelv-size}{},
\[ \opt(J_1\hi) \le |J_1\hi| \le q \le 1 + \delta\size(J)
< 1 + \delta(1 + a(\Ihat_L)).
\qedhere \]
\end{proof}

\subsubsection{LP for Packing \texorpdfstring{$J\lo \cup \Ihat_S$}{J\^{}lo + I\^{}\_S}}

We will formulate an integer linear program for bin packing $J\lo \cup \Ihat_S$.

Let $C \in \mathbb{Z}_{\ge 0}^{t-1}$ such that $h_C \defeq \sum_{j=1}^{t-1} C_jh_{j+1} \le 1$.
Then $C$ is called a \config{}.
$C$ represents a set of 1D items that can be packed into a bin
and where $C_j$ items are from $J_j\lo$.
Let $\Ccal$ be the set of all \config{}s.
We can pack at most $\ceil{1/\delta}-1$ items into a bin because $h_t > \delta$.
By \cref{thm:nos-sum}, we get
$|\Ccal| \le \binom{\ceil{1/\delta}-1+t-1}{t-1} \le \ceildeltsq^{1/\delta}$.

Let $x_C$ be the number of bins packed according to \config{} $C$.
Bin packing $J\lo \cup \Ihat_S$ is equivalent to finding the optimal integer solution
to the following linear program, which we denote as $\LP(\Ihat)$.
\[ \begin{array}{*3{>{\displaystyle}l}}
\min_{x \in \mathbb{R}^{|\Ccal|}} & \sum_{C \in \Ccal} x_C
\\[1.5em] \textrm{where }
    & \sum_{C \in \Ccal} C_j x_C \ge q & \forall j \in [t-1]
\\[1.5em] & \sum_{C \in \Ccal} (1-h_C)x_C \ge a(\Ihat_S)
\\[1em] & x_C \ge 0 & \forall C \in \Ccal
\end{array} \]
Here the first set of constraints say that for each $j \in [t-1]$,
all of the $q \defeq \floor{\delta\size(J)}+1$ shelves $J\lo_j$
should be covered by the configurations in $x$.
The second constraint says that we should be able to pack $a(\Ihat_S)$
into the non-shelf space in the bins.
\begin{lemma}
\label{thm:hgap:opt-to-lp}
$\opt(J\lo \cup \Ihat_S) \le \opt(\LP(\Ihat)) + t$.
\end{lemma}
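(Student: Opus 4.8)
The plan is to relate $\opt(J\lo \cup \Ihat_S)$ to the optimal \emph{integral} value of $\LP(\Ihat)$ and then round a vertex solution of the LP relaxation. Write $z^*$ for the optimal integral solution value of $\LP(\Ihat)$.

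First I would check that $\LP(\Ihat)$ is an honest formulation, i.e.\ that any integral feasible $x$ yields a bin packing of $J\lo \cup \Ihat_S$ using $\sum_{C} x_C$ bins. Indeed, allocate $x_C$ bins to each \config{} $C$. The covering constraints $\sum_{C} C_j x_C \ge q = |J_j\lo|$ guarantee enough shelf slots for every (1D) item of $J_j\lo$, for each $j \in [t-1]$. A bin following \config{} $C$ leaves free height $1 - h_C$ (hence free area $1-h_C$, since shelves have width $1$), so the total free area is $\sum_{C}(1-h_C)x_C \ge a(\Ihat_S)$; because every item of $\Ihat_S$ is $\delta$-small and may be sliced both horizontally and vertically, $\Ihat_S$ behaves as a fluid of total area $a(\Ihat_S)$ and fits into this free space. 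Hence $\opt(J\lo \cup \Ihat_S) \le z^*$. (In fact equality holds, as noted when $\LP(\Ihat)$ was defined, but only this direction is needed.)

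Next I would bound $z^* - \opt(\LP(\Ihat))$ by $t$. Take an optimal extreme-point solution $x^*$ of the LP relaxation $\LP(\Ihat)$. Apart from the nonnegativity constraints, $\LP(\Ihat)$ has exactly $t$ inequality constraints: the $t-1$ covering constraints for the groups and the single area constraint for $\Ihat_S$. At a vertex of a polyhedron in $\mathbb{R}^{|\Ccal|}$, at least $|\Ccal|$ linearly independent constraints are tight, so at least $|\Ccal|-t$ of the constraints $x_C \ge 0$ are tight; that is, at most $t$ of the entries $x^*_C$ are nonzero. Now round up: set $\bar x_C \defeq \ceil{x^*_C}$. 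Rounding up only increases the left-hand sides of the $\ge$-constraints, so $\bar x$ remains feasible, and it is integral, whence $z^* \le \sum_{C} \bar x_C$. Since $\ceil{y} - y < 1$ for every $y$ and $\ceil{y}-y = 0$ when $y = 0$, and at most $t$ entries of $x^*$ are positive,
\[ z^* \le \sum_{C \in \Ccal} \bar x_C \le \sum_{C \in \Ccal} x^*_C + t = \opt(\LP(\Ihat)) + t. \]
Chaining this with $\opt(J\lo \cup \Ihat_S) \le z^*$ gives the claim.

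The rounding and the vertex-support bound are routine; the one place to be careful is the correspondence in the second paragraph — in particular, counting the extra area constraint when bounding the support of a vertex (so the additive loss is $t$, not $t-1$), and using that sliceability of $\delta$-small items is exactly what makes the area constraint sufficient, not merely necessary, for packing $\Ihat_S$ into the leftover space.
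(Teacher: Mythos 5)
Your proof is correct and follows essentially the same route as the paper: take an extreme-point optimum of $\LP(\Ihat)$, observe it has at most $t$ nonzero entries (the paper invokes the rank lemma where you count tight constraints directly), round each entry up, and lose at most $t$. Your first paragraph spells out the correspondence between integral LP solutions and packings of $J\lo \cup \Ihat_S$, which the paper treats as given when it introduces $\LP(\Ihat)$, but this is a matter of explicitness rather than a different argument.
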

\begin{proof}
Let $x^*$ be an optimal extreme-point solution to $\LP(\Ihat)$.
By rank-lemma, $x^*$ has at most $t$ non-zero entries.
Let $\xhat$ be a vector where $\xhat_C \defeq \ceil{x_C^*}$.
Then $\xhat$ is an integral solution to $\LP(\Ihat)$ and
$\sum_C \xhat_C < t + \sum_C x_C^* = \opt(\LP(\Ihat)) + t$.
\end{proof}
The dual of $\LP(\Ihat)$, denoted by $\DLP(\Ihat)$, is
\begin{align*}
\max_{y \in \mathbb{R}^{t-1}, z \in \mathbb{R}}
    & a(\Ihat_S)z + q \sum_{j=1}^{t-1} y_j
\\ \textrm{ where } & \sum_{j=1}^{t-1} C_j y_j + (1-h_C)z \le 1 \quad \forall C \in \Ccal
\\ & z \ge 0 \textrm{ and } y_j \ge 0 \quad \forall j \in [t-1]
\end{align*}

\subsubsection{Weighting Function for a Feasible Solution to
\texorpdfstring{$\DLP(\Ihat)$}{DLP(I\^{})}}

We will now see how to obtain a monotonic weighting function
$\eta: [0, 1] \mapsto [0, 1]$ from a feasible solution to $\DLP(\Ihat)$.
To do this, we adapt techniques from Caprara's analysis of $\hdhk$~\cite{caprara2008}.
Such a weighting function will help us upper-bound $\opt(\LP(\Ihat))$
in terms of $\optdbp(I)$.

We first describe a transformation that helps us convert any feasible
solution of $\DLP(\Ihat)$ to a feasible solution that is \emph{monotonic}.
We then show how to obtain a weighting function from this monotonic solution.

\begin{transformation}
\label{trn:hgap:half-mono}
Let $(y, z)$ be a feasible solution to $\DLP(\Ihat)$. Let $s \in [t-1]$.
Define $y_t \defeq 0$ and $h_{t+1} \defeq 0$.
Then change $y_s$ to $\max(y_s, y_{s+1} + (h_{s+1} - h_{s+2})z)$.
\end{transformation}
\begin{lemma}
\label{thm:hgap:half-mono-feas}
Let $(y, z)$ be a feasible solution to $\DLP(\Ihat)$.
Let $(\yhat, z)$ be the new solution obtained by applying \cref{trn:hgap:half-mono}
with parameter $s \in [t-1]$. Then $(\yhat, z)$ is feasible for $\DLP(\Ihat)$.
\end{lemma}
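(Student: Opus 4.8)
The plan is to verify that the single perturbed constraint coordinate does not violate any configuration inequality in $\DLP(\Ihat)$. Since only $y_s$ changes, a configuration $C \in \Ccal$ whose inequality could now be violated must have $C_s \ge 1$. So first I would fix such a $C$ and write the new left-hand side as the old left-hand side plus $C_s(\yhat_s - y_s)$. If $\yhat_s = y_s$ there is nothing to prove, so I would assume $\yhat_s = y_{s+1} + (h_{s+1}-h_{s+2})z$, i.e.\ the perturbation strictly increased $y_s$.

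The key idea is a swap argument: from $C$ I would construct a related configuration $C'$ obtained by moving one unit of "capacity" from coordinate $s$ to coordinate $s+1$, that is, $C'_s = C_s - 1$, $C'_{s+1} = C_{s+1}+1$, and $C'_j = C_j$ otherwise. I need to check $C'$ is a genuine configuration, i.e.\ $h_{C'} \le 1$. Since the heights are non-increasing ($h_{s+1} \ge h_{s+2}$, because items were sorted in non-increasing order of size and $h_j$ is the size of the first item in group $J_j$), replacing a copy of $h_{s+1}$ by a copy of $h_{s+2}$ does not increase the total height, so $h_{C'} \le h_C \le 1$ and $C' \in \Ccal$. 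Then I would apply the (old) feasibility inequality to $C'$: $\sum_j C'_j y_j + (1-h_{C'})z \le 1$. Expanding, $\sum_j C'_j y_j = \sum_j C_j y_j - y_s + y_{s+1}$ and $1 - h_{C'} = 1 - h_C + h_{s+1} - h_{s+2}$, so the inequality for $C'$ reads
\[
\sum_{j=1}^{t-1} C_j y_j + (1-h_C)z \;-\; y_s \;+\; y_{s+1} \;+\; (h_{s+1}-h_{s+2})z \;\le\; 1 .
\]
But $y_{s+1} + (h_{s+1}-h_{s+2})z = \yhat_s$ in the case we are considering, so this is exactly $\sum_{j\ne s} C_j y_j + C_s\cdot 1 \cdot (\text{coefficient})$... more precisely it says $\sum_{j} C_j y_j + (1-h_C)z + (\yhat_s - y_s) \le 1$, which is the new constraint for $C$ in the sub-case $C_s = 1$.

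For $C_s \ge 2$ I would iterate: each unit moved from $s$ to $s+1$ replaces one $h_{s+1}$ by one $h_{s+2}$ (never increasing total height) and each step contributes exactly $(\yhat_s - y_s)$ to the left-hand side, so after $C_s$ steps I reach the configuration with $C_s$ moved to $s+1$, whose old-feasibility inequality gives the new inequality for $C$. The boundary cases are routine: $s = t-1$ uses $y_t \defeq 0$ and $h_{t+1} \defeq 0$ as stipulated in \cref{trn:hgap:half-mono}, and the non-negativity constraints $\yhat_s \ge 0$, $z \ge 0$ are preserved since $\yhat_s \ge y_s \ge 0$. The main obstacle — really the only thing requiring care — is getting the bookkeeping right for $C_s \ge 2$ and confirming that each intermediate vector stays in $\Ccal$; once the monotonicity $h_{s+1} \ge h_{s+2}$ is invoked, the height bound is automatic, and the arithmetic identity tying $(\yhat_s - y_s)$ to the $z$-term is exactly the definition of the transformation.
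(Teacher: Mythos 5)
Your proposal is correct and uses essentially the same argument as the paper: the paper also constructs a new configuration $\Chat$ by moving capacity from coordinate $s$ to coordinate $s+1$, checks $h_{\Chat} \le h_C$ so $\Chat$ is a valid configuration, and matches the arithmetic to get $f(C,\yhat,z) = f(\Chat,y,z) \le 1$. The only cosmetic difference is that the paper moves all $C_s$ units in a single step (setting $\Chat_s = 0$ and $\Chat_{s+1} = C_s + C_{s+1}$), whereas you move them one unit at a time and iterate; both reduce to the same final configuration and the same inequality.
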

\begin{proof}
For a \config{} $C$, let $f(C, y, z) \defeq C^Ty + (1-h_C)z$,
where $C^Ty \defeq \sum_{j=1}^{t-1} C_jy_j$.
Since $(y, z)$ is feasible for $\DLP(\Ihat)$, $f(C, y, z) \le 1$.
As per \cref{trn:hgap:half-mono},
\[ \yhat_j \defeq \begin{cases} \max(y_s, y_{s+1} + (h_{s+1} - h_{s+2})z) & j = s
\\ y_j & j \neq s \end{cases}. \]
If $y_s \ge y_{s+1} + (h_{s+1} - h_{s+2})z$, then $\yhat = y$,
so $(\yhat, z)$ would be feasible for $\DLP(\Ihat)$.
So now assume that $y_s < y_{s+1} + (h_{s+1} - h_{s+2})z$.

Let $C$ be a \config{}. Define $C_t \defeq 0$. Let
\[ \Chat_j \defeq \begin{cases} 0 & j = s \\ C_s + C_{s+1} & j = s+1
\\ C_j & \textrm{otherwise} \end{cases}. \]
Then, $C^T\yhat - \Chat^Ty
= C_s\yhat_s + C_{s+1}\yhat_{s+1} - \Chat_sy_s - \Chat_{s+1}y_{s+1}
= C_s(h_{s+1} - h_{s+2}) z$.

Also, $h_{\Chat} - h_C
= \Chat_sh_{s+1} + \Chat_{s+1}h_{s+2} - C_sh_{s+1} - C_{s+1}h_{s+2}
= - C_s(h_{s+1} - h_{s+2})$.

Since $h_{\Chat} \le h_C \le 1$, $\Chat$ is a \config{}.
\begin{align*}
f(C, \yhat, z) &= C^T\yhat + (1-h_C)z
\\ &= (\Chat^Ty + C_s(h_{s+1} - h_{s+2})z)
    + (1 - h_{\Chat} - C_s(h_{s+1} - h_{s+2}))z
\\ &= f(\Chat, y, z) \le 1.
\end{align*}
Therefore, $(\yhat, z)$ is feasible for $\DLP(\Ihat)$.
\end{proof}

\begin{definition}
Let $(y, z)$ be a feasible solution to $\DLP(\Ihat)$. Let
\[ \yhat_j \defeq \begin{cases}
\max(y_{t-1}, zh_t) & j = t-1
\\ \max(y_j, \yhat_{j+1} + (h_{j+1} - h_{j+2})z) & j < t-1
\end{cases}. \]
Then $(\yhat, z)$ is called the monotonization of $(y, z)$.
\end{definition}
\begin{lemma}
\label{thm:hgap:mono-feas}
Let $(y, z)$ be a feasible solution to $\DLP(\Ihat)$.
Let $(\yhat, z)$ be the monotonization of $(y, z)$.
Then $(\yhat, z)$ is a feasible solution to $\DLP(\Ihat)$.
\end{lemma}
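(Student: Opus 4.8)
The plan is to realize the monotonization $(\yhat,z)$ as the end product of applying \cref{trn:hgap:half-mono} once for each index, processed in decreasing order $s = t-1, t-2, \ldots, 1$, and then to invoke \cref{thm:hgap:half-mono-feas} repeatedly so that feasibility is preserved at every step. (If $t = 1$ the vector $y$ is empty and there is nothing to prove, so assume $t \ge 2$.)

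Concretely, set $(y^{(0)},z) \defeq (y,z)$ and, for $i = 1, \ldots, t-1$, let $(y^{(i)},z)$ be the solution obtained by applying \cref{trn:hgap:half-mono} with parameter $s = t-i$ to $(y^{(i-1)},z)$. Since \cref{trn:hgap:half-mono} only alters coordinate $s$, and coordinate $t-i$ is touched only at step $i$, a straightforward induction on $i$ shows that $y^{(i)}_j = \yhat_j$ for all $j \ge t-i$ and $y^{(i)}_j = y_j$ for all $j < t-i$. The base case $i = 1$ is exactly the first line of the definition of the monotonization once we use the conventions $y_t \defeq 0$ and $h_{t+1} \defeq 0$ from \cref{trn:hgap:half-mono}, so that $y_t + (h_t - h_{t+1})z = zh_t$. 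For the inductive step, at step $i$ coordinate $t-i$ still holds its original value $y_{t-i}$ while coordinate $t-i+1$ already holds $\yhat_{t-i+1}$ by the inductive hypothesis; hence $y^{(i)}_{t-i} = \max(y_{t-i},\, \yhat_{t-i+1} + (h_{t-i+1}-h_{t-i+2})z) = \yhat_{t-i}$, matching the recursive definition of $\yhat$. Taking $i = t-1$ yields $y^{(t-1)} = \yhat$.

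Feasibility now follows by a second, parallel induction: $(y^{(0)},z)$ is feasible for $\DLP(\Ihat)$ by hypothesis, and if $(y^{(i-1)},z)$ is feasible then so is $(y^{(i)},z)$, since the latter is obtained from the former by a single application of \cref{trn:hgap:half-mono} with some $s \in [t-1]$, which preserves feasibility by \cref{thm:hgap:half-mono-feas}. Therefore $(\yhat,z) = (y^{(t-1)},z)$ is feasible for $\DLP(\Ihat)$. Nonnegativity of $\yhat$ is automatic since $\yhat_j \ge y_j \ge 0$ for every $j$, and $z$ is left unchanged throughout.

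The only point requiring care — more a bookkeeping check than a real obstacle — is verifying that the composition of these $t-1$ transformations is exactly the monotonization. This relies on processing the indices top-down: when coordinate $s$ is updated, the value $\yhat_{s+1}$ must already sit in coordinate $s+1$ and coordinate $s$ must not yet have been disturbed. One should also confirm the boundary case $s = t-1$ against the $y_t \defeq 0$, $h_{t+1} \defeq 0$ conventions, which is precisely the $j = t-1$ branch in the definition of the monotonization. Once this identification is in place, feasibility is immediate from \cref{thm:hgap:half-mono-feas}.
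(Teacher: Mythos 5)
Your proof is correct and takes essentially the same approach as the paper: apply \cref{trn:hgap:half-mono} with $s = t-1, t-2, \ldots, 1$ and invoke \cref{thm:hgap:half-mono-feas} at each step. The paper states this in two sentences; you fill in the bookkeeping verifying that the composition of transformations equals the monotonization, which is a reasonable expansion but not a different argument.
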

\begin{proof}
$(\yhat, z)$ can be obtained by multiple applications of \cref{trn:hgap:half-mono}:
first with $s = t-1$, then with $s = t-2$, and so on till $s = 1$.
Then by \thmdepcref{thm:hgap:half-mono-feas}{}, $(\yhat, z)$ is feasible for $\DLP(\Ihat)$.
\end{proof}

Let $(y^*, z^*)$ be an optimal solution to $\DLP(\Ihat)$.
Let $(\yhat, z^*)$ be the monotonization of $(y^*, z^*)$.
Then define the function $\eta: [0, 1] \mapsto [0, 1]$ as
\[ \eta(x) \defeq \begin{cases}
\yhat_1 & \textrm{if } x \in [h_2, 1]
\\ \yhat_j & \textrm{if } x \in [h_{j+1}, h_j), \textrm{ for } 2 \le j \le t-1
\\ xz^* & \textrm{if } x < h_t
\end{cases}. \]

\begin{lemma}
\label{thm:hgap:eta-dff}
$\eta$ is a monotonic \dff{}.
\end{lemma}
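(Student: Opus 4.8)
The plan is to verify separately that $\eta$ is non-decreasing with range contained in $[0,1]$, and that $\eta$ satisfies the defining inequality of a \dff{}. Throughout I rely on \cref{thm:hgap:mono-feas}, which tells us that $(\yhat, z^*)$ is feasible for $\DLP(\Ihat)$: thus $\yhat_j \ge 0$ for all $j$, $z^* \ge 0$, and $\sum_{j=1}^{t-1} C_j\yhat_j + (1 - h_C)z^* \le 1$ for every \config{} $C$. I also use that the breakpoints satisfy $h_t \le h_{t-1} \le \cdots \le h_2 \le h_1 \le 1$.

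First I would establish monotonicity. Directly from the definition of the monotonization, $\yhat_{t-1} \ge z^*h_t$ and, for $j < t-1$, $\yhat_j \ge \yhat_{j+1} + (h_{j+1} - h_{j+2})z^* \ge \yhat_{j+1}$, since $h_{j+1} \ge h_{j+2}$ and $z^* \ge 0$. Hence $\yhat_1 \ge \yhat_2 \ge \cdots \ge \yhat_{t-1} \ge z^*h_t$. The intervals $[0,h_t)$, $[h_{j+1},h_j)$ for $2 \le j \le t-1$, and $[h_2,1]$ partition $[0,1]$, and on them $\eta$ equals $xz^*$, $\yhat_j$, and $\yhat_1$ respectively; on $[0,h_t)$ the map $x \mapsto xz^*$ is non-decreasing with supremum $z^*h_t \le \yhat_{t-1}$, and across consecutive intervals the constant value only increases (since $\yhat_j \le \yhat_{j-1}$). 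So $\eta$ is non-decreasing. For the range: $\yhat_j \ge 0$ and $z^* \ge 0$ give $\eta \ge 0$; plugging the unit \config{} $e_j$ (valid since $h_{e_j} = h_{j+1} \le 1$) into dual feasibility gives $\yhat_j \le 1 - (1 - h_{j+1})z^* \le 1$, and for $x < h_t$ we have $xz^* \le z^*h_t \le \yhat_{t-1} \le 1$.

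Next I would prove the \dff{} inequality. Fix $m \ge 1$ and $x \in [0,1]^m$ with $\sum_{i=1}^m x_i \le 1$. Split $[m]$ into $L \defeq \{i : x_i \ge h_t\}$ and $S \defeq \{i : x_i < h_t\}$. Since the intervals $[h_2,1]$ and $[h_{j+1},h_j)$ (for $2 \le j \le t-1$) partition $[h_t,1]$, each $i \in L$ lies in exactly one of them; call its index $j(i) \in [t-1]$, so that $\eta(x_i) = \yhat_{j(i)}$ and $x_i \ge h_{j(i)+1}$. Define $C \in \mathbb{Z}_{\ge 0}^{t-1}$ by $C_j \defeq \abs{\{i \in L : j(i) = j\}}$. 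Then $h_C = \sum_{j=1}^{t-1} C_jh_{j+1} = \sum_{i \in L} h_{j(i)+1} \le \sum_{i \in L} x_i \le \sum_{i=1}^m x_i \le 1$, so $C$ is a \config{}. Finally,
\[ \sum_{i=1}^m \eta(x_i) = \sum_{i \in L}\yhat_{j(i)} + z^*\sum_{i \in S} x_i = \sum_{j=1}^{t-1} C_j\yhat_j + z^*\sum_{i \in S}x_i, \]
and since $\sum_{i \in S} x_i = \sum_{i=1}^m x_i - \sum_{i \in L} x_i \le 1 - h_C$ with $z^* \ge 0$, this is at most $\sum_{j=1}^{t-1} C_j\yhat_j + (1 - h_C)z^* \le 1$ by feasibility of $(\yhat, z^*)$ for $\DLP(\Ihat)$.

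None of these steps is deep. The only real care needed is in the combinatorial bookkeeping of the last step — confirming that the vector $C$ built from the ``large'' coordinates $x_i$ is genuinely a \config{} (this is exactly where $\sum_i x_i \le 1$ enters) and that the charge $h_{j(i)+1}$ to $C$ is dominated by $x_i$ itself — together with the harmless edge cases when some of the $h_j$ coincide (so some of the intervals above are empty) and the fact that the top piece is the extended interval $[h_2,1]$ rather than $[h_2,h_1]$.
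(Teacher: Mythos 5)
Your proof is correct and follows the same approach as the paper: partition the inputs $x_i$ according to which interval of $\eta$'s piecewise definition they fall in, build a configuration $C$ from the counts in the large intervals, observe $h_C \le 1 - \sum_{i \in S} x_i$, and invoke feasibility of $(\yhat, z^*)$ for $\DLP(\Ihat)$. You are a bit more careful than the paper in two places — you spell out the monotonicity chain $\yhat_1 \ge \cdots \ge \yhat_{t-1} \ge z^*h_t$ and you verify $\eta$ maps into $[0,1]$ (the paper leaves both implicit) — but the argument is the same.
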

\begin{proof}
$\eta$ is monotonic by the definition of monotonization.

Let $X \subseteq (0, 1]$ be a finite set such that $\Sum(X) \le 1$.
Let $X_0 \defeq X \cap [0, h_t)$,
let $X_1 \defeq X \cap [h_2, 1]$
and for $2 \le j \le t-1$, let $X_j \defeq X \cap [h_{j+1}, h_j)$.
Let $C \in \mathbb{Z}^{t-1}_{\ge 0}$ such that $C_j \defeq |X_j|$.
Let $h_C \defeq \sum_{j=1}^{t-1} C_jh_{j+1}$.
\begin{align*}
1 \ge \Sum(X) &= \Sum(X_0) + \sum_{j=1}^{t-1} \Sum(X_j)
\\ &\ge \Sum(X_0) + \sum_{j=1}^{t-1} C_jh_{j+1}
\tag{for $j \ge 1$, each element in $X_j$ is at least $h_{j+1}$}
\\ &= \Sum(X_0) + h_C.
\end{align*}
Since $h_C \le 1 - \Sum(X_0) \le 1$, $C$ is a \config{}. Therefore,
\begin{align*}
\sum_{x \in X} \eta(x)
&= \sum_{j=0}^{t-1} \sum_{x \in X_j} \eta(x)
= z^*\Sum(X_0) + \sum_{j=1}^{t-1} C_j\yhat_j
\tag{by definition of $\eta$}
\\ &\le (1-h_C)z^* + C^T\yhat
\tag{$h_C \le 1 - \Sum(X_0)$}
\\ &\le 1.
\tag{$C$ is a \config{} and $(\yhat, z^*)$ is feasible for $\DLP(\Ihat)$
    by \thmdepcref{thm:hgap:mono-feas}{}}
\end{align*}
\end{proof}

\begin{lemma}
\label{thm:hgap:p-vs-opt}
For $i \in I$, let $p(i) \defeq \eta(h(i))w(i)$.
Then $\opt(\LP(\Ihat)) \le p(I) \le T_k^{d-1}\optdbp(I)$.
\end{lemma}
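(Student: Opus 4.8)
The plan is to prove the two bounds separately, each as a short reduction to facts already in hand.

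For the upper bound $p(I) \le T_k^{d-1}\optdbp(I)$ I would mimic the proof of \cref{thm:fvol-bp}. Since $H_k = f_k/T_k$ is a \dff{} and, by \cref{thm:hgap:eta-dff}, $\eta$ is a \dff{}, I would apply \cref{thm:dff-pack} with the $d$-tuple of \dff{}s consisting of $d-1$ copies of $H_k$ followed by $\eta$. For an item $i$ this transformation produces a cuboid of volume $\eta(h(i))\prod_{j=1}^{d-1}\bigl(f_k(\ell_j(i))/T_k\bigr) = p(i)/T_k^{d-1}$, and \cref{thm:dff-pack} guarantees that whenever a set of items fits in one bin so do their images, so the images have total volume at most $1$. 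Summing over the $\optdbp(I)$ bins of an optimal packing of $I$ yields $p(I)/T_k^{d-1} \le \optdbp(I)$.

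For the lower bound $\opt(\LP(\Ihat)) \le p(I)$ I would pass to the dual: by LP duality $\opt(\LP(\Ihat)) = \opt(\DLP(\Ihat)) = a(\Ihat_S)z^* + q\sum_{j=1}^{t-1} y^*_j$, and since the monotonization $(\yhat,z^*)$ is feasible (\cref{thm:hgap:mono-feas}) with $\yhat_j \ge y^*_j$ for every $j$, it suffices to prove $p(I) \ge a(\Ihat_S)z^* + q\sum_{j=1}^{t-1}\yhat_j$. Split $I = I_S \cup I_L$ according to whether $h(i)\le\delta$ (equivalently $\round(i)\in\Ihat_S$) or $h(i)>\delta$. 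Every shelf of $J$ has height exceeding $\delta$, so $h_t>\delta$ and hence each $i\in I_S$ has $h(i)<h_t$, giving $\eta(h(i)) = z^*h(i)$ and $p(i) = z^*w(i)h(i) = z^*a(i)$; summing, $p(I_S) = z^*a(\Ihat_S)$. For $I_L$, use that $\Ihat_L$ is packed into the shelves $J = \canShelv(\Ihat_L)$: each rectangle (or slice) sits in a shelf $S$ with $h(S)\ge h(i)$, so monotonicity of $\eta$ gives $\eta(h(i))\ge\eta(h(S))$, and summing the width inside each shelf, $p(I_L) \ge \sum_{S\in J}\eta(h(S))\cdot(\text{total width in }S)$. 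All shelves of $J$ except possibly the last have total width exactly $1$; a full shelf lying in linear group $J_j$ has height in $[h_{j+1},h_j]$, and on that interval $\eta\ge\yhat_j$ because $\eta$ equals $\yhat_j$ on $[h_{j+1},h_j)$ while $\eta(h_j)$ equals the previous group's value ($\yhat_{j-1}$, or $\yhat_1$ when $j\le 2$), which is $\ge\yhat_j$ since the monotonization makes $\yhat_1\ge\cdots\ge\yhat_{t-1}$. Each group $J_j$ with $j\le t-1$ has exactly $q$ shelves, and they are all full because the only possibly non-full shelf is the last one, which lies in $J_t$; hence $p(I_L) \ge q\sum_{j=1}^{t-1}\yhat_j$, and adding $p(I_S)=z^*a(\Ihat_S)$ completes the bound.

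The \dff{}-based upper bound and the duality manipulation are routine. The delicate part is the large-item estimate: one must verify that every shelf of $J$ of index at most $m-1$ is genuinely full (so its contribution is $1\cdot\eta(h(S))$), identify the linear group of each shelf from its height, and handle the endpoint $h_j$ at which $\eta$ drops to the previous group's value $\yhat_{j-1}$ — and it is precisely this endpoint subtlety that forces us to use the monotonization, which ensures $\yhat$ is non-increasing.
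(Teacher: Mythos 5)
The upper bound and the duality reduction are sound and match the paper, but the large-item estimate has a real sign error that breaks the chain of inequalities. You write ``each rectangle (or slice) sits in a shelf $S$ with $h(S) \ge h(i)$, so monotonicity of $\eta$ gives $\eta(h(i)) \ge \eta(h(S))$.'' Since $\eta$ is non-decreasing (larger $x$ sits in a higher interval and hence gets a larger $\yhat_j$, because the monotonization makes $\yhat_1 \ge \cdots \ge \yhat_{t-1}$), $h(S) \ge h(i)$ gives $\eta(h(S)) \ge \eta(h(i))$, the opposite of what you claim. Consequently the inequality $p(I_L) \ge \sum_{S \in J} \eta(h(S)) \cdot (\text{total width in } S)$ does not follow; if anything, $\eta(h(S))$ is an \emph{over}-estimate of $\eta(h(i))$ for the items $i$ in $S$, so the sum with shelf heights bounds $p(I_L)$ from \emph{above}.

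The fix is to lower-bound $h(i)$ rather than upper-bound it, which is what the paper's proof does. Since $\canShelv$ packs items in non-increasing height and each $S \in J_j$ (for $j \le t-1$) precedes shelf $jq+1$, every item $i$ packed in such an $S$ satisfies $h(i) \ge h_{j+1}$; combined with $h(i) \le h(S) \le h_j$, you get $h(i) \in [h_{j+1}, h_j]$. You already observe correctly that $\eta \ge \yhat_j$ on this closed interval (because $\eta$ equals $\yhat_j$ on $[h_{j+1}, h_j)$ and equals $\yhat_{j-1} \ge \yhat_j$ at the right endpoint). Applying this to $h(i)$ directly gives $\eta(h(i)) \ge \yhat_j \ge y_j^*$, and then $p(I_L) \ge \sum_{j=1}^{t-1}\sum_{S \in J_j}\sum_{i \in S} y_j^* w(i) = q\sum_{j=1}^{t-1} y_j^*$ using that the $q$ shelves of each $J_j$ ($j \le t-1$) are full. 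The remaining pieces of your argument — $p(I_S) = z^* a(\Ihat_S)$ because $\delta$-small heights fall below $h_t$, the full-shelf count, and the DFF-based upper bound $p(I) \le T_k^{d-1}\optdbp(I)$ — are correct and coincide with the paper's proof.
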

\begin{proof}
Let $(y^*, z^*)$ be an optimal solution to $\DLP(\Ihat)$.
Let $(\yhat, z^*)$ be the monotonization of $(y^*, z^*)$.

In the canonical shelving of $I$, suppose a rectangular item $i$ (or a slice thereof)
lies in shelf $S$ where $S \in J_j$.
Then $h(i) \in [h_{j+1}, h_j]$, where $h_{t+1} \defeq 0$.
This is because shelves in $J \defeq \canShelv(\Ihat)$ are tight.
If $j = 1$, then $\eta(h(i)) = \yhat_1 \ge y^*_1$.
If $2 \le j \le t-1$, then $\eta(h(i)) \in \{\yhat_{j-1}, \yhat_j\} \ge \yhat_j \ge y^*_j$.
We are guaranteed that for $j \in [t-1]$, and each shelf $S \in J_j$, $w(S) = 1$.
\begin{align*}
p(I) &= \sum_{j=1}^t \sum_{S \in J_j} \sum_{i \in S} \eta(h(i))w(i)
    + \sum_{i \in \Ihat_S} \eta(h(i))w(i)
    \tag{by definition of $p$}
\\ &\ge \sum_{j=1}^{t-1} \sum_{S \in J_j} \sum_{i \in S} y^*_j w(i)
    + \sum_{i \in \Ihat_S} (h(i)z^*)w(i)
\tag{by definition of $\eta$}
\\ &= \sum_{j=1}^{t-1} y^*_j q + a(\Ihat_S)z^*
\tag{since $w(J_j) = q$ for $j \in [t-1]$}
\\ &= \opt(\DLP(\Ihat)).
\tag{$(y^*, z^*)$ is optimal for $\DLP(\Ihat)$}
\end{align*}
By strong duality of linear programs, $\opt(\LP(\Ihat)) = \opt(\DLP(\Ihat)) \le p(I)$.

Since $\eta$ and $H_k$ are \dff{}s (by \thmdepcref{thm:hgap:eta-dff}{}),
we get that $p(I) \le T_k^{d-1}\optdbp(I)$ by \thmdepcref{thm:dff-pack}{}.
\end{proof}

\rthmHgapStruct*
\begin{proof}
\begin{align*}
& a(\Ihat_L) \le a(\Ihat)
= \sum_{i \in I} \left(\ell_d(i)\prod_{j=1}^{d-1} f_k(\ell_j(i))\right)
\le T_k^{d-1}\optdbp(I).
\tag{by \thmdepcref{thm:dff-pack}{thm:hgap:struct}}
\end{align*}
\begin{align*}
\sopt_{\delta}(\Ihat) &< \opt(J\lo \cup \Ihat_S) + \delta a(\Ihat_L) + (1+\delta)
\tag{by \thmdepcref{thm:hgap:sopt-le-optlo}{thm:hgap:struct}}
\\ &\le \opt(\LP(\Ihat)) + \ceil{\frac{1}{\delta^2}} + \delta T_k^{d-1}\optdbp(I) + (1+\delta)
\tag{by \thmdepcref{thm:hgap:opt-to-lp,thm:hgap:n-pivots}{thm:hgap:struct}}
\\ &\le T_k^{d-1}(1+\delta)\optdbp(I) + \ceil{\frac{1}{\delta^2}} + 1 + \delta.
\tag{by \thmdepcref{thm:hgap:p-vs-opt}{thm:hgap:struct}}
\end{align*}
\end{proof}

\subsection{Guessing Shelves and Bins}
\label{sec:hgap:guess-shelves}

We want $\guessShelves(\Icalhat, \delta)$ to return all possible packings
of empty shelves into at most $n \defeq |\Icalhat|$ bins such that
each packing is structured for $(\flatten(\Icalhat), \delta)$.

Let $H = \{h(i): i \in \flatten(\Icalhat)\}$. Let $N \defeq |\flatten(\Icalhat)|$.
$\guessShelves(\Icalhat, \delta)$ starts by picking the distinct heights of shelves
by iterating over all subsets of $H$ of size at most $\ceildeltsq$.
The number of such subsets is at most $N^{\ceildeltsq}+1$.
Let $\Htild \defeq \{h_1, h_2, \ldots, h_t\}$ be one such guess, where $t \le \ceildeltsq$.
\WLoG, assume $h_1 > h_2 > \ldots > h_t > \delta$.

Next, $\guessShelves$ needs to decide the number of shelves of each height
and a packing of those shelves into bins.
Let $C \in \mathbb{Z}_{\ge 0}^t$ such that $h_C \defeq \sum_{j=1}^{t-1} C_jh_j \le 1$.
Then $C$ is called a \config{}.
$C$ represents a set of shelves that can be packed into a bin
and where $C_j$ shelves have height $h_j$.
Let $\Ccal$ be the set of all \config{}s.
We can pack at most $\ceil{1/\delta}-1$ items into a bin because $h_t > \delta$.
By \thmdepcref{thm:nos-sum}{thm:hgap:guess-shelves}, we get
\[ |\Ccal| \le \binom{\ceil{1/\delta}-1+t}{t}
\le \binom{\ceil{1/\delta}-1+\ceildeltsq}{\ceil{1/\delta}-1}
\le \left(\ceil{\frac{1}{\delta^2}}+1\right)^{1/\delta}. \]
There can be at most $n$ bins, and $\guessShelves$ has to decide the \config{} of each bin.
By \thmdepcref{thm:nos-sum}{thm:hgap:guess-shelves},
the number of ways of doing this is at most
$\binom{|\Ccal|+n}{|\Ccal|} \le (n+1)^{|\Ccal|}$.
Therefore, $\guessShelves$ computes all \config{}s and then
iterates over all $\binom{|\Ccal|+n}{|\Ccal|}$ combinations of these configs.

This completes the description of $\guessShelves$ and proves
\cref{thm:hgap:guess-shelves}.

\subsection{\texorpdfstring{$\chooseAndPack$}{choose-and-pack}}
\label{sec:hgap:choose-and-pack}

$\chooseAndPack(\Icalhat, P, \delta)$ takes as input a set $\Icalhat$ of 2D itemsets,
a packing $P$ of empty shelves into bins and constant $\delta \in (0, 1)$.
It tries to pack $\Icalhat$ into $P$ and one additional shelf.
Before we design $\chooseAndPack$, let us see how to handle a special case,
i.e., where $\Icalhat$ is \emph{simple}.
\begin{definition}
A set $\Icalhat$ of 2D itemsets is \emph{$\delta$-simple} iff
the width of each $\delta$-large item in $\flatten(\Icalhat)$
is a multiple of $1/|\Icalhat|$.
\end{definition}

Let $P$ be a bin packing of empty shelves.
Let $h_1 > h_2 > \ldots > h_t$ be the distinct heights of
the shelves in $P$, where $h_t > \delta$.
We will use dynamic programming to either pack a simple instance $\Icalhat$
into $P$ or claim that no assortment of $\Icalhat$ can be packed into $P$.
Call this algorithm $\simpleChooseAndPack(\Icalhat, P, \delta)$.

Let $\Icalhat \defeq \{I_1, I_2, \ldots, I_n\}$.
For $j \in \{0, 1, \ldots, n\}$, define $\Icalhat_j \defeq \{I_1, I_2, \ldots, I_j\}$,
i.e., $\Icalhat_j$ contains the first $j$ itemsets from $\Icalhat$.
Let $\vec{u} \defeq [u_1, u_2, \ldots, u_t] \in \{0, 1, \ldots, n^2\}^t$ be a vector.
Let $\Phi(j, \vec{u})$ be the set of all assortments of $\Icalhat_j$ that can be
packed into $t$ shelves, where the $r\Th$ shelf has height $h_r$ and width $u_r/n$.
For a set $K$ of items, define $\smallArea(K)$ as
the total area of $\delta$-small items in $K$.
Define $g(j, \vec{u}) \defeq \min_{K \in \Phi(j, \vec{u})} \smallArea(K)$.
If $\Phi(j, \vec{u}) = \emptyset$, then we let $g(j, \vec{u}) = \infty$.

We will show how to compute $g(j, \vec{u})$ for all $j \in \{0, 1, \ldots, n\}$
and all $\vec{u} \in \{0, 1, \ldots, n^2\}^t$ using dynamic programming.
Let there be $n_r$ shelves in $P$ having height $h_r$.
Then for $j = n$ and $u_r = n_r n$, $\Icalhat$ can be packed into $P$ iff
$g(j, \vec{u})$ is at most the area of non-shelf space in $P$.

Note that in any solution $K$ corresponding to $g(j, \vec{u})$,
we can assume without loss of generality that the item $i$ from $K \cap I_j$
is placed in the smallest shelves possible.
This is because we can always swap $i$ with the slices of items in those shelves.
This observation gives us the following recurrence relation for $g(j, \vec{u})$:
\begin{equation}
\label{eqn:g-rec}
g(j, \vec{u}) = \begin{cases}
\infty & \textrm{ if } u_j < 0 \textrm{ for some } j \in [t]
\\ 0 & \textrm{ if } n = 0 \textrm{ and } u_j \ge 0 \textrm{ for all } j \in [t]
\\ \min_{i \in I_j} \left(\begin{array}{ll}\smallArea(\{i\})
        \\ + \, g(j-1, \reduce(\vec{u}, i))\end{array} \right)
    & \textrm{ if } n > 0 \textrm{ and } u_j \ge 0 \textrm{ for all } j \in [t]
\end{cases}
\end{equation}
Here $\reduce(\vec{u}, i)$ is a vector obtained as follows:
If $i$ is $\delta$-small, then $\reduce(\vec{u}, i) \defeq \vec{u}$.
Otherwise, initialize $x$ to $w(i)$.
Let $p_i$ be the largest integer $r$ such that $h(i) \le h_r$.
For $r$ varying from $p_i$ to 2, subtract $\min(x, u_j)$ from $x$ and $u_j$.
Then subtract $x$ from $u_1$. The new value of $\vec{u}$ is defined to be
the output of $\reduce(\vec{u}, i)$.

The recurrence relation allows us to compute $g(j, \vec{u})$
for all $j$ and $\vec{u}$ using dynamic programming
in time $O(Nn^{2t})$ time, where $N \defeq |\flatten(\Icalhat)|$.
With a bit more work, we can also compute the corresponding assortment $K$, if one exists.
Therefore, $\simpleChooseAndPack(\Icalhat, P, \delta)$ computes a packing of $\Icalhat$
into $P$ if one exists, or returns \Null{} if no assortment of $\Icalhat$ can be packed into $P$.

Now we will look at the case where $\Icalhat$ is not $\delta$-simple.
Let $\Icalhat'$ be the instance obtained by rounding up the width
of each $\delta$-large item in $\Icalhat$ to a multiple of $1/n$, where $n \defeq |\Icalhat|$.
Let $\Pbar$ be the bin packing obtained by adding another bin to $P$
containing a single shelf of height $h_1$.
$\chooseAndPack(\Icalhat, P, \delta)$ computes $\Icalhat'$ and $\Pbar$
and returns the output of $\simpleChooseAndPack(\Icalhat', \Pbar, \delta)$.

\rthmCAPCorrect*
\begin{proof}
Follows from the definition of $\simpleChooseAndPack$.
\end{proof}

\rthmCAPNotNull*
\begin{proof}
Let $\Khat'$ be the items obtained by rounding up the width of each item in $\Khat$
to a multiple of $1/n$. Then $\Khat'$ is an assortment of $\Icalhat'$.
We will show that $\Khat'$ fits into $\Pbar$,
so $\simpleChooseAndPack(\Icalhat', \Pbar, \delta)$ will not output \Null.

Slice each item $i \in \Khat'$ into two pieces using a vertical cut such that
one piece has width equal to the original width of $i$ in $\Khat$,
and the other piece has width less than $1/n$.
This splits $\Khat'$ into sets $\Khat$ and $T$.
$T$ contains at most $n$ items, each of width less than $1/n$.
Therefore, we can pack $\Khat$ into $P$ and we can pack $T$ into
the newly-created shelf of height $h_1$.
Therefore, $\Khat'$ can be packed into $\Pbar$,
so $\simpleChooseAndPack(\Icalhat', \Pbar, \delta)$ won't output \Null.
\end{proof}

\rthmCAPTime*
\begin{proof}
The running time of $\chooseAndPack(\Icalhat, P, \delta)$
is dominated by computing $g(j, \vec{u})$ for all $j$ and $\vec{u}$,
which takes $O(Nn^{2t})$ time.
Since $P$ is structured for $(\Icalhat, \delta)$, the number of distinct shelves
in $P$, which is $t$, is at most $\ceildeltsq$.
\end{proof}

\subsection{\texorpdfstring{$\inflate$}{inflate}}
\label{sec:hgap:inflate}

Let $I$ be a set of $d$D items. Let $P$ be a shelf-based $\delta$-fractional
bin packing of $\Ihat \defeq \round(I)$ into $m$ bins.
Let there be $t$ distinct heights of shelves in $P$: $h_1 > h_2 > \ldots > h_t > \delta$.
We want to design an algorithm $\inflate(P)$ that returns a packing
of $I$ into approximately $|P|$ bins.

Define $\Ihat_L \defeq \{i \in \Ihat: h(i) > \delta\}$ and $\Ihat_S \defeq \Ihat - \Ihat_L$.
Let there be $Q$ distinct base types in $I$ (so $Q \le k^{d-1}$).

\subsubsection{Separating Base Types}
\label{sec:hgap:sep-btype}

We will now impose an additional constraint over $P$:
items in each shelf must have the same $\btype$.
This will be helpful later, when we will try to compute a packing of $d$D items $I$.

Separating base types of $\Ihat_S$ is easy, since we can slice them
in both directions. An analogy is to think of a mixture of
multiple immiscible liquids of different densities settling into equilibrium.
%

\begin{comment}
\begin{figure}[!ht]
\centering
\begin{tikzpicture}
\fill[black!20] (0,2) rectangle +(4,2);
\draw (0,0) rectangle +(4,4);
\draw[->,>={Stealth}] (4.5,2) -- (5.5,2);
\fill[black!30] (6,2) rectangle +(4,0.8);
\fill[black!20] (6,2.8) rectangle +(4,0.6);
\fill[black!10] (6,3.4) rectangle +(4,0.6);
\draw (6,0) rectangle +(4,4);
\end{tikzpicture}

\caption{Separating base types of $\Khat_S$ in a bin.
This example has 3 different base types in $\Khat_S$.}
\label{fig:split-base-small}
\end{figure}
\end{comment}

Let there be $n_j$ shelves of height $h_j$.
Let $\Ihat_j$ be the items packed into shelves of height $h_j$.
Therefore, $w(\Ihat_j) \le n_j$.
Let $\Ihat_{j,q} \subseteq \Ihat_j$ be the items of base type $q \in [Q]$.

For each $q$, pack $\Ihat_{j,q}$ into $\smallceil{w(\Ihat_{j,q})}$ shelves of height $h_j$
(slicing items if needed). For these newly-created shelves, define the $\btype$
of the shelf to be the $\btype$ of the items in it.
Let the number of newly-created shelves of height $h_j$ be $n_j'$. Then
\[ n_j' = \sum_{q=1}^Q \smallceil{w(\Ihat_{j,q})}
< \sum_{q=1}^Q w(\Ihat_{j,q}) + Q
\le n_j + Q \implies n_j' \le n_j + Q - 1. \]
$n_j$ of these shelves can be packed into existing bins in place of the old shelves.
The remaining $n_j' - n_j \le Q-1$ shelves can be packed on the base of new bins.

Therefore, by using at most $t(Q-1)$ new bins,
we can ensure that for every shelf,
all items in that shelf have the same $\btype$.
These new bins don't contain any items from $\Ihat_S$.
Call this new bin packing $P'$.
This transformation takes $O(|I|d\log|I|)$ time.

\subsubsection{Forbidding Horizontal Slicing}
\label{sec:hgap:forbid-hslice}

We will now use $P'$ to compute a shelf-based bin packing $P''$ of $\Ihat$ where
items in $\Ihat$ can be sliced using vertical cuts only.

Let $\Ihat_{q,S}$ be the items in $\Ihat_S$ of base type $q$.
Pack items $\Ihat_{q,S}$ into shelves using $\canShelvHyp$.
Suppose $\canShelv$ used $m_q$ shelves to pack $\Ihat_{q,S}$.
For $j \in [m_q]$, let $h_{q,j}$ be the height of the $j\Th$ shelf.
Let $H_q \defeq \sum_{j=1}^{m_q} h_{q,j}$ and $H \defeq \sum_{q=1}^Q H_q$.
Since for $j \in [m_q-1]$, all items in the $j\Th$ shelf have height at least $h_{q,j+1}$,
\[ a(\Ihat_{q,S}) > \sum_{j=1}^{m_q-1} h_{q,j+1} \ge H_q - h_{q,1} \ge H_q - \delta. \]
Therefore, $H < a(\Ihat_S) + Q\delta$.
Let $\Jhat_S$ be the set of these newly-created shelves.

Use Next-Fit to pack $\Jhat_S$ into the space used by $\Ihat_S$ in $P'$.
$\Ihat_S$ uses at most $m$ bins in $P'$ (recall that $m \defeq |P|$).
A height of less than $\delta$ will remain unpacked in each of those bins.
The total height occupied by $\Ihat_S$ in $P'$ is $a(\Ihat_S)$.
Therefore, Next-Fit will pack a height of more than $a(\Ihat_S) - \delta m$.

Some shelves in $\Jhat_S$ may still be unpacked.
Their total height will be less than
$H - (a(\Ihat_S) - \delta m) < \delta(Q + m)$.
We will pack these shelves into new bins using Next-Fit.
The number of new bins used is at most $\ceil{\delta(Q + m)/(1-\delta)}$.
Call this bin packing $P''$. The number of bins in $P''$
is at most $m' \defeq m + t(Q-1) + \ceil{\delta(Q+m)/(1-\delta)}$.

\subsubsection{Shelf-Based \safed{}D packing}
\label{sec:hgap:2d-to-dd}

We will now show how to convert the packing $P''$ of $\Ihat$ that uses $m'$ bins
into a packing of $I$ that uses $m'$ $d$D bins.

First, we repack the items into the shelves.
For each $q \in [Q]$, let $\Jhat_q$ be the set of shelves in $P''$ of $\btype$ $q$.
Let $\Ihat^{[q]}$ be the items packed into $\Jhat_q$.
Compute $\Jhat^*_q \defeq \canShelvHyp(\Ihat^{[q]})$ and pack the shelves
$\Jhat^*_q$ into $\Jhat_q$. This is possible by
\thmdepcref{thm:hgap:can-shelv-pred}{thm:hgap:inflate}.

This repacking gives us an ordering of shelves in $\Jhat_q$.
Number the shelves from 1 onwards.
All items have at most 2 slices. If an item has 2 slices, and one slice is packed
into shelf number $p$, then the other slice is packed into shelf number $p+1$.
The slice in shelf $p$ is called the leading slice.
Every shelf has at most one leading slice.

Let $S_j$ be the $j\Th$ shelf of $\Jhat_q$.
Let $R_j$ be the set of unsliced items in $S_j$
and the item whose leading slice is in $S_j$.
Order the items in $R_j$ arbitrarily, except that the sliced item, if any, should be last.
Then $w(R_j - \last(R_j)) < 1$.
So, we can use $\hdhkunitHyp^{[q]}(R_j)$ to pack $R_j$ into a $(d-1)$D bin.
This $(d-1)$D bin gives us a $d$D shelf whose height is the same as that of $S_j$.
On repeating this process for all shelves in $\Jhat_q$ and for all $q \in [Q]$,
we get a packing of $I$ into shelves.
Since each $d$D shelf corresponds to a shelf in $P''$ of the same height,
we can pack these $d$D shelves into bins in the same way as $P''$.
This gives us a bin packing of $I$ into $m'$ bins.

\subsubsection{The Algorithm}

\Cref{sec:hgap:sep-btype,sec:hgap:forbid-hslice,sec:hgap:2d-to-dd}
describe how to convert a shelf-based $\delta$-fractional packing $P$
of $\Ihat$ having $t$ distinct shelf heights into a shelf-based $d$D bin packing of $I$.
We call this conversion algorithm $\inflate$.

It is easy to see that the time taken by $\inflate$ is $O(|I|d\log|I|)$.

If $P$ has $m$ bins, then the number of bins in $\inflate(P)$ is at most
\[ m + t(Q-1) + \ceil{\frac{\delta(Q+m)}{1-\delta}}
< \frac{m}{1-\delta} + t(Q-1) + 1 + \frac{\delta Q}{1-\delta}. \]
This proves \cref{thm:hgap:inflate}.

\subsection{Improving Running Time}
\label{sec:hgap:improve-time}

For simplicity of presentation, we left out some opportunities for improving the running
time of $\hgapk$. Here we briefly describe a way of speeding up $\hgapk$
which reduces its running time from $O(N^{1+\ceildeltsq}n^{R+2\ceildeltsq} + Nd + nd\log n)$ to
$O(N^{1+\ceildeltsq}n^{2\ceildeltsq} + Nd + nd\log n)$.
Here $N \defeq |\flatten(\Icalhat)|$, $n \defeq |\Icalhat|$, $\delta \defeq \eps/(2+\eps)$
and $R \defeq \binom{\ceildeltsq + \ceil{1/\delta}-1}{\ceil{1/\delta}-1}
\le (1+\ceildeltsq)^{1/\delta}$.

In $\guessShelves$, we guess two things simultaneously:
(i) the number and heights of shelves
(ii) the packing of the shelves into bins.
This allows us to guess the optimal structured $\delta$-fractional packing.
But we don't need that; an approximate structured packing would do.

Therefore, we only guess the number and heights of shelves.
We guess at most $N^{\ceildeltsq}+1$ distinct heights of shelves,
and by \cref{thm:nos-sum}, we guess at most $(n+1)^{\ceildeltsq}$ vectors
of shelf-height frequencies.
Then we can use Lueker and Fernandez de la Vega's $O(n\log n)$-time APTAS for
1BP~\cite{bp-aptas} to pack the shelves into bins.

Also, once we guess the distinct heights of shelves,
we don't need to run $\chooseAndPack$ afresh for every packing of empty shelves.
We can reuse the dynamic programming table.

The running time is, therefore,
\begin{align*}
& O\left(N^{\ceildeltsq}\left(n^{\ceildeltsq} n\log n + Nn^{2\ceildeltsq}\right) + Nd + nd\log n\right)
\\ &= O(N^{1+\ceildeltsq}n^{2\ceildeltsq} + Nd + nd\log n).
\end{align*}

\section{\texorpdfstring{$\hdhkunit$}{HDH-unit-pack}}
\label{sec:hdhkunit}

This section gives a precise description of $\hdhkunit$ (see \cref{sec:hdhk-prelims:hdhkunit})
and proves its correctness.

\subsection{Shelf-Based Packing}
\label{sec:shelf}

A packing of 2D items in a bin (or strip) is said to be \emph{shelf-based} iff
the bin can be decomposed into regions, called shelves, using horizontal cuts,
and the bottom edge of each item touches the bottom edge of some shelf.
See \cref{fig:shelf-based} for an example.
Next-Fit Decreasing Height (NFDH) and First-Fit Decreasing Height (FFDH)%
~\cite{coffman1980performance} are well-known shelf-based algorithms for 2BP and 2SP.

\begin{figure}[!ht]
\centering
\begin{tikzpicture}[scale=0.8,
shelf-line/.style = {draw={black!30}},
item/.style = {fill={black!10}}
]
\draw[shelf-line] (0,1.5) -- (4,1.5);
\draw[shelf-line] (0,2.5) -- (4,2.5);
\draw (0,0) rectangle (4,4);
\draw[item]
    (0,0) rectangle +(1.25,1.5)
    (1.25,0) rectangle +(1.5,1)
    (2.75,0) rectangle +(1,0.5);
\draw[item]
    (0,1.5) rectangle +(1.5,1)
    (1.5,1.5) rectangle +(1.75,0.75);
\draw[item]
    (0,2.5) rectangle +(2,0.6)
    (2,2.5) rectangle +(1,0.5);
\end{tikzpicture}

\caption{An example of shelf-based packing for $d=2$ with 3 shelves.}
\label{fig:shelf-based}
\end{figure}
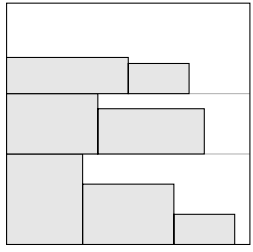

The definition of shelf-based packing can be extended to $d$D for $d \ge 1$.
For $d=1$, every packing is said to be a shelf-based packing.
For $d \ge 2$, for a $d$D cuboid, there are two faces of the cuboid that are
perpendicular to the $d\Th$ dimension. The face with the smaller $d\Th$ coordinate
is called the \emph{base} of the cuboid.
A packing of $d$D items into a bin is shelf-based iff the $d$D bin
can be split into $d$D shelves using hyperplanes perpendicular to the $d\Th$ dimension,
and the base of each item is placed on the base of some shelf.

A packing of $d$D items into a bin is \emph{recursive-shelf-based} iff
the packing is shelf-based and the packing of the bases of items on the base of each shelf
is a $(d-1)$D recursive-shelf-based bin packing.
(For $d=1$, every packing is said to be recursive-shelf-based.)

This helps us reduce $d$BP to $(d-1)$BP, $(d-1)$BP to $(d-2)$BP, and so on.
The algorithm $\hdhk$ by Caprara~\cite{caprara2008}
outputs a recursive-shelf-based packing by using this strategy.

\subsection{Description and Analysis of \texorpdfstring{$\hdhkunit$}{HDH-unit-pack}}

For a $d$D item $i$, define $i^{(j)}$ as the $j$D item
obtained by ignoring all dimensions of $i$ other than the first $j$.
For a set $I$ of $d$D items, let $I^{(j)} \defeq \{i^{(j)}: i \in I\}$.

$\hdhkunit$ takes as input a set $I$ of $d$D items,
where all items in $I$ have the same type vector and $\vol(f_k(I - \{\last(I)\})) < 1$.
$\hdhkunit(I)$ works recursively on $d$.
When $d = 1$, it simply returns $I$.
When $d > 1$, it first sorts $I$ in decreasing order of $d\Th$ dimension
if $\type_k(\ell_d(i)) = k$ for each item $i \in I$.
It then repeatedly picks the smallest prefix $J$ of $I$ such that $\vol(f_k(J^{(d-1)})) \ge 1$,
and packs $J$ into a $d$D shelf.
It packs all those shelves into a $d$D bin and returns that packing.
See \cref{algo:hdhkunit} for a more precise description.

\begin{algorithm}[!ht]
\caption{$\hdhkunit^{[t]}(I)$:
For any $d \ge 1$, returns a recursive-shelf-based packing of $I$ into a $d$D bin,
where $I$ is a sequence of $d$D cuboidal items
and $\vol(f_k(I-\{\last(I)\})) < 1$.
Here $\last(I)$ is the last item in sequence $I$.
Also, all items in $I$ have the same type $t$, i.e.,
$\forall i \in I, \type(i) = t$.}
\label{algo:hdhkunit}
\begin{algorithmic}[1]
\If{$d \texttt{ == } 1$}
    \Comment{when items are 1D}
    \State \Return $I$.
    \Comment{\cref{thm:hdhkunit} proves that they fit in a bin.}
\EndIf
\If{$t_d \texttt{ == } k$}
    \Comment{when length in $d\Th$ dimension is small}
    \State \label{alg-line:hdhunit:sort}Sort $I$ in decreasing order of $d\Th$ dimension.
\EndIf
\Comment{otherwise don't disturb ordering of items.}
\State Let $P$ be an empty list.
\While{$|I| > 0$}
    \State \label{alg-line:hdhunit:prefix}Find $J$, the smallest prefix of $I$ such that
        $J = I$ or $\vol(f_k(J^{(d-1)})) \ge 1$.
    \State Let $t'$ be a $(d-1)$-dimensional vector obtained by
        removing the $d\Th$ entry from $t$.
    \State \label{alg-line:hdhunit:recurse}$S = \hdhkunit^{[t']}(J^{(d-1)})$
        \Comment{$S$ is a $d$D shelf containing items $J$.}
    \State Append $S$ to the list $P$.
    \State Remove $J$ from $I$.
\EndWhile
\State Return the shelf packing $P$.
\Comment{\cref{thm:hdhkunit} proves that the sum of heights of shelves doesn't exceed 1,
so this is a valid packing.}
\end{algorithmic}
\end{algorithm}

Define $\wfk(i)$ to be the cuboid $\itild$ where $\ell_j(\itild) \defeq f_k(\ell_j(i))$
for $j \in [d-1]$ and $\ell_d(\itild) \defeq \ell_d(i)$.
Define $\wfk(I) \defeq \{\wfk(i): i \in I\}$.

\begin{theorem}[Correctness]
\label{thm:hdhkunit}
For a set $I$ of $d$D cuboidal items, if $\vol(f_k(I-\{\last(I)\})) < 1$,
then $\hdhkunit(I)$ can pack $I$ in a $d$D bin.
\end{theorem}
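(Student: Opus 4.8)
The plan is to prove the statement by induction on the dimension $d$, following the recursive structure of $\hdhkunit$. Throughout I will use that all items of $I$ share one type vector $t = (t_1, \dots, t_d)$ (part of the precondition recorded in \cref{prop:hdhkunit}) and that $f_k$ acts coordinatewise, so $\vol(f_k(i)) = f_k(\ell_d(i))\cdot\vol(f_k(i^{(d-1)}))$. For the base case $d = 1$, $\hdhkunit$ simply returns $I$, so it suffices to show $\Sum(I)\le 1$. If $t_1 = q\in[k-1]$, every item has length in $(\tfrac1{q+1},\tfrac1q]$ and $f_k$-value $\tfrac1q$, so $\vol(f_k(I-\{\last(I)\})) = (|I|-1)/q < 1$ forces $|I|\le q$ and hence $\Sum(I)\le |I|/q\le 1$. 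If $t_1 = k$, every item has length $\le\tfrac1k$ and $f_k$-value $\tfrac{k}{k-2}\ell_1(i)$, so the hypothesis gives $\sum_{i\ne\last(I)}\ell_1(i) < \tfrac{k-2}{k}$, whence $\Sum(I) < \tfrac{k-2}{k} + \tfrac1k < 1$.

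For the inductive step, assume the claim in dimension $d-1$. On input $I$ the algorithm, after sorting $I$ in decreasing order of the $d$-th length when $t_d = k$, cuts $I$ into consecutive chunks $J_1,\dots,J_m$, where for $r < m$ the chunk $J_r$ is the shortest prefix of the not-yet-used items with $\vol(f_k(J_r^{(d-1)}))\ge 1$, and $J_m$ is the remainder. First I would check that each recursive call $\hdhkunit^{[t']}(J_r^{(d-1)})$ is legal, i.e.\ $\vol(f_k((J_r-\{\last(J_r)\})^{(d-1)})) < 1$: for $r < m$ this is exactly the minimality of the chosen prefix, and for $r = m$ it holds because $J_m$ is either a prefix that failed the $\ge 1$ test or is itself the shortest prefix reaching $1$, so deleting its last item leaves volume $< 1$. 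By the induction hypothesis each call then packs $J_r^{(d-1)}$ into a $(d-1)$D unit bin, giving a $d$D shelf $S_r$ of height $h(S_r) = \max_{i\in J_r}\ell_d(i)$, which when $t_d = k$ equals $\ell_d$ of the first (hence tallest) item of $J_r$.

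The remaining and main task is to show $\sum_{r=1}^m h(S_r)\le 1$, so the shelves stack in the bin; the trick is to keep the slack coming from excluding $\last(I)$, which always lies in $J_m$. If $t_d = q\in[k-1]$, every shelf has height $\le\tfrac1q$ and, for $r < m$, $\vol(f_k(J_r)) = \tfrac1q\vol(f_k(J_r^{(d-1)}))\ge\tfrac1q$; summing over $r < m$ gives $(m-1)/q\le\vol(f_k(I-\{\last(I)\})) < 1$, so $m\le q$ and $\sum_r h(S_r)\le m/q\le 1$. If $t_d = k$, the sort guarantees that for $2\le r\le m$ the height $h(S_r)$ is at most $\ell_d(i)$ for every $i\in J_{r-1}$; multiplying by $\vol(f_k(i^{(d-1)}))$, summing over $i\in J_{r-1}$, and using $\ell_d(i)\vol(f_k(i^{(d-1)})) = \tfrac{k-2}{k}\vol(f_k(i))$ together with $\vol(f_k(J_{r-1}^{(d-1)}))\ge 1$ gives $h(S_r)\le\tfrac{k-2}{k}\vol(f_k(J_{r-1}))$; combining this with $h(S_1)\le\tfrac1k$ and $\sum_{r < m}\vol(f_k(J_r))\le\vol(f_k(I)) < 1 + \vol(f_k(\last(I)))\le 1 + \tfrac1{k-2}$ yields $\sum_r h(S_r) < \tfrac1k + \tfrac{k-2}{k}\cdot\tfrac{k-1}{k-2} = 1$, completing the induction.

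I expect the $t_d = k$ case to be the real obstacle: it is the only place that genuinely uses the decreasing-height sort and the precise shape of $f_k$ on small lengths, and obtaining a bound \emph{strictly} below $1$ (rather than something useless like $(2k-3)/k\ge 1$) relies on simultaneously retaining the $\{\last(I)\}$ slack and bounding $\vol(f_k(\last(I)))$ by $\tfrac1{k-2}$ — the product of $d-1$ factors that are at most $1$ times a last factor $\tfrac{k}{k-2}\ell_d(\last(I))\le\tfrac1{k-2}$ — rather than by the trivial $1$. Everything else (the legality of the recursive calls, the base case, and the $t_d\in[k-1]$ case) is routine.
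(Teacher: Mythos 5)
Your proof is correct and follows essentially the same route as the paper's: induction on $d$, the same base-case split on $t_1 = k$ vs.~$t_1 \ne k$, and the same two inductive cases, with the $t_d = k$ case using the decreasing-height sort, the per-shelf bound $h(S_r) \le \tfrac{k-2}{k}\vol(f_k(J_{r-1}))$, and the slack from $\last(I)$ together with $\vol(f_k(\last(I)))\le \tfrac{1}{k-2}$. The only cosmetic difference is that you carry $\vol(f_k(\cdot))$ through the computation where the paper works with $\vol(\wfk(\cdot))=\tfrac{k-2}{k}\vol(f_k(\cdot))$; the inequalities are term-for-term equivalent.
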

\begin{proof}
Let us prove this by induction on $d$.
Let $\mathcal{P}(d)$ be this proposition:
For every sequence $I$ of $d$D items, if $\vol(f_k(I)) < 1 + \vol(f_k(\last(I)))$,
then $\hdhkunit(I)$ can pack $I$ into a $d$D bin.

\textbf{Base case}:
Let $I$ be a sequence of 1D items such that $\vol(f_k(I)) < 1 + \vol(f_k(\last(I)))$.

Suppose $t_1 \neq k$.
Then for all $i \in I, \vol(i) \le 1/t_1 = \vol(f_k(i))$. Therefore,
\begin{align*}
& \vol(f_k(I)) < 1 + \vol(f_k(\last(I)))
\\ &\implies \frac{|I|}{t_1} < 1 + \frac{1}{t_1}
\\ &\implies |I| < t_1 + 1 \implies |I| \le t_1
\\ &\implies \vol(I) \le \frac{|I|}{t_1} \le 1.
\end{align*}
Since $\vol(I) \le 1$, $I$ fits in a bin.

Suppose $t_1 = k$.
Then for all $i \in I, \vol(f_k(i)) = \frac{k}{k-2}\vol(i)$. Therefore,
\begin{align*}
\vol(I) &= \frac{k-2}{k}\vol(f_k(I))
\\ &< \frac{k-2}{k}\left( 1 + \vol(f_k(\last(I)))\right)
\\ &= \frac{k-2}{k} + \vol(\last(I))
\\ &\le \frac{k-2}{k} + \frac{1}{k} < 1.
\end{align*}
Since $\vol(I) \le 1$, $I$ fits in a bin.
Therefore, $\mathcal{P}(1)$ holds.

\textbf{Inductive step}:\\
Let $d \ge 2$ and assume $\mathcal{P}(d-1)$ holds.
Let $I$ be a sequence of $d$D items such that $\vol(f_k(I)) < 1 + \vol(f_k(\last(I)))$.
$\mathcal{P}(d-1)$ implies that $\hdhkunit(I)$ doesn't fail at
line \ref{alg-line:hdhunit:recurse}.
Let $s \defeq \last(I)$.

For $i \in I$, define $w(i) \defeq \prod_{j=1}^{d-1} f_k(\ell_j(i))$
and for $X \subseteq I$, define $w(X) \defeq \sum_{i \in X} w(i)$.
Let there be $p$ shelves in the list $P$.
Let $S_j$ be the $j\Th$ shelf that was added to $P$.
Given the way each prefix is chosen in line \ref{alg-line:hdhunit:prefix},
\begin{equation}\label{eqn:hdhunit:shelf-wide}
\forall j \le p-1, w(S_j) \ge 1 \end{equation}
Define $\ell_d(S_j) \defeq \max_{i \in S_j} \ell_d(i)$ to be the height of shelf $S_j$.
Let $H$ be the total height of the shelves, i.e. $H \defeq \sum_{j=1}^p \ell_d(S_j)$.
Then we need to prove that the shelves fit in the bin, i.e. $H \le 1$.

\textbf{Case 1}: Suppose $t_d \neq k$.\\
Then $\forall i \in I, \ell_d(i) \le 1/t_d = f_k(\ell_d(i))$. Therefore,
\[ 1 > \vol(f_k(I-s)) = \frac{w(I-s)}{t_d} \implies w(I-s) < t_d. \]
Since ordering of items is not disturbed, $s \in S_p$. Therefore,
\begin{align*}
& t_d > w(I-s) = \sum_{j=1}^{p-1} w(S_j) + w(S_p - s) \ge p-1
\tag{by (\ref{eqn:hdhunit:shelf-wide})}
\\ &\implies p < t_d + 1 \implies p \le t_d
\\ &\implies H = \sum_{j=1}^p \ell_d(S_j) \le \frac{p}{t_d} \le 1.
\tag{$\forall i \in I, \ell_d(i) \le 1/t_d$}
\end{align*}
Since $H \le 1$, the shelves fit in a $d$D bin.

\textbf{Case 2}: Suppose $t_d = k$.\\
Then $\forall i \in I, f_k(\ell_d(i)) = \frac{k}{k-2}\ell_d(i)$. Therefore,
\begin{align}
\vol(\wfk(I)) &= \sum_{i \in I} w(i)\ell_d(i)
= \frac{k-2}{k} \sum_{i \in I} w(i)f_k(\ell_d(i))
= \frac{k-2}{k} \vol(f_k(I))
\nonumber
\\ \label{eqn:hdhunit:wfk-last}
&< \frac{k-2}{k} (1 + \vol(f_k(s)))
= \frac{k-2}{k} + \vol(\wfk(s)).
\end{align}
Since items in $I$ were sorted in decreasing order of $\ell_d$
(line \ref{alg-line:hdhunit:sort}),
$\forall i \in S_j, \ell_d(i) \ge \ell_d(S_{j+1})$.
Then by (\ref{eqn:hdhunit:shelf-wide}), we get that for all $j \in [p-1]$,
\begin{equation}
\label{eqn:hdhunit:wfk-lb-height}
\vol(\wfk(S_j)) \ge w(S_j)\ell_d(S_{j+1}) \ge \ell_d(S_{j+1})
\end{equation}
Therefore,
\begin{align*}
H &= \sum_{j=1}^{p} \ell_d(S_j)
\le \frac{1}{k} + \sum_{j=1}^{p-1} \ell_d(S_{j+1})
\tag{since $\ell_d(S_1) \le 1/k$}
\\ &\le \frac{1}{k} + \sum_{j=1}^{p-1} \vol(\wfk(S_j))
\tag{by (\ref{eqn:hdhunit:wfk-lb-height})}
\\ &< \frac{1}{k} + \vol(\wfk(I))
\\ &< \frac{1}{k} + \frac{k-2}{k} + w(s)\ell_d(s)
\tag{by (\ref{eqn:hdhunit:wfk-last})}
\\ &\le \frac{k-1}{k} + \frac{1}{k} = 1.
\tag{since $\ell_d(s) \le 1/k$ and $w(s) = \vol(f_k(s^{(d-1)})) \le 1$}
\end{align*}
Since $H \le 1$, the shelves fit in a $d$D bin.
Therefore, $\mathcal{P}(d)$ holds.

Therefore, by mathematical induction, $\mathcal{P}(d)$ holds for all $d \ge 1$.

\end{proof}

Note that $\hdhkunit$ has a running time of $O(nd\log n)$.

\textbf{Comment on Caprara's~\cite{caprara2008} analysis of $\hdhk$.}
Caprara~\cite{caprara2008} implicitly proves \cref{thm:hdhkunit} in Lemma 4.1 in their paper
and their proof is less detailed than ours.
Their algorithm is similar to ours, except that
they allow arbitrarily reordering $I$ when $t_d \neq k$,
and instead of choosing a prefix of $I$
(line \ref{alg-line:hdhunit:prefix} in $\hdhkunitHyp$),
they choose a subset of $I$ that is minimal for some properties.

\section{Harmonic Algorithm for Strip Packing}
\label{sec:hdhk-sp}

\subsection{Multiple-Choice Strip Packing}

Let $I$ be a set of $d$D cuboidal items.
In the $d$D strip packing problem ($d$SP), we have to compute
a feasible packing of $I$ (without rotating the items)
into a $d$D cuboid (called a strip) that has length one in the first $d-1$ dimensions
and has the minimum possible length (called height) in the $d\Th$ dimension.
Let $\optdsp(I)$ denote the minimum height of a strip needed to pack $I$.

In the $d$D multiple-choice strip packing problem ($d$MCSP),
we are given as input a set $\Ical = \{I_1, I_2, \ldots, I_n\}$,
where for each $j$, $I_j$ is a set of items, called an {\em itemset}.
We have to pick exactly one item from each itemset and pack those items
into a strip of minimum height.

Equivalently, given an input instance $\Ical$, we have to select an
assortment $K \in \assortSet(\Ical)$ and output a strip packing of $K$,
such that the total height of the strip is minimized. Therefore,
$\optdmcsp(\Ical) \defeq \min_{K \in \assortSet(\Ical)} \optdsp(K)$.

\subsection{Revisiting the \texorpdfstring{$\hdhk$}{HDHk} Algorithm}

Caprara~\cite{caprara2008} gave an algorithm for $d$SP,
which we call $\hdhksp$.
We will first prove a few useful properties of $\hdhksp$
and then see how to extend it to $d$MCSP.

For a $d$D item $i$, $\btype(i)$ (called \emph{base type}) is defined to be
a $(d-1)$-dimensional vector whose $j\Th$ component is $\type_k(\ell_j(i))$.
Define $\wfk(i)$ to be the cuboid $\itild$ where $\ell_j(\itild) \defeq f_k(\ell_j(i))$
for $j \in [d-1]$ and $\ell_d(\itild) \defeq \ell_d(i)$.
Define $\wfk(I) \defeq \{\wfk(i): i \in I\}$.
Similarly define $\wHk(i)$ and $\wHk(I)$.
Define $i^{(j)}$ to be the $j$-dimensional item
obtained by ignoring all dimensions of $i$ other than the first $j$.
For a set $I$ of $d$D items, let $I^{(j)} \defeq \{i^{(j)}: i \in I\}$.

$\hdhksp$ works by first partitioning the items based on $\btype$.
Then for each partition, it repeatedly picks the smallest prefix $J$
such that $\vol(f_k(J^{(d-1)})) \ge 1$ and packs $J$ into a $d$D shelf
by using $\hdhkunit$ on $J^{(d-1)}$
(see \cref{sec:shelf} for the definition of a $d$D shelf).
See \cref{algo:hdhksp} for a more precise description of $\hdhksp$.
Note that $\hdhksp(I)$ has a running time of $O(nd\log n)$, where $n \defeq |I|$.

\begin{algorithm}[!ht]
\caption{$\hdhksp(I)$:
Returns a strip packing of $d$D items $I$ ($d \ge 2$).}
\label{algo:hdhksp}
\begin{algorithmic}[1]
\State Let $P$ be an empty list.
\For{each $\btype$ $t$}
    \State $I^{[t]} = \{i \in I: \btype(i) = t\}$.
    \State Sort items in $I^{[t]}$ in non-increasing order of their length in the $d\Th$ dimension.
    \While{$|I^{[t]}| > 0$}
        \State Find $J$, the smallest prefix of $I^{[t]}$ such that
            $J = I^{[t]}$ or $\vol(f_k(J^{(d-1)})) \ge 1$.
        \State $S = \hdhkunitHyp^{[t]}(J^{(d-1)})$
            \Comment{$S$ is a $d$D shelf containing items $J$.}
        \State Append $S$ to the list $P$.
        \State Remove $J$ from $I^{[t]}$.
    \EndWhile
\EndFor
\State Return the strip packing formed by the shelves $P$.
\end{algorithmic}
\end{algorithm}

\begin{theorem}
\label{thm:hdhk-fvol}
The height of the strip packing produced by $\hdhksp(I)$ is less than $Q + \vol(\wfk(I))$,
where $Q$ is the number of distinct $\btype$s of items (so $Q \le k^{d-1}$).
\end{theorem}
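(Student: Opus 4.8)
The plan is to mirror the proof of \cref{thm:fhk-fvol}, but to bound the total height of the produced shelves (their lengths in the $d\Th$ dimension) rather than a count of bins, exploiting that inside each $\btype$ partition the items are processed in non-increasing order of $\ell_d$.

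First I would fix a base type $t$ and set $I^{[t]} \defeq \{i \in I : \btype(i) = t\}$. For an item $i$ put $w(i) \defeq \prod_{r=1}^{d-1} f_k(\ell_r(i))$, so that $\vol(\wfk(i)) = w(i)\ell_d(i)$, and for a set $X$ write $w(X) \defeq \sum_{i \in X} w(i)$. Let $S_1, \ldots, S_p$ be the shelves that $\hdhksp$ produces for $I^{[t]}$, in the order they are appended to $P$, and put $\ell_d(S_j) \defeq \max_{i \in S_j} \ell_d(i)$. Because $I^{[t]}$ is sorted non-increasingly in $\ell_d$ and each $S_j$ is a contiguous block of that sorted sequence, $\ell_d(S_j)$ is the $d\Th$ length of the first item of $S_j$ and every item of $S_j$ has $d\Th$ length at least $\ell_d(S_{j+1})$. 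By the prefix-selection rule every shelf except the last satisfies $w(S_j) = \vol(f_k(S_j^{(d-1)})) \ge 1$, while minimality of the chosen prefix gives $\vol(f_k(S_j^{(d-1)} - \{\last(S_j)\})) < 1$, so \cref{prop:hdhkunit} applies and each $S_j$ is a genuine $d$D shelf.

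The crux is a telescoping estimate on the heights within partition $t$. For $j \le p-1$, using $\ell_d(i) \ge \ell_d(S_{j+1})$ for all $i \in S_j$ together with $w(S_j) \ge 1$,
\[ \vol(\wfk(S_j)) \ =\ \sum_{i \in S_j} w(i)\,\ell_d(i) \ \ge\ w(S_j)\,\ell_d(S_{j+1}) \ \ge\ \ell_d(S_{j+1}). \]
Summing and using $\ell_d(S_1) \le 1$,
\[ \sum_{j=1}^p \ell_d(S_j) \ \le\ \ell_d(S_1) + \sum_{j=1}^{p-1} \vol(\wfk(S_j)) \ <\ 1 + \vol(\wfk(I^{[t]})), \]
where the strictness comes, as in \cref{thm:fhk-fvol}, from dropping the positive $\wfk$-volume of the last shelf $S_p$ on the right. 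Summing over the $Q$ base types that actually occur yields a total strip height less than $\sum_t \bigl(1 + \vol(\wfk(I^{[t]}))\bigr) = Q + \vol(\wfk(I))$, and $Q \le k^{d-1}$ since $\type_k$ takes at most $k$ values in each of the $d-1$ base dimensions.

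I do not expect a genuine obstacle: the argument is essentially the bookkeeping already done for \cref{thm:fhk-fvol} and \cref{prop:hdhkunit}. The two points needing a little care are (i) that each shelf's height is controlled by its \emph{first} item, which is exactly why $\hdhksp$ sorts each $I^{[t]}$ by $\ell_d$, and (ii) that the last shelf of a partition need not have $w(S_j) \ge 1$, but its height is harmlessly absorbed into the additive $\ell_d(S_1) \le 1$ term.
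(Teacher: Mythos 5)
Your proof is correct and follows essentially the same route as the paper's: partition by base type, observe that sorting by $\ell_d$ and the prefix rule give $\vol(\wfk(S_j)) \ge w(S_j)\ell_d(S_{j+1}) \ge \ell_d(S_{j+1})$ for all but the last shelf, then telescope and sum over base types. The only extra content is your aside verifying that \cref{prop:hdhkunit} applies, which the paper leaves implicit.
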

\begin{proof}
Let there be $p^{[q]}$ shelves of $\btype$ $q$ produced by $\hdhksp(I)$.
Let $S_j^{[q]}$ be the set of items in the $j\Th$ shelf of $\btype$ $q$.
Define $\ell_d(S_j^{[q]}) \defeq \max_{i \in S_j^{[q]}} \ell_d(i)$
to be the height of shelf $S_j^{[q]}$.

Since items in $I^{[q]}$ were sorted in decreasing order of $\ell_d$,
$\forall i \in S_j^{[q]}$, $\ell_d(i) \ge \ell_d(S_{j+1}^{[q]})$.
Given the way we choose prefixes, $\vol(f_k(S_j^{[q] (d-1)})) \ge 1$ for $j \in [p-1]$.
\begin{equation}
\label{eqn:wfk-lb-height}
\vol(\wfk(S_j^{[q]}))
\ge \vol(f_k(S_j^{[q] (d-1)}))\ell_d(S_{j+1}^{[q]})
\ge \ell_d(S_{j+1}^{[q]})
\end{equation}

Total height of the strip packing is
\begin{align*}
\sum_{q=1}^Q \sum_{j=1}^{p^{[q]}} \ell_d(S_j^{[q]})
&\le \sum_{q=1}^Q \left(1 + \sum_{j=1}^{p^{[q]}-1} \ell_d(S_{j+1}^{[q]})\right)
\tag{since $\ell_d(S_1^{[q]}) \le 1$}
\\ &\le Q + \sum_{q=1}^Q \sum_{j=1}^{p^{[q]}-1} \vol(\wfk(S_j^{[q]}))
\tag{by \eqref{eqn:wfk-lb-height}}
\\ &< Q + \sum_{q=1}^Q \sum_{j=1}^{p^{[q]}} \vol(\wfk(S_j^{[q]}))
= Q + \vol(\wfk(I)).
\qedhere \end{align*}
\end{proof}

\begin{theorem}
\label{thm:wfvol-sp}
For a set $I$ of $d$D items, $\vol(\wfk(I)) \le T_k^{d-1} \optdsp(I)$.
\end{theorem}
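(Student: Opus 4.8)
The plan is to mirror the proof of \cref{thm:fvol-bp}, but to apply \cref{thm:dff-pack} with the harmonic \dff{} $H_k$ only in the first $d-1$ (base) dimensions and with the identity function---which is trivially a \dff{}---in the $d\Th$ dimension. Concretely, I would show that $\wHk(I)$ can be packed into a strip of the same height as the optimal strip packing of $I$; since items packed in a strip of height $h$ have total volume at most $h$, this gives $\vol(\wHk(I)) \le \optdsp(I)$, and the claim follows because $\vol(\wfk(i)) = T_k^{d-1}\vol(\wHk(i))$ for every item $i$ (the only difference between $\wfk$ and $\wHk$ is the factor $T_k$ picked up in each of the $d-1$ base dimensions when replacing $f_k$ by $H_k = f_k/T_k$).

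The one wrinkle is that \cref{thm:dff-pack} is stated for packings into a bin (a unit cube), not into a strip, so I first reduce the strip to a cube by rescaling. Let $h \defeq \optdsp(I)$ and take an optimal strip packing of $I$ into $[0,1]^{d-1}\times[0,h]$. Rescaling the $d\Th$ coordinate by $1/h$ turns this into a packing into the unit cube of the items $\{i'\}$, where $i'$ agrees with $i$ in the first $d-1$ dimensions and has length $\ell_d(i)/h$ in dimension $d$; note $\ell_d(i)/h \le 1$ because $i$ fits inside the strip of height $h$, so $i'$ is a legitimate item for \cref{thm:dff-pack}. Applying that theorem with $g_1 = \cdots = g_{d-1} = H_k$ and $g_d = \mathrm{id}$ yields a packing into a unit cube of the items whose lengths are $H_k(\ell_j(i))$ for $j \in [d-1]$ and $\ell_d(i)/h$ for $j = d$; rescaling the $d\Th$ coordinate back by $h$ produces a strip packing of $\wHk(I)$ of height $h$, as desired.

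I do not expect a real obstacle here---the argument is essentially the strip-packing analogue of \cref{thm:fvol-bp}. The only points that need care are (i) justifying that the identity map is a \dff{} (immediate from the definition), and (ii) the rescaling step, where one must check that the rescaled items still have all side lengths at most $1$ so that \cref{thm:dff-pack} applies; this is exactly where the inequality $\ell_d(i) \le \optdsp(I)$ is used. A more naive attempt to imitate \cref{thm:fvol-bp} directly by slicing the strip into unit-height slabs does not work, since items may straddle slab boundaries; rescaling avoids this entirely.
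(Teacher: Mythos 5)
Your proposal is correct and is essentially the same proof the paper gives: both rescale the strip to a unit cube by dividing the $d\Th$ coordinate by $\optdsp(I)$, apply \cref{thm:dff-pack} with $H_k$ in the base dimensions and the identity in dimension $d$ to show $\wHk(I')$ fits in a unit cube, rescale back to get a strip packing of $\wHk(I)$ of height $\optdsp(I)$, and then use $\vol(\wfk(I)) = T_k^{d-1}\vol(\wHk(I)) \le T_k^{d-1}\optdsp(I)$. You are just more explicit about the two small points (identity being a DFF, and the rescaled lengths being at most 1) that the paper leaves implicit.
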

\begin{proof}
$I$ fits in a unit strip of height $\optdsp(I)$.
Let $I'$ be the items obtained by scaling each item's height by $1/\optdsp(I)$.
Then $I'$ fits in a unit cube.

Since $H_k$ is a \dff{}, $\wHk(I')$ fits in a unit cube by \cref{thm:dff-pack}.
Therefore, $\wHk(I)$ can be packed into a unit strip of height $\optdsp(I)$.
Therefore, $\vol(\wfk(I)) \le T_k^{d-1}\vol(\wHk(I)) \le T_k^{d-1}\optdsp(I)$.
\end{proof}

\begin{corollary}
\label{thm:hdhk-sp-appx}
$\hdhksp(I)$ packs $I$ into a strip of height less than $Q + T_k^{d-1}\optdsp(I)$,
where $Q$ is the number of distinct $\btype$s of items.
\end{corollary}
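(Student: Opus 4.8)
The plan is to obtain this bound by directly composing the two results just established, \Cref{thm:hdhk-fvol} and \Cref{thm:wfvol-sp}. \Cref{thm:hdhk-fvol} already tells us that the height of the strip returned by $\hdhksp(I)$ is strictly less than $Q + \vol(\wfk(I))$, where $Q$ counts the distinct $\btype$s appearing in $I$. So the only thing left is to replace the volume term $\vol(\wfk(I))$ by a quantity expressed in terms of $\optdsp(I)$.

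For that replacement I would invoke \Cref{thm:wfvol-sp}, which bounds $\vol(\wfk(I)) \le T_k^{d-1}\optdsp(I)$ for any set $I$ of $d$D items. Substituting this into the inequality from \Cref{thm:hdhk-fvol} immediately yields that the strip height is less than $Q + T_k^{d-1}\optdsp(I)$, which is exactly the claimed bound; the strict inequality simply carries over from \Cref{thm:hdhk-fvol}.

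Since both ingredients are already proved, there is essentially no obstacle here: all of the real work lives in the two preceding theorems (the shelf-height accounting in \Cref{thm:hdhk-fvol} and the \dff{}-based volume bound in \Cref{thm:wfvol-sp}). The only point worth spelling out, if one wanted to expand the one-line argument, is that the parenthetical estimate $Q \le k^{d-1}$ is inherited verbatim from \Cref{thm:hdhk-fvol} — each coordinate of $\btype$ is a value of $\type_k$, hence lies in $[k]$ — so the corollary's remark about $Q$ needs no new justification.
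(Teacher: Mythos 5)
Your proposal is correct and matches the paper's own proof, which simply cites \cref{thm:hdhk-fvol,thm:wfvol-sp}: substitute the bound $\vol(\wfk(I)) \le T_k^{d-1}\optdsp(I)$ from \cref{thm:wfvol-sp} into the strict inequality from \cref{thm:hdhk-fvol}. Nothing further is needed.
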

\begin{proof} Follows from \cref{thm:hdhk-fvol,thm:wfvol-sp}. \end{proof}

\subsection{Extending \texorpdfstring{$\hdhksp$ to $d$MCSP}{HDH-SP to dMCSP}}
\label{sec:hdhk-mcsp}

\begin{theorem}
Let $\Ical$ be a $d$MCSP instance.
Let $\Khat \defeq \{\argmin_{i \in I} \vol(\wfk(i)): I \in \Ical\}$,
i.e., $\Khat$ is the assortment obtained by picking from each itemset
the item $i$ having the minimum value of $\vol(\wfk(i))$.
Then the height of the strip packing produced by $\hdhksp(\Khat)$
is less than $Q + T_k^{d-1}\optdmcsp(\Ical)$,
where $Q$ is the number of distinct $\btype$s of items in $\flatten(\Ical)$
(so $Q \le k^{d-1}$).
\end{theorem}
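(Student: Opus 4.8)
The plan is to mirror the proof of the analogous $d$MCBP statement (the theorem bounding the bins used by $\fhk(\Khat)$), using \cref{thm:hdhk-fvol} and \cref{thm:wfvol-sp} in place of \cref{thm:fhk-fvol} and \cref{thm:fvol-bp}. The only structural fact needed is that the quantity minimized per itemset, namely $\vol(\wfk(\cdot))$, is additive over items: for any set $X$ of $d$D items, $\vol(\wfk(X)) = \sum_{i \in X} \vol(\wfk(i))$, following the paper's convention that $f(X) \defeq \sum_{i \in X} f(i)$. This separability is exactly what makes the greedy per-itemset choice optimal for this objective.

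First I would observe that, by construction of $\Khat$, for every assortment $K \in \assortSet(\Ical)$ we have $\vol(\wfk(\Khat)) \le \vol(\wfk(K))$: each itemset $I \in \Ical$ contributes $\min_{i \in I} \vol(\wfk(i))$ to $\vol(\wfk(\Khat))$ and $\vol(\wfk(K \cap I))$ to $\vol(\wfk(K))$, where $K \cap I$ is a single item, so the former is at most the latter; summing over itemsets and using additivity gives the claim. Next, let $K^*$ be the assortment used in an optimal strip packing of $\Ical$, so $\optdsp(K^*) = \optdmcsp(\Ical)$ by definition of $\optdmcsp$.

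Then I would chain the two earlier results. By \cref{thm:hdhk-fvol}, the height of the strip produced by $\hdhksp(\Khat)$ is less than $Q' + \vol(\wfk(\Khat))$, where $Q'$ is the number of distinct $\btype$s occurring in $\Khat$; since $\Khat \subseteq \flatten(\Ical)$ we have $Q' \le Q$. Combining this with the inequality from the previous paragraph and then applying \cref{thm:wfvol-sp} to $K^*$,
\[ Q + \vol(\wfk(\Khat)) \le Q + \vol(\wfk(K^*)) \le Q + T_k^{d-1}\optdsp(K^*) = Q + T_k^{d-1}\optdmcsp(\Ical), \]
which is the desired bound; and $Q \le k^{d-1}$ because $\btype(\cdot)$ takes at most $k^{d-1}$ values.

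There is no genuine obstacle here — the statement is a direct corollary once \cref{thm:hdhk-fvol} and \cref{thm:wfvol-sp} are available. The only points requiring a little care are (i) stating $Q$ consistently, in terms of distinct $\btype$s in $\flatten(\Ical)$ rather than in the particular assortment $\hdhksp$ packs, so the bound is valid regardless, and (ii) noting explicitly that the per-itemset minimization is correct precisely because $\vol \circ \wfk$ decomposes as a sum over items with no interaction across itemsets. For completeness I would also remark that $\Khat$ is computable in $O(Nd)$ time and $\hdhksp(\Khat)$ runs in $O(nd\log n)$ time, giving an $O(Nd + nd\log n)$-time algorithm for $d$MCSP with the claimed guarantee.
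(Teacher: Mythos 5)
Your proposal is correct and follows exactly the same route as the paper's proof: minimize $\vol(\wfk(\cdot))$ per itemset, observe $\vol(\wfk(\Khat)) \le \vol(\wfk(K^*))$ for the optimal assortment $K^*$, then chain \cref{thm:hdhk-fvol} and \cref{thm:wfvol-sp}. The extra remarks you include (additivity justifying the greedy choice, $Q' \le Q$, running time) are sound but are left implicit in the paper.
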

\begin{proof}
For any assortment $K$, $\vol(\wfk(\Khat)) \le \vol(\wfk(K))$.
Let $K^*$ be the assortment in an optimal packing of $\Ical$.
By \cref{thm:hdhk-fvol,thm:wfvol-sp}, the height of the strip packing produced by
$\hdhksp(\Khat)$ is less than
\[ Q + \vol(\wfk(\Khat))
\le Q + \vol(\wfk(K^*))
\le Q + T_k^{d-1}\optdsp(K^*)
= Q + T_k^{d-1}\optdmcsp(\Ical).
\qedhere \]
\end{proof}

Let $N \defeq |\flatten(\Ical)|$ and $n \defeq |\Ical|$.
Then we can find $\Khat$ in $O(Nd)$ time and compute $\hdhksp(\Khat)$ in $O(nd\log n)$ time.
Therefore, we get a $T_k^{d-1}$-\asymAppx{} algorithm for $d$MCSP
that runs in $O(Nd + nd\log n)$ time.

\section{Harmonic Algorithm for \safed{}MCKS}
\label{sec:hdhks}

In the $d$D knapsack problem ($d$KS), we are given a set $I$ of $d$D items,
and a profit $p(i)$ for each item $i \in I$. We have to compute
a maximum-profit packing of a subset of $I$ (without rotating the items)
into a $d$D unit cube (called a knapsack).

In the $d$D multiple-choice knapsack problem ($d$MCKS),
we are given a set $\Ical = \{I_1, I_2, \ldots, I_n\}$ as input,
where for each $j$, $I_j$ is a set of items, called an {\em itemset},
and each item $i \in I_j$ has a profit $p(i)$.
We have to pick at most one item from each itemset and pack those items
into a $d$D bin such that the total profit is maximized.

For a $d$D item $i$, $\btype(i)$ (called \emph{base type}) is defined to be
a $(d-1)$-dimensional vector whose $j\Th$ component is $\type_k(\ell_j(i))$.
Define $\wfk(i)$ to be the cuboid $\itild$ where $\ell_j(\itild) \defeq f_k(\ell_j(i))$
for $j \in [d-1]$ and $\ell_d(\itild) \defeq \ell_d(i)$.
Define $\wfk(I) \defeq \{\wfk(i): i \in I\}$.
Similarly define $\wHk(i)$ and $\wHk(I)$.

We will see a fast and simple algorithm $\hdhknf(I)$ (\cref{algo:hdhknf}) for $d$BP
that we will use to design an algorithm for $d$MCKS.

\begin{algorithm}[!ht]
\caption{$\hdhknf(I)$:
Returns a bin packing of $d$D items $I$ ($d \ge 2$).}
\label{algo:hdhknf}
\begin{algorithmic}[1]
\State Let $P$ be the list of shelves output by $\hdhkspHyp(I)$.
\Comment{cf.~\cref{sec:hdhk-sp} for $\hdhksp$.}
\State Let $P'$ be an empty list.
\For{each $\btype$ $q$}
    \State Let $S_1^{[q]}, S_2^{[q]}, \ldots, S_{p^{[q]}}^{[q]}$
        be the shelves in $P$ of $\btype$ $q$, in decreasing order of height.
    \State Pack $S_1^{[q]}$ in a $d$D bin.
    \State For $j \ge 2$, add $S_j^{[q]}$ to $P'$.
\EndFor
\State Interpreting each shelf $S_j^{[q]}$ in $P'$ as a 1D item of size $\ell_d(S_j^{[q]})$,
    pack the shelves into $d$D bins using Next-Fit.
\end{algorithmic}
\end{algorithm}

Note that $\hdhknf(I)$ runs in $O(nd\log n)$ time.

\begin{theorem}
\label{thm:hdhk-nf-fvol}
$\hdhknf(I)$ uses at most $Q + \smallceil{2\vol(\wfk(I))}$ bins,
where $Q$ is the number of distinct $\btype$s of items.
\end{theorem}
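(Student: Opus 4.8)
The plan is to split the count of bins opened by $\hdhknf(I)$ into two groups. The first group consists of the bins that each hold a single tallest shelf $S_1^{[q]}$: there is exactly one such bin per distinct base type, hence exactly $Q$ of them, and each is a legal $d$D packing because $S_1^{[q]}$ has unit length in the first $d-1$ dimensions and height $\ell_d(S_1^{[q]}) = \max_{i\in S_1^{[q]}}\ell_d(i)\le 1$. The second group consists of the bins that Next-Fit produces when packing the list $P'$ of all remaining shelves $\{S_j^{[q]} : q\in[Q],\ 2\le j\le p^{[q]}\}$, viewed as 1D items of size $\ell_d(S_j^{[q]})\le 1$; stacking the corresponding $d$D shelves along the $d\Th$ axis turns each Next-Fit bin into a genuine $d$D bin. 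So the whole claim reduces to showing Next-Fit uses at most $\ceil{2\vol(\wfk(I))}$ bins.

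First I would bound the total size $V \defeq \sum_{q=1}^{Q}\sum_{j=2}^{p^{[q]}}\ell_d(S_j^{[q]})$ of the 1D instance fed to Next-Fit. Since $\hdhksp$ processes items of each base type in non-increasing order of $\ell_d$ and closes a shelf only once its $f_k$-scaled base volume reaches $1$, the inequality already established in the proof of \cref{thm:hdhk-fvol}, namely $\vol(\wfk(S_j^{[q]})) \ge \ell_d(S_{j+1}^{[q]})$ for all $q$ and all $j\in[p^{[q]}-1]$ (see~\eqref{eqn:wfk-lb-height}), gives after re-indexing
\[ V = \sum_{q=1}^{Q}\sum_{j=1}^{p^{[q]}-1}\ell_d(S_{j+1}^{[q]})
   \le \sum_{q=1}^{Q}\sum_{j=1}^{p^{[q]}-1}\vol(\wfk(S_j^{[q]}))
   \le \sum_{q=1}^{Q}\sum_{j=1}^{p^{[q]}}\vol(\wfk(S_j^{[q]})) = \vol(\wfk(I)). \]
The reason $S_1^{[q]}$ is set aside in its own bin is precisely that the tallest shelf of each base type has no taller predecessor whose $\wfk$-volume could absorb its height, so it cannot be charged within this sum.

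Next I would invoke the textbook Next-Fit analysis: if Next-Fit opens $m$ bins on $P'$, then for each $i\in[m-1]$ the first item of bin $i{+}1$ failed to fit in bin $i$, so $\mathrm{content}(b_i)+\mathrm{content}(b_{i+1})>1$; summing over $i$ and noting that each bin's content is counted at most twice yields $2V > m-1$, hence $m < 2V+1$ and therefore $m \le \ceil{2V} \le \ceil{2\vol(\wfk(I))}$. Adding the $Q$ dedicated bins gives the claimed bound $Q+\ceil{2\vol(\wfk(I))}$.

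I do not expect a deep obstacle here. The only care needed is the bookkeeping in the first step — aligning the shelf re-indexing with~\eqref{eqn:wfk-lb-height} and confirming that the shelf excluded from $P'$ for each base type is exactly the tallest one — together with checking the two ``legality'' observations (all shelf heights are at most $1$), which is what makes both families of bins genuine $d$D packings. Everything else is the standard factor-$2$ Next-Fit estimate.
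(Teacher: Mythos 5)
Your proof is correct and follows essentially the same route as the paper: set aside one bin per base type for the tallest shelf $S_1^{[q]}$, bound the total size of the remaining shelves $P'$ by $\vol(\wfk(I))$ via inequality~\eqref{eqn:wfk-lb-height}, and apply the Next-Fit bound of $\smallceil{2V}$ bins. The only cosmetic difference is that you re-derive the Next-Fit guarantee and explicitly note that shelf heights are at most 1, while the paper cites the Next-Fit bound as known and leaves the legality of the $Q$ dedicated bins implicit.
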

\begin{proof}
For each $q \in [Q]$, $S_1^{[q]}$ occupies one bin.

As per \cref{eqn:wfk-lb-height} in the proof of \cref{thm:hdhk-fvol},
for all $t \le p^{[q]}-1$, we get $\vol(\wfk(S_t^{[q]}))
\ge \ell_d(S_{t+1}^{[q]})$.

Let $H$ be the total height of the shelves in $P'$. Then
\begin{align*}
H &= \sum_{q=1}^Q \sum_{t=1}^{p^{[q]}-1} \ell_d(S_{t+1}^{[q]})
\le \sum_{q=1}^Q \sum_{t=1}^{p^{[q]}-1} \vol(\wfk(S_t^{[q]}))  \tag{by \eqref{eqn:wfk-lb-height}}
\\ &< \sum_{q=1}^Q \sum_{t=1}^{p^{[q]}} \vol(\wfk(S_t^{[q]}))
= \vol(\wfk(I)).
\end{align*}
Next-Fit guarantees that for a 1BP instance $J$, number of bins used is at most $\ceil{2\vol(J)}$.
So for the shelves in $P'$, we use $\ceil{2H}$ bins.
The total number of bins used is therefore
$Q + \ceil{2H} \le Q + \smallceil{2\vol(\wfk(I))}$.
\end{proof}

By \cref{thm:wfvol-sp,thm:hdhk-nf-fvol}, we get that $\hdhknf$ is $2T_k^{d-1}$-\asymAppx{}.

Lawler gave an FPTAS for 1MCKS that has a running time of
$O(N\log N + Nn/\eps)$~\cite{lawler1979fast},
where $N \defeq |\flatten(\Ical)|$ and $n \defeq |\Ical|$.
We will use it along with $\hdhknf[3]$ to get an algorithm for $d$MCKS,
called $\hdhks$ (see \cref{algo:hdhks}).

Our algorithm for $d$MCKS, called $\hdhks(\Ical)$, works as follows:
It computes a 1MCKS instance $\Icalhat$ by replacing each item $i$ in $\Ical$
by a 1D item $\vol(\wHk[3](i))$.
It uses the FPTAS for 1MCKS to obtain a $(1-\eps)$-\appx{} solution $J$ to $\Icalhat$.
It uses $\hdhknf[3]$ to pack the corresponding $d$D items of $J$ into bins.
It then selects the most profitable bin.
See \cref{algo:hdhks} for a more detailed description.

\begin{algorithm}[!ht]
\caption{$\hdhks(\Ical)$: algorithm for $d$MCKS.}
\label{algo:hdhks}
\begin{algorithmic}[1]
\State $\Icalhat = \{\{\vol(\wHk[3](i)): i \in I\}: I \in \Ical\}$.
\Comment{Reduction to 1MCKS.}
\State Let $\Jhat$ be a $(1-\eps)$-\appx{} solution to the 1MCKS instance $\Icalhat$
    output by the FPTAS for 1MCKS.
\State Let $J$ be the items of $\Ical$ corresponding to $\Jhat$.
\State Let $[J_1, J_2, \ldots, J_b]$ be the bin packing of $J$ produced using $\hdhknf[3]$.
\State ${\displaystyle j_{\max} = \argmax_{j=1}^b p(J_j)}$
\State \Return $J_{j_{\max}}$.
\end{algorithmic}
\end{algorithm}

$\hdhks$ runs in $O(Nd + N\log N + Nn/\eps + nd\log n)$ time.

\begin{theorem}
\label{thm:emb-ks}
$\hdhks$ is $(1-\eps)3^{-d}$-\appx{}.
\end{theorem}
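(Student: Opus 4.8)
The plan is to track profit through three reductions and then apply an averaging argument. First, the optimal $d$MCKS solution yields a feasible $1$MCKS solution for $\Icalhat$ of the same profit; second, the FPTAS loses only a factor $(1-\eps)$ and the reduction preserves profits; third, $\hdhknf[3]$ packs the selected $d$D items into at most $3^d$ bins, so returning the most profitable bin recovers a $3^{-d}$ fraction. Composing these gives the ratio $(1-\eps)3^{-d}$.

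First I would fix an optimal $d$MCKS solution $K^*$ for $\Ical$, so $K^*$ contains at most one item per itemset, fits inside a $d$D unit cube, and $p(K^*) = \optdmcks(\Ical) =: \OPT$. Since $H_3$ is a \dff{}, \cref{thm:dff-pack} shows $\wHk[3](K^*)$ packs into a $d$D bin, hence $\vol(\wHk[3](K^*)) \le 1$. Thus the $1$D items $\{\vol(\wHk[3](i)) : i \in K^*\}$ have total size at most $1$, i.e.\ the corresponding selection is feasible for the $1$MCKS instance $\Icalhat$ and has profit $p(K^*) = \OPT$; therefore $\opt(\Icalhat) \ge \OPT$. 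Since the FPTAS returns $\Jhat$ with $p(\Jhat) \ge (1-\eps)\opt(\Icalhat)$ and the reduction preserves profits, the corresponding $d$D items $J$ satisfy $p(J) = p(\Jhat) \ge (1-\eps)\OPT$.

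Next I would bound $b$, the number of bins produced by $\hdhknf[3](J)$. Because $\Jhat$ is feasible for $\Icalhat$, we have $\vol(\wHk[3](J)) \le 1$, and since $T_3 = 3$ this is equivalent to $\vol(\wfk[3](J)) \le 3^{d-1}$. By \cref{thm:hdhk-nf-fvol}, $b \le Q + \smallceil{2\vol(\wfk[3](J))} \le 3^{d-1} + 2\cdot 3^{d-1} = 3^d$, using $Q \le 3^{d-1}$ and that $2\cdot 3^{d-1}$ is already an integer. Finally, $\hdhknf[3]$ partitions $J$ among the bins $J_1,\dots,J_b$, so $p(J_{j_{\max}}) = \max_{j} p(J_j) \ge p(J)/b \ge (1-\eps)\OPT/3^d$, and $J_{j_{\max}}$ is a valid $d$MCKS solution since it is a subset of $J$ (hence at most one item per itemset) packed inside a single $d$D bin. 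This gives the claimed $(1-\eps)3^{-d}$-approximation.

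I do not anticipate a genuine obstacle; the one point that needs care is the DFF-normalization bookkeeping. The reduction must use the normalized quantity $\vol(\wHk[3](i))$, so that the bound ``$\le 1$'' is available both for the packable set $K^*$ and for the returned feasible set $J$, whereas \cref{thm:hdhk-nf-fvol} is phrased in terms of $\vol(\wfk[3](\cdot))$; this is why the factor $T_3^{d-1} = 3^{d-1}$ reappears and combines with $Q \le 3^{d-1}$ to yield exactly $3^d$ bins. The feasibility check on $J_{j_{\max}}$ and the pigeonhole step are routine.
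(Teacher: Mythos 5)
Your proof is correct and follows essentially the same route as the paper's: reduce to $1$MCKS using $\vol(\wHk[3](\cdot))$, observe via \cref{thm:dff-pack} that the optimal $d$MCKS assortment gives a feasible $1$MCKS solution so $\optdmcks[1](\Icalhat)\ge\optdmcks(\Ical)$, apply the FPTAS, bound the number of bins produced by $\hdhknf[3]$ by $3^d$ using \cref{thm:hdhk-nf-fvol} together with the feasibility bound $\vol(\wHk[3](J))\le 1$, and finish by pigeonhole on the most profitable bin. You supply slightly more of the bookkeeping (the conversion $\vol(\wfk[3](J))=T_3^{d-1}\vol(\wHk[3](J))\le 3^{d-1}$, the integrality of $2\cdot 3^{d-1}$, and the explicit feasibility check on $J_{j_{\max}}$), but these are exactly the steps left implicit in the paper.
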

\begin{proof}
Let $I$ be a set of $d$D items.
Suppose $S \subseteq I$ can be packed into a bin.
Then by \cref{thm:dff-pack}, $\widehat{S} = \{\vol(\wHk[3](i)): i \in S\}$
can also be packed into a bin. Therefore, $\optdmcks[1](\Icalhat) \ge \optdmcks(\Ical)$.

The FPTAS for 1MCKS gives us $\Jhat$ such that
$p(\Jhat) \ge (1-\eps)\optdmcks[1](\Icalhat)$.
$\hdhknf$ packs $J$ into
$b \le 3^{d-1} + \smallceil{2T_3^{d-1}\vol(\wHk[3](J))} \le 3^d$ bins.
Given the way we choose $j_{\max}$,
\[ p(J_{j_{\max}}) \ge \frac{p(J)}{b} = \frac{p(\Jhat)}{b}
\ge \frac{(1-\eps)\optdmcks[1](\Icalhat)}{b} \ge \frac{1-\eps}{3^d}\optdmcks(\Ical).
\qedhere \]
\end{proof}

\section{\DFF{} Transform}
\label{sec:dff-trn}

In this section, we prove \cref{thm:dff-pack}.

\begin{lemma}
\label{thm:dff-pack-1}
Let $I$ be a set of $d$D items that can be packed into a bin. Let $g$ be a \dff{}.
Let $q \in [d]$. For $i \in I$, define $g(i)$ to be the item $\ihat$ for which
$\ell_j(\ihat) \defeq \ell_j(i)$ when $j \neq q$ and $\ell_q(\ihat) \defeq g(\ell_q(i))$.
Then the items $\{g(i): i \in I\}$ can be packed into a $d$D bin (without rotating the items).
\end{lemma}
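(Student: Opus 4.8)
The plan is to fix a feasible packing of $I$ in the bin and apply the one-dimensional DFF inequality "slice by slice" along the $q\Th$ axis. Since the packing is axis-parallel and non-overlapping, I would consider, for each point $x \in [0,1]$ on the $q\Th$ axis, the hyperplane $\{y : y_q = x\}$ orthogonal to that axis. Let $S(x) \subseteq I$ be the set of items whose interior the hyperplane intersects, i.e.\ items $i$ with $a_q(i) < x < a_q(i) + \ell_q(i)$, where $a_q(i)$ is the $q\Th$ coordinate of the base of $i$. The key observation is that the items in $S(x)$, restricted to their coordinates in the remaining $d-1$ dimensions, form a non-overlapping axis-parallel packing inside the $(d-1)$-dimensional unit cube (the cross-section of the bin). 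This is the geometric heart of the argument, and the main thing to get right.

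First I would use that cross-sectional feasibility to control $\sum_{i \in S(x)} \ell_q(i)$ is not what I want; instead I want to recombine. The actual trick: consider the new items $g(i)$, which differ from $i$ only in having $q\Th$ length $g(\ell_q(i))$. I would build a packing of $\{g(i)\}$ by keeping each item at the same position in the $d-1$ dimensions other than $q$, and only rearranging along the $q\Th$ axis. For a fixed "column" — a maximal set of items sharing a common footprint in the other $d-1$ coordinates — this reduces to the $1$D statement: if the $q\Th$ lengths of items stacked in a column sum to at most $1$, then after applying $g$ they still sum to at most $1$, since $g$ is a DFF. The subtlety is that items do not neatly decompose into columns, so I would instead argue via the "fractional" or measure-theoretic version: for each $x \in [0,1]$, the set $S(x)$ of items crossing level $x$ has the property that $\sum_{i \in S(x)} \bigl(\text{indicator that }g(i)\text{ crosses its assigned level}\bigr)$ stays bounded, using that $g(\ell_q(i)) \le$ something — but cleaner is to invoke the standard result that DFFs compose with geometric packing in a single coordinate.

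Concretely, the cleanest route I would take: appeal to the characterization that a function $g$ is a DFF iff for every feasible $1$D packing it preserves feasibility, then lift this one axis at a time by a direct slicing argument. Fix the $q\Th$ axis. For each $i$, the "shadow" of $i$ onto the other $d-1$ axes is a box $B_i \subseteq [0,1]^{d-1}$. For a point $z \in [0,1]^{d-1}$, let $C(z) \defeq \{i \in I : z \in B_i\}$ be the items whose shadow contains $z$; feasibility of the original packing implies that for each such $z$ the items in $C(z)$ occupy disjoint intervals in the $q\Th$ axis, so $\sum_{i \in C(z)} \ell_q(i) \le 1$. Since $g$ is a DFF, $\sum_{i \in C(z)} g(\ell_q(i)) \le 1$ for every $z$. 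Now I construct the new packing greedily: order the items arbitrarily, and for each $z$ independently the $q\Th$-intervals of width $g(\ell_q(i))$ for $i \in C(z)$ fit in $[0,1]$; one then checks these choices can be made consistently across all $z$ (e.g.\ using that the shadows form a laminar-like interval structure after slicing, or more simply by a sweep argument that never needs to move an item once placed). The new positions give a valid packing of $\{g(i)\}$.

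The main obstacle I anticipate is the consistency step: showing that the per-$z$ feasibility of the stretched $q\Th$-intervals can be realized by a single global placement of each item $g(i)$. I expect the paper handles this by slicing items perpendicular to the $q\Th$ axis into infinitesimally thin columns, applying the $1$D DFF bound to each column, and then observing that these column-packings reassemble because within any fixed shadow the vertical order of items can be taken consistent (e.g.\ bottom-aligned in increasing order of some fixed index). Once $\sum_{i \in C(z)} g(\ell_q(i)) \le 1$ holds for all $z$, a standard strip-packing / measure argument (or explicit induction on $|I|$, peeling off one item at a time) finishes it. Applying Lemma~\ref{thm:dff-pack-1} successively for $q = 1, 2, \ldots, d$ with $g = g_q$ then yields Theorem~\ref{thm:dff-pack}.
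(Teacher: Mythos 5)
Your cross-sectional reformulation has a genuine gap exactly where you flag it. The condition $\sum_{i \in C(z)} g(\ell_q(i)) \le 1$ for all shadow points $z$ is necessary but \emph{not} sufficient for a non-fractional placement of $\{g(i): i \in I\}$: it only yields a packing after slicing along axis $q$, and for box footprints in $[0,1]^{d-1}$ the intersection graph need not be perfect (for $d \ge 3$ the chromatic number of axis-parallel boxes can exceed the clique number), so per-column load at most $1$ does not by itself produce a valid stacking. ``Bottom-align in increasing order of some fixed index'' works only for the right index, which you never identify, and the proof that any particular index works cannot go through cross-sections at all --- the sufficiency claim you would need is false as a stand-alone geometric statement.

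The paper avoids cross-sections entirely and works with the combinatorics of the \emph{original} packing. Assume without loss of generality $q = d$, and write $v_d(i)$ for the position of $i$ along axis $d$. Define $i_1 \prec i_2$ iff $v_d(i_1) < v_d(i_2)$ and $\proj(i_1)$, $\proj(i_2)$ overlap; this is acyclic. Along any $\prec$-chain $[i_0 \succ i_1 \succ \cdots \succ i_{m-1}]$ consecutive items are literally stacked in the original packing, so $v_d(i_{j-1}) \ge v_d(i_j) + \ell_d(i_j)$, and these inequalities telescope to $\sum_j \ell_d(i_j) \le 1$; the DFF property then gives $\sum_j g(\ell_d(i_j)) \le 1$. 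Setting $u(i)$ to be the maximum of $\sum_j g(\ell_d(i_j))$ over chains headed at $i$ (a longest-path DP over the DAG), the paper places $g(i)$ at $v_d'(i) \defeq u(i) - g(\ell_d(i))$. Non-overlap holds because $i_1 \prec i_2$ forces $v_d'(i_2) \ge u(i_1)$ by the DP recurrence, and containment in $[0,1]$ is exactly the chain bound. So your ``some fixed index'' should be $v_d$, and the argument you need is this longest-augmented-chain construction, not a slicing or measure argument; as written, the crucial consistency step is missing.
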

Bansal, Caprara and Sviridenko~\cite{rna} give a brief proof sketch for $d=2$,
based on which we provide a full proof below.
\begin{proof}
Any $d$D cuboid can be represented as the Cartesian product of
$d$ closed intervals on the real line. Let the bin be $[0, 1]^d$.
Any item $i \in I$ can be written as $\prod_{j=1}^d [v_j(i), v_j(i) + \ell_j(i)]$.
Here $v_j(i)$ is called the \emph{position} of item $i$ in dimension $j$.
Since each item $i$ lies completely inside the bin, $0 \le v_j(i) < v_j(i) + \ell_j(i) \le 1$.
Two cuboids $A$ and $B$ are said to overlap if their intersection has positive volume.
Since $I$ is a valid packing, no two items overlap.

Assume \wLoG{} that $q = d$.
Let $\proj(i)$ be the projection of item $i$ onto the hyperplane
perpendicular to the $d\Th$ dimension.
This hyperplane can be thought of as the \emph{base} of the bin.

We will now show that for each item $i$, we can change $\ell_d(i)$ to $g(\ell_d(i))$
and change $v_d(i)$ so that the items continue to fit in the bin.
But to define what the new value of $v_d(i)$ would be,
we need to first introduce some notation.

For two items $i_1$ and $i_2$, we say that $i_1 \prec i_2$
($i_1$ is a predecessor of $i_2$)
iff $v_d(i_1) < v_d(i_2)$ and $\proj(i_1)$ overlaps $\proj(i_2)$.
Call a sequence $[i_0, i_1, \ldots, i_{m-1}]$ of items a \emph{chain}
iff $i_{m-1} \prec i_{m-2} \prec \ldots \prec i_0$. $i_0$ is called the head of this chain.
The \emph{augmented height} of a chain $S$ is defined to be
$\sum_{i \in S} g(\ell_d(i))$.
For each item $i$, we wish to find the chain headed at $i$
with the maximum augmented height.

For an item $i$, define
\[ \level(i) \defeq \begin{cases} 0 & \textrm{if } i \textrm{ has no predecessors}
\\ {\displaystyle 1 + \max_{i' \prec i} \level(i')} & \textrm{otherwise} \end{cases}. \]
Since $\prec$ is anti-symmetric, $\level$ is well-defined.
Define $\pi$ and $u$ as
\begin{align*}
u(i) &\defeq g(\ell_d(i)) + \begin{cases} 0 & \textrm{if } \level(i) = 0
    \\ u(\pi(i)) & \textrm{otherwise}\end{cases}
& \pi(i) &\defeq \begin{cases} \Null & \textrm{if } \level(i) = 0
\\ {\displaystyle \argmax_{i' \prec i} u(i')} & \textrm{otherwise} \end{cases}.
\end{align*}
In the definition of $\pi$, ties can be broken arbitrarily for $\argmax$.
$i' \prec i$ implies $\level(i') < \level(i)$, so $\level(\pi(i)) < \level(i)$.
This ensures that the definitions of $\pi$ and $u$ are not mutually circular.

It can be proven, by inducting on $\level(i)$, that
$\Pi(i) \defeq [i, \pi(i), \pi(\pi(i)), \ldots]$ is the chain headed at $i$ with
the maximum augmented height, and that the augmented height of $\Pi(i)$ is $u(i)$.

\begin{transformation}
\label{trn:ld-and-vd}
For each item $i \in I$, change $\ell_d(i)$ to $g(\ell_d(i))$
and change $v_d(i)$ to $v_d'(i) \defeq u(i) - g(\ell_d(i))$.
\end{transformation}

We need to prove that \cref{trn:ld-and-vd} produces a valid packing,
i.e. items don't overlap and all items lie completely inside the bin $[0, 1]^d$.

Let $i_1$ and $i_2$ be any two items. We will prove that they don't overlap in the new packing.
If $\proj(i_1)$ and $\proj(i_2)$ don't overlap, then $i_1$ and $i_2$ don't overlap
and we are done, so assume $\proj(i_1)$ and $\proj(i_2)$ overlap.
Assume \wLoG{} that $i_1 \prec i_2$. Then $\level(i_2) \ge 1$ and
\[ v_d'(i_2) = u(i_2) - g(\ell_d(i_2)) = \max_{i' \prec i_2} u(i')
\ge u(i_1) = v_d'(i_1) + g(\ell_d(i_1)). \]
Therefore, $i_1$ and $i_2$ don't overlap in the new packing.

After \cref{trn:ld-and-vd}, item $i$ lies completely inside the bin
iff $v_d'(i) + g(\ell_d(i)) = u(i) \le 1$. Let $i_0 \defeq i$ and
$\Pi(i) = [i_0, i_1, i_2, \ldots, i_{m-1}]$.
Then $u(i) = \sum_{j=0}^{m-1} g(\ell_d(i_j))$ and for all $j \in [m-1], i_j \prec i_{j-1}$.
Since $i_j$ and $i_{j-1}$ don't overlap in the original packing,
but $\proj(i_j)$ and $\proj(i_{j-1})$ overlap,
we get $v_d(i_{j-1}) \ge v_d(i_j) + \ell_d(i_j)$. Therefore,
\begin{align*}
\sum_{j=0}^{m-1} \ell_d(i_j)
&\le \ell_d(i) + \sum_{j=1}^{m-1} (v_d(i_{j-1}) - v_d(i_j))
\tag{since $v_d(i_{j-1}) \ge v_d(i_j) + \ell_d(i_j)$}
\\ &= \ell_d(i) + v_d(i) - v_d(i_{m-1})
\le 1.
\tag{$\because$ in the original packing, $i$ lies in the bin}
\end{align*}
Since $g$ is a \dff{} and $\sum_{j=0}^{m-1} \ell_d(i_j) \le 1$,
we get $u(i) = \sum_{j=0}^{m-1} g(\ell_d(i_j)) \le 1$.
Therefore, the packing obtained by \cref{trn:ld-and-vd} is valid.
So $\{g(i): i \in I\}$ can be packed into a bin.
\end{proof}

\rthmDffPack*
\begin{proof}[Proof]
Apply \cref{thm:dff-pack-1} multiple times, with $q$ ranging from $1$ to $d$.
\end{proof}

\end{document}